\renewcommand{\eqref}[1]{%
  \hyperref[#1]{\textup{\tagform@{\ref*{#1}}}}%
}
\newcommand{\customlabel}[2]{%
   \protected@write \@auxout {}{\string \newlabel {#1}{{#2}{\thepage}{#2}{#1}{}} }%
   \hypertarget{#1}{}
}
\title[RKHS methods for modelling the discount curve]{Reproducing kernel Hilbert space methods for modelling the discount curve}
\author{Andreas Celary$^{\dag}$, Paul Kr\"uhner$^{\dag}$, Zehra Eksi$^{\dag}$}
\address{$^\dag$Institute for Statistics and Mathematics, WU-University of Economics and Business} 
\email{acelary@wu.ac.at, peisenbe@wu.ac.at, zehra.eksi-altay@wu.ac.at}
\newtheorem{theorem}{Theorem}[section]
\newtheorem{definition}[theorem]{Definition}
\newtheorem{lemma}[theorem]{Lemma}
\newtheorem{proposition}[theorem]{Proposition}
\newtheorem{corollary}[theorem]{Corollary}
\newtheorem{remark}[theorem]{Remark}
\newtheorem{example}[theorem]{Example}
\Crefname{theorem}{Theorem}{Theorems}
\Crefname{theoremenumi}{Theorem}{Theorems}
    \setlist[enumerate,1]{
        label={\textit{\roman*)}},
        ref={\thetheorem.\roman*)}
    }%
\Crefname{lemma}{Lemma}{Lemmas}
\Crefname{lemmaenumi}{Lemma}{Lemmas}
    \setlist[enumerate,1]{
        label={\textit{\roman*)}},
        ref={\thelemma.\roman*)}
    }%
\Crefname{definition}{Definition}{Definitions}
\Crefname{definitionenumi}{Definition}{Definitions}
    \setlist[enumerate,1]{
        label={\textit{\roman*)}},
        ref={\thedefinition.\roman*)}
    }%
\Crefname{remark}{Remark}{Remarks}
\Crefname{remarkenumi}{Remark}{Remarks}
    \setlist[enumerate,1]{
        label={\textit{\roman*)}},
        ref={\theremark.\roman*)}
    }%
\newcommand{\E}{\mathrm{E}}
\renewcommand{\>}{\rangle}
\newcommand{\R}{\mathbb{R}}
\newcommand{\Q}{\mathbb{Q}}
\renewcommand{\P}{\mathbb{P}}
\DeclareMathOperator{\Sg}{S}  
\DeclareMathOperator{\Tr}{Tr}  
\DeclarePairedDelimiterX\set[1]\lbrace\rbrace{\def\given{\;\delimsize\vert\;}#1} 
\begin{document}

\begin{abstract}
We consider the theory of bond discounts, defined as the difference between the terminal payoff of the contract and its current price. Working in the setting of finite-dimensional realizations in the HJM framework, under suitable notions of no-arbitrage, the admissible discount curves take the form of polynomial, exponential functions. We introduce reproducing kernels that are admissible under no-arbitrage as a tractable regression basis for the estimation problem in calibrating the model to market data. We provide a thorough numerical analysis using real-world treasury data.

\smallskip
\noindent \textbf{Keywords:} HJM, bond discount, reproducing kernels, term structure models, finite-dimensional realizations, kernel regression

\end{abstract}
\maketitle 

\section{Introduction}
In modelling the term structure of interest rates, it is standard practice to focus on either the instantaneous short rate or forward rates as the main building blocks for capturing the dynamics of interest rate evolution. Then, the no-arbitrage assumption results in the well-known relation between the zero-coupon bond prices and underlying rates:  bond prices can be expressed as exponential functions of the underlying interest rates. This paper focuses on the bond discount instead of working with instantaneous short or forward rates. The bond discount, defined as the difference between a bond’s terminal payoff and its current price, is our main object of study.

Our approach builds on the work presented in \cite{filipovic_discount}, providing a comprehensive bond discount theory.  Accordingly, starting from the Heath-Jarrow-Morton (HJM) framework (see \cite{hjm}), we develop a stochastic curve model for the bond discounts. By applying the Musiela parametrization (see \cite{musiela}), we can express the bond discount in terms of the solution to an infinite-dimensional stochastic differential equation. This equation is formulated within an appropriate Hilbert space of curves, providing a rigorous mathematical structure for the model (see, e.g., \cite{filipovic_phd}). A key condition for any viable financial model is the absence of arbitrage opportunities. In this context, we adopt the notion of no asymptotic free lunch with vanishing risk (NAFLVR) (see \cite{cuchiero_naflvr}), which requires that, in a large financial market, the semimartingales driving the market dynamics must be local martingales under the pricing measure. \cite{filipovic_discount} derives necessary and sufficient drift conditions for ensuring that NAFLVR holds within the discount framework. These conditions serve as a guiding principle for the development of our model.

To simplify the structure of the bond discounts, we assume a finite-dimensional affine geometry in the curve space. Under this assumption, the bond discount takes the form of the well-known quasi-exponentials (see, e.g., \cite{bjoerk}), leading to a tractable model well-suited for real-world applications. Similar to the methodology outlined in \cite{filipovic3}, we propose a statistical procedure for calibrating discount models using reproducing kernel Hilbert space (RKHS) techniques (for a standard material on RKHS, we refer to \cite{manton2015primer}). Specifically, the optimization procedure over the space of admissible curves is reduced to a finite-dimensional kernel regression problem, with a ridge regularization term included to ensure stability and robustness. We refer to the methodology proposed in \cite{filipovic_kr, filipovic3} for a slightly different alternative approach using kernel regression.

The problem of interpolating bond prices to derive an accurate term structure of interest rates is well-known in finance. Static interpolation schemes, which use a parametric family of curves, are frequently employed by banks and other financial institutions for inference. A prominent example of this approach is the Nelson-Siegel method (\cite{nelson1987parsimonious}), which parametrises the yield curve with a functional form designed to capture typical yield curve shapes. However, while static schemes like Nelson-Siegel are computationally efficient, they typically do not ensure dynamic consistency with the no-arbitrage condition (see, e.g., \cite{filipovic_phd}), making them unsuitable for more complex, arbitrage-free modelling.

Dynamic interpolation methods are more complex but necessary for arbitrage-free models in a dynamic setting. For instance, in \cite{jarrow_wu}, a cubic spline-based interpolation scheme is employed. The authors include additional functions to the set of cubic splines spanning the linear space containing the term structure to attain a dynamically consistent function space. More recent data-driven approaches, such as that proposed by \cite{autoencoder}, use autoencoders to interpolate term structures, aiming to stay close to a time-shift-invariant arbitrage-free manifold. See the references in \cite{autoencoder} for other data-driven methods.

In this paper, we make several significant contributions to term structure modelling in the bond discount framework.
We introduce a tractable model class based on a finite-dimensional affine specification, allowing for effective modelling of the discount curve while maintaining a practical structure suitable for real-world applications. Recognizing that the standard kernels used in \cite{filipovic3} do not ensure markets fulfilling no arbitrage for finite-dimensional affine models, we formulate an appropriate notion of so-called \textit{fully consistent kernels} that generate markets where contracts fulfill an HJM drift condition consistent with NAFLVR. After solving the mathematical reconstruction problem for fully consistent kernels, we characterize a rich family of RKHS containing discount curves that fulfill the no-arbitrage condition.

Building on this theoretical foundation, we validate our methodology empirically on real-world bond data, performing a day-by-day fitting procedure using our fully consistent kernels to capture the discount curve across the dataset.
As the next step, we attempt to calibrate a stochastic model consistent with the developed theory. This yields a tractable two-step procedure resulting in a fully parameterized model suitable for market predictions.  We present a detailed analysis of our numerical results to demonstrate the features of this approach.

The paper is structured as follows: In Section \Cref{sec:prelim} we provide the theoretical background for the discount framework. Section \Cref{sec:kernels} contains our definition of fully consistent kernels, which generate function spaces rich enough to contain non-trivial models fulfilling the NAFLVR condition. The section concludes with the formal statement of our main result in the form of fully consistent kernels for the discount framework. In Section \Cref{sec:calibration}, we calibrate our proposed model using US treasury (coupon) bond market data and perform a statistical analysis and interpretation of our results. We conclude with Section \Cref{sec:conclusion}. A theoretical background on Reproducing kernel Hilbert spaces (RKHS) is provided in \Cref{sec:rkhs_theory} and technical tools we employ in the paper in Appendix \Cref{app:b}. The more lengthy proofs of our main results are collected in Appendix \Cref{app:c}.
\section{Preliminaries}\label{sec:prelim}

\subsection{Notation}

Throughout this exposition, we will work on a stochastic basis $(\Omega ,\mathcal{F},(\mathcal{F}_t)_{t\geq 0},\Q)$ given by a filtered probability space with a right-continuous, complete filtration $(\mathcal{F}_t)_{t\geq 0}$ and measure $\Q$, which will play the role of the risk-neutral measure. Let $\P$ denote a measure such that $\P\ll\Q$, that is, $\P$ is absolutely continuous with respect to $\Q$, then we will denote by $\frac{d\P}{d\Q}$ the Radon-Nikodym derivative of $\P$ with respect to $\Q$.

Let $\R$ denote the set of real numbers. We will denote by $\R_+$ the subset of non-negative real numbers and for $d\in\mathbb N$, $\R^d$ the $d$-dimensional Euclidean space. Given a vector $u\in\R^d$, we will write $u^{\top}=(u_1,\dots ,u_d)$ to denote its components, where $u^{\top}$ is the vector transpose of $u$. For two elements $v,w\in\R^d$, the Euclidean scalar product will be written as $\langle v,w\rangle = v^{\top}w$. We will make use of the so-called extended vector notation: given $d\in\mathbb N$, we will call $\R^{d+1}$ the extended vector space and start indexing the first coordinate with $0$, that is, for $u\in\R^{d+1}$, we will write $u=(u_0,\dots ,u_d)^{\top}$. Furthermore, given a vector $v\in\R^d$ and $v_0\in\R$, we will write $(v_0,v)^{\top}\in\R^{d+1}$. We will use the same notational conventions for matrices. The identity matrix of dimension $d$ will be denoted by $\mathbbm 1_d\in\R^{d\times d}$. Let $A\subseteq\R^d$ be a set, then $\text{aff}(A)\subseteq\R^d$ denotes the affine hull of $A$.

Let $f:\R^n\rightarrow\R^m$, $y\mapsto f(y)$ be a smooth function. We will denote by $\partial _{y_k}f$ the partial derivative of $f$ with respect to the coordinate $y_k$ for $k=1,...,n$. We will use the notation $D_yf=(\partial _{y_i}f_j)_{i=1\dots,n, j=1,\dots,m}$ for the Jacobian matrix of $f$ with respect to the variable $y$ and for $k=1,...,m$, $D^2_yf_k=(\partial _{y_i}\partial _{y_j}f_k)_{i,j=1,\dots, n}$ will denote the Hessian matrix and $\nabla _yf_k=(\partial _{y_1}f_k,\dots ,\partial _{y_n}f_k)^{\top}$ will denote the gradient of the $k$-th component of $f$ with respect to the variable $y$.

Let $(\mathcal{H},\lVert\cdot\rVert _{\mathcal H})$ be a separable Hilbert space of functions from $\mathbb R_+$ to $\mathbb R$ fulfilling
\begin{itemize}
	\item[\textbf{(H1)}] $\delta_0:\mathcal{H}\rightarrow\mathbb R, h\mapsto h(0)$ is continuous linear.\customlabel{H1}{(H1)}
	\item[\textbf{(H2)}] $\mathcal \Sg_h:\mathcal{H}\rightarrow \mathcal{H}, f\mapsto (x\mapsto f(x+h))$ defines a $c_0$-semigroup $\mathcal (\Sg_h)_{h\geq 0}$ on $\mathcal{H}$ whose generator will be denoted by $\partial_x$, cf.\ \cite[pp.\ 6--8]{EK}.\customlabel{H2}{(H2)}
\end{itemize}
The inner product on $\mathcal{H}$ will be denoted $\<h_1,h_2\>_{\mathcal H}$ for $h_1,h_2\in \mathcal{H}$. We will make use of a Hilbert space $\mathcal H$ satisfying Assumptions \ref{H1} and \ref{H2} throughout the paper without explicitly referencing them. An example of a Hilbert space with those properties are the forward curve spaces $\mathcal H_w$ introduced in \cite{filipovic_consistency}.
\subsection{Model description}
In the following,  we will work with the zero-coupon \emph{bond discount} curve where the bond discount for maturity $T$ at time $t$ is defined as the difference between the corresponding bond's face value, 1 and its present value at time $t$. Let $P(t,T)$ denote the price at time $t$ of a zero-coupon bond with a maturity date $T$. We denote the corresponding bond discount by
\begin{equation}\label{eq:4_1}
	H_t(T-t)=1-P(t,T).
\end{equation}
For our purposes, we will model $H$ as a diffusion process $H:\R_+\times\Omega\rightarrow\mathcal H$, $H:(t,\omega)\mapsto H_t(\omega)$, that is, $H$ takes values in the Hilbert space $\mathcal H$. Let $W$ be a d-dimensional Brownian motion in  $(\Omega ,\mathcal{F},(\mathcal{F}_t)_{t\geq 0},\Q)$. We assume $H_t$ satisfies
\begin{equation}\label{eq:4_2}
	H_t=\Sg_tH_0+\int _0^t\Sg_{t-s}\alpha _sds+\int _0^t\Sg_{t-s}\Sigma _sdW_s,
\end{equation}
for an appropriate drift coefficient $\alpha :\R_+\times\Omega\rightarrow\mathcal H$ and diffusion coefficient $\Sigma :\R_+\times\Omega\rightarrow L(\R^d,\mathcal H)$. Here, once again $\Sg_h:\mathcal H\rightarrow \mathcal H$, $f\mapsto f(\cdot +h)$ denotes the shift operator of the semigroup of left shifts whose generator will be denoted by $\partial _x$. This implies $H_t$ is the mild solution to the stochastic differential equation (see, e.g., \cite{zabczyk})
\begin{equation}\label{eq:4_sde}
	dH_t=\left(\partial _xH_t+\alpha _t\right)dt+\Sigma _tdW_t.
\end{equation}
\begin{remark}
    Consider now the forward curve spaces $\mathcal H_w$ defined in \cite[Definition 5.1.1]{filipovic_consistency}. Since $\mathcal H_w$ is a normed space, we have by \ref{H1} that the evaluation functional $\delta _x$ is bounded, hence $\mathcal H_w$ is a RKHS. This fact is leveraged in e.g. \cite{filipovic3, filipovic_kr} where the authors derive a reproducing kernel for $\mathcal H_w$ and use its properties for an efficient interpolation scheme of the discount curve.
\end{remark}
\subsection{HJM-drift condition}
 This section includes the results on the sufficient conditions for NAFLVR. The sufficiency conditions were first proven in \cite{filipovic_discount}, where they are formulated in terms of the evolution of a random field. We adopt the setting of the Musiela parametrization (\cite{musiela}) and state the corresponding results here for convenience and completeness. 

Assume the limit $r_t:=\lim _{T\rightarrow t}-\partial _T\log (P(t,T))$ exists. In that case, $r_t$ is the short rate, and it follows immediately from \Cref{eq:4_1} that $r_t=H_t'(0)$. The discounted price of the zero-coupon bond is given by 
\begin{equation}\label{eq:4_3}
	\tilde{P}(t,T)=e^{-\int _0^tr_sds}P(t,T).
\end{equation} 
We now proceed with deriving the HJM-drift condition.
\begin{proposition}\label{prop:4_1}
	Assume that the process $H$ as defined in \eqref{eq:4_2} is additionally a strong solution to the SDE \eqref{eq:4_sde}. Let $\beta _t(x):=\delta _x(\partial _xH_t+\alpha _t)$ denote its drift coefficient then the processes $(\tilde{P}(t,T))_{t\in [0,T]}$ as defined in \Cref{eq:4_3} are local martingales for all $T\in\R_+$ if and only if the drift $\beta _t(x)$ of $H_t(x)$ fulfills for all $x,t>0$, $\Q$-a.s.
	\begin{equation}\label{eq:4_5}
		\beta _t(x)=H'_t(x)-r_t+r_tH_t(x).
	\end{equation}
\end{proposition}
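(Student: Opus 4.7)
The plan is to compute the semimartingale decomposition of $\tilde{P}(t,T)$ explicitly and then read off the drift condition. The core step is showing that along the Musiela ``diagonal'' $x = T-t$, the semigroup shift in the SDE \eqref{eq:4_sde} cancels, so that the dynamics of the point-evaluation $H_t(T-t)$ simplify considerably.

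First, since $H$ is by assumption a strong solution to \eqref{eq:4_sde}, I would start from the mild formulation \eqref{eq:4_2} and apply $\delta_{T-t}$ (using \ref{H1} and the shift semigroup property \ref{H2}): because $(\Sg_{t-s}f)(T-t)=f(T-s)$ for all $s\le t$, a stochastic Fubini argument gives
\begin{equation*}
H_t(T-t) = H_0(T) + \int_0^t \alpha_s(T-s)\,ds + \int_0^t (\delta_{T-s}\Sigma_s)\,dW_s,
\end{equation*}
which is an $\R$-valued Itô process with drift $\alpha_t(T-t)$. Note that the $\partial_x H_t$ term present in $\beta_t$ has been ``absorbed'' by the moving argument $T-t$; the relation $\beta_t(x) = H'_t(x) + \alpha_t(x)$ will re-introduce it at the end.

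Next I would combine this with $P(t,T) = 1 - H_t(T-t)$ from \eqref{eq:4_1} and apply Itô's product rule for the bounded-variation factor $D_t := e^{-\int_0^t r_s\,ds}$ to obtain
\begin{equation*}
d\tilde{P}(t,T) = D_t\bigl[-r_t\bigl(1-H_t(T-t)\bigr) - \alpha_t(T-t)\bigr]dt - D_t\,(\delta_{T-t}\Sigma_t)\,dW_t .
\end{equation*}
For this to be a local martingale for every $T\in\R_+$, the drift must vanish $\Q$-a.s.\ for all $t\in[0,T]$, i.e.
\begin{equation*}
\alpha_t(T-t) = -r_t + r_t H_t(T-t).
\end{equation*}
Since $T$ ranges over $\R_+$, letting $x:=T-t$ sweep over $\R_+$ yields $\alpha_t(x) = -r_t + r_t H_t(x)$ for all $x,t>0$, $\Q$-a.s. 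Adding $H'_t(x)=\partial_x H_t(x)$ on both sides gives \eqref{eq:4_5}. Conversely, plugging \eqref{eq:4_5} into the drift above shows it vanishes identically, so $\tilde{P}(t,T)$ is a local martingale.

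The main obstacle I anticipate is making the ``evaluate-at-$T-t$'' step rigorous: one needs to justify pushing the evaluation functional $\delta_{T-t}$ past the Bochner and stochastic integrals in the mild formulation \eqref{eq:4_2}, and to ensure measurability and integrability of the resulting real-valued processes. The continuity of $\delta_0$ in \ref{H1}, combined with the $c_0$-semigroup property \ref{H2} (which gives $\delta_{T-t}=\delta_0\circ\Sg_{T-t}$ as a bounded linear functional on $\mathcal H$), makes this interchange available via standard stochastic Fubini results; the strong-solution hypothesis lets one reach the same conclusion by directly applying Itô's formula to the time-dependent functional $\varphi_t(h):=h(T-t)$, whose time derivative $-\partial_x h$ precisely cancels the $\partial_x H_t$ term in $\beta_t$.
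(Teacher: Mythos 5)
Your proposal is correct and follows essentially the same route as the paper's proof: evaluate the mild formulation \eqref{eq:4_2} along the diagonal $x=T-t$ to obtain the scalar It\^o dynamics of $H_t(T-t)$, apply the product rule with the discount factor $e^{-\int_0^t r_s\,ds}$, set the drift to zero, and translate $\alpha_t(x)=r_tH_t(x)-r_t$ into \eqref{eq:4_5} via $\beta_t(x)=H'_t(x)+\alpha_t(x)$. Your added remarks on justifying the interchange of $\delta_{T-t}$ with the Bochner and stochastic integrals address a technical point the paper passes over silently, but the argument itself is the same.
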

\begin{proof}
	See \Cref{app:b}.
\end{proof}
\subsection{Pricing under the forward measure}

One way to facilitate the methodology developed in the discount setting is to use it to price interest rate derivatives on the market. To do this, one may use \emph{forward pricing measures}, which are important tools for simplifying the pricing of interest rate derivatives, such as caps and floors, swaptions, and bond options, to name a few. Forward measures are closely tied to the concept of numeraires, which are benchmark assets used to express prices in relative terms.

The $T$ forward pricing measure is a probability measure under which the price of a zero-coupon bond maturing at time $T$ is deterministic. Moreover, all discounted asset prices relative to the $T$ bond are (local) martingales under that measure. The measure transformation from the original risk-neutral measure (e.g., the  $\Q$ measure) to the forward measure is achieved via the Radon-Nikodym derivative. In the following, we make this notion precise and provide the form of the density process for the change of measure in terms of the dynamics of the discount process.

Let $B_t:=e^{\int _0^tr_sds}$ denote the risk-free bank account with initial condition $B_0=1$. The forward measure $\mathbb P^T$ is defined as the measure equivalent to the risk-neutral measure $\Q$ with the Radon-Nikodym derivative (cf. \cite[Definition 9.6.2.]{musiela_rutkowski})
\begin{equation*}
	\frac{d\mathbb P^T}{d\Q}=\frac{1}{B_TP(0,T)}.
\end{equation*}
\begin{proposition}
	Let $\Q$ denote the risk-neutral measure and $\mathbb P ^T$ be the forward measure. Then the process
	\begin{equation*}
		\eta _t:=\left.\frac{d\mathbb P ^T}{d\Q}\right\vert _{\mathcal F_t}
	\end{equation*}
	fulfills
	\begin{equation*}
		\eta _t=\mathcal E_t\left(\int _0^{\cdot}\frac{\Sigma _s(T-s)}{1-H_t(T-s)}dW_s\right).
	\end{equation*}
    where $\mathcal E_t(X)$ denotes the stochastic exponential of $X$ (cf. \cite[Section I.4f.]{jacod}).
\end{proposition}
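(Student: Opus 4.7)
The plan is to identify $\eta_t$ explicitly as a normalised discounted bond price and then derive its stochastic differential via Itô's formula, using the HJM drift condition to see that the drift cancels. As a first step I would combine Bayes' rule with the definition of the forward density and the martingale property from \Cref{prop:4_1}. Assuming $\tilde P(\cdot,T)$ is a true martingale (not merely local) — which is what is needed for the forward measure to be genuinely a probability measure — I compute
\begin{equation*}
    \eta_t = \mathbb E^{\Q}\!\left[\tfrac{1}{B_T P(0,T)}\,\Big|\,\mathcal F_t\right] = \frac{\mathbb E^{\Q}[\tilde P(T,T)\mid \mathcal F_t]}{P(0,T)} = \frac{\tilde P(t,T)}{P(0,T)}.
\end{equation*}
Thus proving the formula reduces to computing the dynamics of $\tilde P(t,T)$ and matching them with a Doléans--Dade exponential.

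Next I would compute $d\tilde P(t,T)$ in two stages. Because $P(t,T) = 1 - H_t(T-t)$ by \eqref{eq:4_1}, I would apply Itô's formula to the linear functional $F(t,h):=h(T-t)$ on $\R_+ \times \mathcal H$: the time-dependence of the evaluation point contributes $-H_t'(T-t)dt$, while the Hilbert-space SDE \eqref{eq:4_sde} (under the standing strong-solution assumption of \Cref{prop:4_1}) gives $(dH_t)(T-t)=(\partial_xH_t(T-t)+\alpha_t(T-t))\,dt+\Sigma_t(T-t)\,dW_t$. The two $H_t'(T-t)$ contributions cancel, leaving $dH_t(T-t)=\alpha_t(T-t)\,dt+\Sigma_t(T-t)\,dW_t$. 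Substituting the HJM drift condition \eqref{eq:4_5}, $\alpha_t(x)=\beta_t(x)-H_t'(x)=-r_t(1-H_t(x))$, I obtain
\begin{equation*}
    dH_t(T-t) = -r_t P(t,T)\,dt + \Sigma_t(T-t)\,dW_t,
\end{equation*}
and therefore $dP(t,T)=r_tP(t,T)\,dt-\Sigma_t(T-t)\,dW_t$. Applying the finite-variation product rule to $\tilde P(t,T)=B_t^{-1}P(t,T)$ cancels the $\pm r_t P(t,T)\,dt$ terms and yields
\begin{equation*}
    d\tilde P(t,T) = -\tilde P(t,T)\,\frac{\Sigma_t(T-t)}{1-H_t(T-t)}\,dW_t.
\end{equation*}

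Dividing by the constant $P(0,T)$ gives $d\eta_t=-\eta_t\,\tfrac{\Sigma_t(T-t)}{1-H_t(T-t)}\,dW_t$ with $\eta_0=1$, which is exactly the SDE defining the stochastic exponential in the claim (modulo the sign convention in the integrand of $\mathcal E_t$). The main technical obstacle is the Itô/chain-rule step for $t\mapsto H_t(T-t)$: since $H$ is an $\mathcal H$-valued process and the evaluation point moves with $t$, one needs enough regularity to interchange the Hilbert-space SDE with point evaluation and to differentiate along the characteristic $x=T-t$. This is precisely what is afforded by the additional strong-solution hypothesis of \Cref{prop:4_1} together with the continuity of $\delta_x$ guaranteed by \ref{H1}; a clean way to justify it in the paper would be to invoke the deterministic chain rule for $c_0$-semigroups (applying $\Sg_{T-t}$ to $H_t$ and evaluating at $0$) before adding the stochastic noise, so that the $\partial_x$-generator term produced by the semigroup is seen to be cancelled by the time-derivative of the shift.
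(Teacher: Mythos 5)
Your proposal is correct and follows essentially the same route as the paper: both identify $\eta_t=\tilde P(t,T)/P(0,T)$ via conditional expectation, compute the dynamics of the discounted bond price, and invoke the drift condition of \Cref{prop:4_1} to kill the drift, the only cosmetic difference being that the paper first writes $\eta=\mathcal E(\lambda)$ by Girsanov and solves $d\lambda=\eta^{-1}d\eta$ whereas you match the Dol\'eans--Dade SDE directly. Your remark about the sign is well taken: since $P(t,T)=1-H_t(T-t)$ gives $dP=-dH$, the integrand should carry a minus sign, and the paper's proof silently drops it when writing $dP(t,T)=\alpha_t(T-t)\,dt+\Sigma_t(T-t)\,dW_t$; your version, together with the explicit true-martingale caveat needed for the Bayes step, is the more careful one.
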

\begin{proof}
	By Girsanov's Theorem, we have $\eta _t=\mathcal E_t(\lambda)$ for some adapted process $\lambda$. We have, by definition of the stochastic exponential,
	\begin{equation*}
		d\eta _t=\eta _td\lambda _t
	\end{equation*}
	and therefore
	\begin{equation*}
		d\lambda _t=\frac{1}{\eta _t}d\eta _t.
	\end{equation*}
	On the other hand, by the definition of the forward measure, we have
	\begin{equation*}
		\eta _t=\E ^{\Q}\left[\left.\frac{1}{B_TP(0,T)}\right\vert\mathcal F_t\right] =\E ^{\Q}\left[\left.\frac{P(T,T)}{B_TP(0,T)}\right\vert\mathcal F_t\right]=\frac{P(t,T)}{B_tP(0,T)}.
	\end{equation*}
	Thus, we obtain
	\begin{equation*}
		\begin{aligned}
                d\lambda _t=&\frac{B_tP(0,T)}{P(t,T)}d\left(\frac{P(t,T)}{B_tP(0,T)}\right) =\frac{1}{P(t,T)}\left( dP(t,T)-r_tP(t,T)dt\right) \\=&\frac{1}{1-H(T-t)}\left( \alpha _t(T-t)dt+\Sigma _t(T_t)dW_t -r_t(1-H_t(T-t))dt\right).
		\end{aligned}
	\end{equation*}
	Since the discounted process $B_t^{-1}P(t,T)$ is a local martingale under $\Q$, we may use the drift condition of \Cref{prop:4_1} to obtain
	\begin{equation*}
		d\lambda _t=\frac{\Sigma _t(T-t)}{1-H_t(T-t)}dW_t
	\end{equation*}
	and the assertion follows from the definition of the stochastic exponential.
\end{proof}

\subsection{Linearity assumption}
This section assumes a more tractable structure for the discount process. In particular, we will require the solutions to lie in a finite-dimensional affine subspace. This will force the curve itself to belong to the exponential-affine family and ensure that the (finite-dimensional) stochastic driving process fulfills a slightly modified quadratic drift condition. We make the linearity assumption precise.
\begin{itemize}
    \item[\textbf{(LA)}] \customlabel{LA}{(LA)} 
	We assume that the discount process $H$ satisfies  $H_t(x)=g_0(x)+\sum _{i=1}^dg_i(x)f_i(Y_t)$, where 
	\begin{enumerate}
		\item $g_0,\dots,g_d\in C^1(\R_+,\R)$ and $f_1,\dots,f_d\in C^2(\R ^k,\R )$, satisfy 
        \begin{equation*}
            \text{aff}(\lbrace\! (g_1(x),\dots,g_d(x))^{\top},\, x \geq 0\rbrace)=\R ^d 
        \end{equation*}
        and
        \begin{equation*}
            \text{aff}(\lbrace f_1(y),\dots,f_d(y), y\in\R ^d\rbrace )=\R ^d.
        \end{equation*}
		\item $Y$ is a $d$-dimensional diffusion process with dynamics 
        \begin{equation*}
            Y_t=Y_0+\int _0^tb_sds+\int _0^t\sigma _sdW_s,
        \end{equation*}
        where $Y_0$ is an $\mathcal{F}_0$-measurable random variable in $\R ^d$, $b:\R_+\times\Omega\rightarrow\R^k$ is a progressively measurable $d$-dimensional process with almost surely integrable paths and $\sigma:\R_+\times\Omega\rightarrow\R^{d\times k}$ is a progressively measurable $d$-dimensional matrix process with almost surely integrable paths. 
	\end{enumerate}
	Henceforth (with a slight abuse of notation), we shall set $g:=(g_0,\dots ,g_d)^{\top}$, as well as $f:=(1,f_1,\dots ,f_d)^{\top}$ and write 
    \begin{equation*}
        H_t(x)=\langle g(x),f(Y_t)\rangle .
    \end{equation*}
\end{itemize}

\begin{example}
	We collect here some of the example models that satisfy \ref{LA}.
	\begin{enumerate}
		\item Linear-rational models (see e.g. \cite{filipovic_lrtsm}): 
        \begin{equation*}
        H_t(x)=\frac{\langle g(x), (1,Y_t)\rangle}{\langle \lambda ,(1,Y_t)\rangle}
        \end{equation*}
        for $\lambda\in\mathbb{R}^{d+1}$ satisfies $H_t(x)=\langle g(x), f(Y_t)\rangle$ with 
        \begin{equation*}
            f(y):=\frac{(1,y)^{\top}}{\langle \lambda ,(1,y)\rangle}.
        \end{equation*}
		\item Polynomial models: 
        \begin{equation*}
            H_t(x)= \phi_0(x)+\sum _{\vert\alpha\vert =1}^{n}\phi _{\alpha}(x)Y_t^{\alpha}
        \end{equation*} satisfies $H_t(x)=\langle g(x),f(Y_t)\rangle$ with 
        \begin{equation*}
            g(x):=(\phi _0(x), \phi _1(x),\dots ,\phi _d(x),\phi _{11}(x),\dots ,\phi _{1d}(x),\dots ,\phi _{dd}(x),\dots )^{\top}
        \end{equation*} 
        and 
        \begin{equation*}
            f(y):=(1,y_1,\dots ,y_d,y_1y_1,\dots ,y_1y_d,\dots ,y_dy_d,\dots )^{\top}.
        \end{equation*}
	\end{enumerate}
\end{example}

In the following, we assume that the process $H$ satisfies \ref{LA}. We will see that under the drift condition, this imposes structure in the form of $g$ and $f(Y)$. 
\begin{proposition}\label{prop:4_3}
	Let $H$ be a discount process satisfying \ref{LA} and $P(t,T)=1-H_t(T-t)$ be the zero-coupon bond model induced by $H$. Then the discounted bond price processes $(\tilde P(t,T))_{0\geq t\geq T}$ are local martingales for all $T>0$ if and only if
	\begin{enumerate}
		\item There exists a matrix $M\in\R^{(d+1)\times (d+1)}$ such that the function $g$ satisfies
			\begin{equation}\label{eq:4_prop23b}
				g(x)=\left(\mathbbm{1}_{d+1}-e^{xM}\right) e_0,
			\end{equation}
			where $e_0=(1,0,\dots ,0)^{\top}\in\R ^{d+1}$ is the first unit basis vector
		\item The drift and diffusion coefficients $b$, respectively $\sigma$ of the process $Y$ satisfy $\Q$-a.s.
			\begin{equation}\label{eq:4_prop23c}
                    \begin{aligned}
				    &D_yf(Y_t)(0,b_t)^{\top}+\frac12\sum _{k=1}^d\Tr\left( \sigma _t\sigma _t^{\top}D^2_yf_k(Y_t)\right) e_k\\
                    &=(M^{\top}+\langle g'(0),f(Y_t)\rangle\mathbbm{1}_{d+1})f(Y_t).
                    \end{aligned}
			\end{equation}
	\end{enumerate}
\end{proposition}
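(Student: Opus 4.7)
The strategy is to turn the HJM drift condition of \Cref{prop:4_1} into an algebraic identity in $x$ and then to decouple the $x$-dependence from the $Y_t$-dependence using the two affine hull assumptions in \ref{LA}.

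I would first apply It\^o's formula to $f(Y_t)$ to compute the drift of $H_t(x)$ as $\beta_t(x)=\langle g(x),v_t\rangle$, where $v_t$ denotes the extended vector on the left-hand side of \eqref{eq:4_prop23c} (with vanishing zeroth entry since $f_0\equiv 1$). The boundary relation $H_t(0)=1-P(t,t)=0$ combined with the affine hull condition on $f$ forces $g(0)=0$, while $r_t=H_t'(0)=\langle g'(0),f(Y_t)\rangle$. Substituting into \eqref{eq:4_5} and rewriting $1=\langle e_0,f(Y_t)\rangle$ yields the master identity
\begin{equation*}
\langle g(x),v_t\rangle=\langle g'(x),f(Y_t)\rangle-\langle g'(0),f(Y_t)\rangle\,\langle e_0-g(x),f(Y_t)\rangle,
\end{equation*}
valid $\Q$-a.s.\ for all $x,t\geq 0$.

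To extract the structure of $g$, I would evaluate this identity at $t=0$ and let $Y_0$ take any value in $\R^d$, which is legitimate by the $\mathcal F_0$-measurability of $Y_0$. The affine hull assumption on $f$ allows $f(Y_0)$ to range over an affinely spanning set of $\{1\}\times\R^d$; matching the constant, linear, and rank-one quadratic parts in $f(Y_0)$, and then invoking the affine hull assumption on $(g_1,\dots,g_d)$ to conclude that the linear span of $\{g(x):x\geq 0\}\cup\{e_0\}$ equals $\R^{d+1}$, one obtains the existence of a unique matrix $M\in\R^{(d+1)\times(d+1)}$ such that
\begin{equation*}
g'(x)=Mg(x)-Me_0,\qquad g(0)=0.
\end{equation*}
This linear ODE has the explicit solution $g(x)=(\mathbbm 1_{d+1}-e^{xM})e_0$, giving (i), while the quadratic matching simultaneously yields $v_t=(M^{\top}+r_t\mathbbm 1_{d+1})f(Y_t)$, which is exactly \eqref{eq:4_prop23c} and gives (ii). The converse reduces to substituting (i) and (ii) back into $\beta_t(x)=\langle g(x),v_t\rangle$ and verifying \eqref{eq:4_5} directly using the relations $g'(0)=-Me_0$ and $g'(x)=M(g(x)-e_0)$.

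The main obstacle is the extraction step: one must simultaneously use both affine hull conditions in \ref{LA}---the one on $f$ to legitimately separate the identity into polynomial pieces in $f(Y_0)$, and the one on $(g_1,\dots,g_d)$ to pin down $M$ as a single linear operator on $\R^{d+1}$---while bookkeeping the constant vector $e_0$, which represents the bond's face value and produces the inhomogeneous term $-Me_0$ in the ODE for $g$.
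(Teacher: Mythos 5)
Your proposal follows essentially the same route as the paper's proof: compute the drift of $H_t(x)$ via It\^o's formula, substitute into the drift condition of \Cref{prop:4_1}, use the affine-hull assumptions of \ref{LA} to decouple the $x$-dependence from the $f(Y_t)$-dependence, solve the resulting inhomogeneous linear ODE $g'(x)=Mg(x)-Me_0$ with $g(0)=0$ to get $g(x)=(\mathbbm{1}_{d+1}-e^{xM})e_0$, and read off the quadratic drift condition for $Y$, with the converse handled by direct substitution in both. The only notable difference is in how the decoupling is justified: the paper separates the deterministic $x$-dependent factor from the random factor directly in the identity \eqref{eq:4_10}, whereas you evaluate at $t=0$ and let $Y_0$ range over $\R^d$, which is not literally available for a fixed process with a prescribed $\mathcal F_0$-measurable initial value --- but this is a presentational point rather than a substantive gap, since the affine-hull condition on $f$ already supplies the spanning needed for the separation.
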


By assuming $f=(1,id)$, we may also solve for the drift term of the process $Y$ explicitly. Indeed, by redefining $X_t=f(Y_t)$, we can always reduce to the following simplified case.
\begin{corollary}
	In the case $f_i=id$ for $i=1,\dots ,d$, \Cref{eq:4_prop23c} simplifies greatly and we obtain
	\begin{equation}\label{eq:4_remark1}
		(0,b_t)^{\top}=\left( M^{\top}+\langle g'(0),(1,Y_t)\rangle\mathbbm{1}_{d+1}\right) (1,Y_t)^{\top},
	\end{equation}
	that is, we may solve for the drift directly.
\end{corollary}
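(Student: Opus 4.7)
The plan is straightforward computational substitution: under the choice $f_i = \mathrm{id}$ for $i = 1,\dots,d$, both the Hessian terms and most of the Jacobian contribution in \eqref{eq:4_prop23c} collapse, and the identity \eqref{eq:4_remark1} drops out at once.

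First I would spell out the derivatives of $f$. With $f(y) = (1, y_1, \dots, y_d)^{\top}$, the component $f_0 \equiv 1$ contributes a zero row to the Jacobian, while $f_k(y) = y_k$ for $k \geq 1$ contributes a row $e_k^{\top}$. Hence $D_y f(Y_t)$ consists of a zero row stacked above the $d \times d$ identity; under the extended-vector convention indicated by the notation $D_y f(Y_t)(0,b_t)^{\top}$, this product evaluates to $(0, b_t)^{\top} \in \R^{d+1}$. Moreover, each $f_k$ for $k\geq 1$ is linear in $y$, so $D^2_y f_k(Y_t) = 0$ for $k = 1, \dots, d$, which makes the diffusion correction $\frac12\sum_{k=1}^d \Tr\bigl(\sigma_t\sigma_t^{\top} D^2_y f_k(Y_t)\bigr) e_k$ vanish identically.

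Next I would substitute these simplifications into \eqref{eq:4_prop23c}. The left-hand side reduces to $(0, b_t)^{\top}$, and using $f(Y_t) = (1, Y_t)^{\top}$ on the right-hand side yields
\begin{equation*}
  (0, b_t)^{\top} = \bigl( M^{\top} + \langle g'(0), (1, Y_t)\rangle \mathbbm{1}_{d+1} \bigr)(1, Y_t)^{\top},
\end{equation*}
which is exactly \eqref{eq:4_remark1}. Finally, since the first coordinate of the left-hand side is zero, this identity decouples into a consistency relation (from the $0$-th component) and an explicit expression for $b_t$ (from components $1$ through $d$), so the drift can indeed be read off directly as claimed.

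There is essentially no obstacle here beyond making the extended-vector bookkeeping precise: one should explicitly state the convention under which $D_y f(Y_t)(0,b_t)^{\top}$ equals $D_y f(Y_t)\, b_t$ (viewing $D_y f$ as having an implicit zero column prefixed), so that the reader can check that no mass is lost in the $f_0 \equiv 1$ component. With that convention in place, the corollary follows from \Cref{prop:4_3} by pure substitution.
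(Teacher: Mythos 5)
Your substitution argument is correct and is precisely the computation the paper intends: the paper states this corollary without proof, treating it as immediate from \Cref{prop:4_3} since the Hessians of the linear $f_k$ vanish and the Jacobian term reduces to $(0,b_t)^{\top}$. Your explicit remark about the extended-vector convention for $D_yf(Y_t)(0,b_t)^{\top}$ is a reasonable clarification of bookkeeping the paper leaves implicit, but it introduces no new idea.
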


\begin{remark}
	From the drift condition, it is clear that a discount model fulfilling NAFLVR is induced by specifying the matrix that induces the curve component and the diffusion matrix that induces the stochastic component. Indeed, an arbitrage-free model is fully specified, up to reparametrisation, by the triplet $(M, y_0, \sigma )$, where $Y_0=y_0$ is the starting value of the process $Y$. 
\end{remark}
The following result implies that it is sufficient to consider models in a convenient basis transformation.
\begin{corollary}\label{cor:4_basis_change}
	Let $M\in\R ^{(d+1)\times (d+1)}$ and $\sigma :\R _+\times\Omega\rightarrow\R ^{d\times d}$ be a progressively measurable $d$-dimensional matrix process with a.s. integrable paths and let $H$ be the finite-dimensional affine discount model induced by the triplet $(M, y_0, \sigma)$ which fulfills the no-arbitrage condition in the sense of \Cref{prop:4_1}, then $H$ satisfies
	\begin{equation*}
		H_t(x) = 1-\langle e^{xJ}p,Z_t\rangle,
	\end{equation*}
	where $Z:=P^{\top}(1,Y)^{\top}$, $p=P^{-1}e_0$ and $P,J\in\R^{(d+1)\times (d+1)}$ are such that $M=PJP^{-1}$ is the Jordan decomposition of $M$. Furthermore, $Z$ satisfies the drift condition
	\begin{equation}\label{eq:4_drift_basis_change}
		b^Z_t=\left( J+\langle Jp,Z_t\rangle\mathbbm 1_{d+1}\right) Z_t,
	\end{equation}
	where $b^Z$ is the drift of $Z$.
\end{corollary}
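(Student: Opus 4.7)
The plan is to substitute the explicit form of $g$ from \Cref{prop:4_3} and push the Jordan decomposition $M = PJP^{-1}$ through the inner product to re-express $H$ in the new coordinates $Z_t = P^{\top}(1,Y_t)^{\top}$; the drift formula then follows by transporting the simplified drift condition \eqref{eq:4_remark1} through $P^{\top}$.

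For the first assertion, I would invoke \ref{LA} with $f = (1,\text{id})$ (as in the corollary to \Cref{prop:4_3}), giving $H_t(x) = \langle g(x), (1,Y_t)^{\top}\rangle$, and then apply \eqref{eq:4_prop23b} to write
\begin{equation*}
H_t(x) = \langle e_0 - e^{xM} e_0, (1,Y_t)^{\top}\rangle = 1 - \langle e^{xM} e_0, (1,Y_t)^{\top}\rangle.
\end{equation*}
Substituting $e^{xM} = P e^{xJ} P^{-1}$ and applying the adjoint identity $\langle Av, w\rangle = \langle v, A^{\top} w\rangle$ yields
\begin{equation*}
\langle e^{xM} e_0, (1,Y_t)^{\top}\rangle = \langle e^{xJ} P^{-1} e_0, P^{\top}(1,Y_t)^{\top}\rangle = \langle e^{xJ} p, Z_t\rangle,
\end{equation*}
which is the claimed representation.

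For the drift of $Z$, since $P^{\top}$ is a constant linear operator the stochastic integral commutes with it, and
\begin{equation*}
dZ_t = P^{\top}(0,b_t)^{\top}\,dt + P^{\top}(0,\sigma_t)^{\top}\,dW_t,
\end{equation*}
so $b^Z_t = P^{\top}(0,b_t)^{\top}$. Substituting the simplified drift condition \eqref{eq:4_remark1} and splitting,
\begin{equation*}
b^Z_t = P^{\top} M^{\top}(1,Y_t)^{\top} + \langle g'(0),(1,Y_t)\rangle\, Z_t.
\end{equation*}
The first term is simplified using $MP = PJ$ to rewrite $P^{\top} M^{\top} = (MP)^{\top} = (PJ)^{\top}$, so the first term reduces to a linear expression in $Z_t$ dictated by $J$. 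For the scalar prefactor, differentiating $g(x) = (\mathbbm{1}_{d+1} - e^{xM})e_0$ at $x=0$ gives $g'(0) = -Me_0 = -PJp$, and a further application of the same adjoint manipulation turns $\langle g'(0),(1,Y_t)\rangle$ into an inner product between $Jp$ and $Z_t$. Assembling these pieces produces \eqref{eq:4_drift_basis_change}.

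The only real obstacle here is bookkeeping: one must keep careful track of which side of the inner product carries the various $P$'s and their adjoints, and ensure the signs from $g'(0)$ and the orientation of the Jordan decomposition (which dictates whether $J$ or its transpose appears) are propagated consistently through the final expression. No new analytic tool is needed beyond the linear algebra of the Jordan form and the HJM-drift condition already established in \Cref{prop:4_3}.
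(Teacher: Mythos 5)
Your proposal follows essentially the same route as the paper's proof: substitute $g(x)=(\mathbbm{1}_{d+1}-e^{xM})e_0$, push $M=PJP^{-1}$ through the inner product via the adjoint identity to get $1-\langle e^{xJ}p,Z_t\rangle$, and transport the drift condition \eqref{eq:4_remark1} by $P^{\top}$. The two bookkeeping points you flag are genuine --- taken literally, $(PJP^{-1})^{\top}=(P^{-1})^{\top}J^{\top}P^{\top}$ produces $J^{\top}$ rather than $J$, and $g'(0)=-Me_0$ produces $-\langle Jp,Z_t\rangle$ --- but the paper's own proof passes over both in exactly the same way, so your argument matches it.
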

\begin{proof}
	We have by definition
	\begin{equation*}
		H_t(x)=\langle g(x),(1,Y_t)\rangle.
	\end{equation*}
	Since $H$ fulfills the no-arbitrage condition, we obtain
	\begin{equation*}
		\begin{aligned}
			H_t(x)&=\langle g(x), (1,Y_t)\rangle = \langle (\mathbbm 1_{d+1}-e^{xM})e_0, (1,Y_t)\rangle =1-\langle e^{xM}e_0,(1,Y_t)\rangle.
		\end{aligned}
	\end{equation*}
	By the properties of the matrix exponential and the Jordan decomposition, we have
	\begin{equation*}
		H_t(x)=1-\langle Pe^{xJ}P^{-1}e_0, (1,Y_t)\rangle = 1-\langle e^{xJ}p,P^{\top}(1,Y_t)\rangle =1-\langle e^{xJ}p, Z_t\rangle.
	\end{equation*}
	Let $b_t$ denote the drift of the process $Y$. The drift condition \eqref{eq:4_remark1} implies
	\begin{equation*}
		\begin{aligned}
			(0,b_t)^{\top}&=\left( (PJP^{-1})^{\top}+\langle g'(0),(1,Y_t)\rangle\mathbbm 1_{d+1}\right) (1,Y_t)^{\top}\\
			   &= (P^{-1})^{\top}JP^{\top}(1,Y_t)^{\top}+\langle g'(0),(1,Y_t)\rangle (1,Y_t)^{\top}\\
			   &= (P^{-1})^{\top}JZ_t+\langle g'(0), (P^{-1})^{\top}Z_t\rangle (P^{-1})^{\top}Z_t\\
			   &=(P^{-1})^{\top}\left( J+\langle P^{-1}g'(0), Z_t\rangle\right) Z_t\\
			   &=(P^{-1})^{\top}\left( J+\langle P^{-1}Me_0, Z_t\rangle\right) Z_t\\
			   &=(P^{-1})^{\top}\left( J+\langle Jp, Z_t\rangle\right) Z_t.
		\end{aligned}
	\end{equation*}
	Multiplying by $P^{\top}$ on both sides and noting that $P^{\top}(0,b_t)^{\top}$ is the drift of the process $Z_t=P^{\top}(1,Y_t)^{\top}$ completes the proof.

\end{proof}
\begin{corollary}\label{cor:4_diagonalisable_m}
	Given the same assumptions as in \Cref{cor:4_basis_change}, assume additionally that $M$ is diagonalizable over $\R$. Then $H$ satisfies
	\begin{equation}\label{eq:4_diag_m}
		H_t(x)=1-\langle e^{xD}\bm{1}, Z_t\rangle,
	\end{equation}
	where $\bm 1\in\R ^{d+1}$ is the vector of ones, $D=\text{diag}(\lambda )$ is the diagonal matrix generated by the vector $\lambda$ of the eigenvalues $\lbrace\lambda _0,\dots ,\lambda _d\rbrace$ of $M$ and $Z$ is defined as $\text{diag}(p)P^{\top}(1,Y_t)^{\top}$, where $p=P^{-1}e_0$. Furthermore, $Z$ satisfies the drift condition
	\begin{equation}\label{eq:4_drift_cond_diag_m}
		b_t^Z=(D+\langle \lambda,Z_t\rangle \mathbbm 1_{d+1})Z_t,
	\end{equation}
	where $b^Z$ is the drift of $Z$.
\end{corollary}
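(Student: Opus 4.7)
The plan is to invoke \Cref{cor:4_basis_change} and specialize its conclusion using the diagonalizability hypothesis. From that corollary we already have the representation
$$H_t(x) = 1 - \langle e^{xJ}p, \tilde Z_t\rangle, \qquad b^{\tilde Z}_t = \left(J + \langle Jp, \tilde Z_t\rangle \mathbbm{1}_{d+1}\right) \tilde Z_t,$$
where I rename the process of \Cref{cor:4_basis_change} to $\tilde Z_t := P^{\top}(1, Y_t)^{\top}$ to avoid clashing with the $Z$ appearing in the present statement, and $p = P^{-1}e_0$ with $M = PJP^{-1}$ the Jordan decomposition. Under the additional assumption that $M$ is diagonalizable over $\R$, I would choose $P$ so that $J = D = \text{diag}(\lambda_0, \dots, \lambda_d)$, where the $\lambda_i$ are the real eigenvalues of $M$.

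The next step is to absorb the vector $p$ into the state process by defining $Z_t := \text{diag}(p)\,\tilde Z_t$. Because $e^{xD}$ and $\text{diag}(p)$ are both diagonal and therefore commute, the componentwise identities
$$\langle e^{xD} p, \tilde Z_t\rangle = \langle e^{xD}\bm{1}, Z_t\rangle, \qquad \langle Dp, \tilde Z_t\rangle = \langle \lambda, Z_t\rangle$$
are immediate; the first yields exactly \eqref{eq:4_diag_m}, and the second will serve as the key simplification in the drift.

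To obtain \eqref{eq:4_drift_cond_diag_m}, I would multiply the drift equation for $\tilde Z$ above on the left by $\text{diag}(p)$. Commutativity of diagonal matrices converts $\text{diag}(p)\,D\,\tilde Z_t$ into $D Z_t$, the scalar prefactor $\langle Jp, \tilde Z_t\rangle = \langle Dp, \tilde Z_t\rangle$ becomes $\langle \lambda, Z_t\rangle$ by the identity just recorded, and $\text{diag}(p)\tilde Z_t$ equals $Z_t$ itself; collecting terms gives the asserted
$$b^Z_t = \left(D + \langle \lambda, Z_t\rangle\,\mathbbm{1}_{d+1}\right) Z_t.$$

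The argument is essentially a bookkeeping exercise once \Cref{cor:4_basis_change} is in hand, so I do not anticipate a genuine obstacle; the only subtlety worth noting is that whenever some component $p_i$ vanishes the corresponding coordinate of $Z$ becomes identically zero, but the identities above remain valid termwise and the conclusion is unaffected.
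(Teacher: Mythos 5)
Your proposal is correct and follows the same route as the paper, which simply notes that for a diagonal $J=D$ one may absorb $p$ into the state process by coordinate-wise multiplication; you have merely written out the bookkeeping (commutation of diagonal matrices and the identity $\langle Dp,\tilde Z_t\rangle=\langle\lambda,Z_t\rangle$) that the paper leaves implicit.
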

\begin{proof}
	This follows directly by observing that in the case of a diagonal matrix $D$, we may use coordinate-wise multiplication of $p$.
\end{proof}
\begin{remark}
	It is easy to check that $g'(x)=e^{xM}m$, where $m\in\R^{d+1}$ is the first column vector of $M$, that is,  $m_i=M_{i,0}$ for $i=0,\dots ,d$.
\end{remark}

Thus, given the simplifying Assumption \ref{LA}, we may directly determine the space of admissible curves for a risk-neutral HJM model of discount curves, namely the space of quasi-exponentials. We can also determine conditions on the process $Y$ in this framework. 
\subsection{Time inhomogeneous case}
Certain non-trivial extensions of the affine discount model can be considered for more advanced applications. For completeness, we will provide one simple way of generalizing to the time-inhomogeneous case and prove admissibility conditions. This extended model's theoretical implications and statistical practicality are left for future research.

In the following, we make a slight generalization by letting the function $g$ depend on the time of observation $t$; that is, we set $H_t(x)=\langle g(x,t),f(Y_t)\rangle$. Through the application of It\^o's Lemma, we get that the resulting drift of $H(x)$ is of the form
\begin{equation}\label{eq:4_drift_time_inhom}
	\beta _t(x)=\langle\partial _tg(x,t),f(Y_t)\rangle+\langle g(x,t),D_yf(Y_t)b_t+\frac12\sum _{k=1}^d\Tr\left(\sigma _t\sigma _t^{\top}H_yf_k(Y_t)\right) e_k\rangle.
\end{equation}
To get an analogous result to \Cref{prop:4_3}, we first observe that the drift condition \eqref{eq:4_5} implies the same condition on the drift and diffusion coefficients of $Y_t$ as in \Cref{eq:4_prop23c}. On the other hand, the function $g$ is now obtained by solving a partial differential equation. We summarize the corresponding result in the following.
\begin{proposition}\label{prop:4_time_inhom_case}
	Let $H$ be a discount process satisfying \ref{LA} for a function $g\in C^{1,2}(\R _+\times\R _+,\R ^{d+1})$ and let $P(t,T)=1-H_t(T-t)$ be the zero-coupon bond model induced by $H$. Then the discounted bond price processes $(\tilde{P}(t,T))_{t\in [0,T]}$ are local martingales for all $T>0$ if and only if
	\begin{enumerate}
		\item There exists a matrix function $M:\R_+\rightarrow\R^{(d+1)\times (d+1)}$ such that the function $g$ is the solution to the problem
	\begin{equation}\label{eq:4_prop_time_inhom1}
		\begin{aligned}
			&\partial _xg(x,t)-\partial _tg(x,t) = M(t)g(x,t)+\partial _xg(0,t),\\
			&g(0,t)=0,
		\end{aligned}
	\end{equation}
	\item The drift and diffusion coefficients $b$, respectively $\sigma$ of $Y$ satisfy
	\begin{equation}\label{eq:4_prop_time_inhom2}
            \begin{aligned}
		      &D_yf(Y_t)(0,b_t)^{\top}+\frac12\sum _{k=1}^d\Tr\left( \sigma _t\sigma _t^{\top}H_yf_k(Y_t)\right) e_k \\
                &=(M(t)^{\top}+\langle \partial _xg(0,t),f(Y_t)\rangle\mathbbm{1}_{d+1})f(Y_t).
            \end{aligned}
	\end{equation}
		\end{enumerate}
\end{proposition}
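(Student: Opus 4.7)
The plan is to mimic the argument underlying Proposition \ref{prop:4_3}: identify the It\^o drift of $H_t(x)=\langle g(x,t),f(Y_t)\rangle$ from \eqref{eq:4_drift_time_inhom} with the HJM drift supplied by Proposition \ref{prop:4_1}, and then separate the resulting identity into one deterministic condition on $g$ (now a PDE because of the extra $\partial_t$ term) and one pathwise condition on the drift and diffusion of $Y$. The sufficiency direction is purely computational: substituting \eqref{eq:4_prop_time_inhom2} into \eqref{eq:4_drift_time_inhom}, using \eqref{eq:4_prop_time_inhom1} to rewrite $\langle M(t)g(x,t),f(Y_t)\rangle$ as $\langle \partial_xg(x,t)-\partial_tg(x,t)-\partial_xg(0,t),f(Y_t)\rangle$ and collecting terms reproduces $\beta_t(x)=H_t'(x)-r_t+r_tH_t(x)$, whence Proposition \ref{prop:4_1} yields the local martingale property.

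For necessity, I would equate \eqref{eq:4_drift_time_inhom} with the HJM identity and rearrange to obtain the master equation
\begin{equation*}
\langle g(x,t),\,\mu_t-r_tf(Y_t)\rangle=\langle \partial_xg(x,t)-\partial_tg(x,t)-\partial_xg(0,t),\,f(Y_t)\rangle,
\end{equation*}
valid for all $x,t\geq0$, $\Q$-a.s., where $\mu_t:=D_yf(Y_t)b_t+\tfrac12\sum_{k=1}^d\Tr(\sigma_t\sigma_t^\top H_yf_k(Y_t))e_k$ is the It\^o drift of $f(Y_t)$ and $r_t=\langle \partial_xg(0,t),f(Y_t)\rangle$. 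The boundary condition $g(0,t)=0$ drops out of $H_t(0)=1-P(t,t)=0$ combined with the affine-span assumption on $f$ in \ref{LA}, which promotes $\langle g(0,t),f(Y_t)\rangle=0$ a.s.\ to the vanishing of $g(0,t)$ itself.

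To split the master equation, I introduce the matrix function $M(t)$ characterised by $\mu_t-r_tf(Y_t)=M(t)^\top f(Y_t)$, which is precisely \eqref{eq:4_prop_time_inhom2}. Substituting back and invoking $\langle g,M^\top f\rangle=\langle Mg,f\rangle$, the master equation collapses to
\begin{equation*}
\langle M(t)g(x,t)-\bigl[\partial_xg(x,t)-\partial_tg(x,t)-\partial_xg(0,t)\bigr],\,f(Y_t)\rangle=0,
\end{equation*}
and the affine-span condition on $f$ once more forces the bracketed vector to vanish, delivering the PDE \eqref{eq:4_prop_time_inhom1}.

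The principal obstacle is legitimately introducing a deterministic $M(t)$: since $\mu_t$ depends on $\omega$ via $b_t,\sigma_t,Y_t$, exhibiting a matrix-valued function of $t$ alone that realises $\mu_t-r_tf(Y_t)=M(t)^\top f(Y_t)$ is a substantive constraint rather than a free definition. I expect to handle this by evaluating the master equation at finitely many $x_i$ chosen so that $\{g(x_i,t)\}$ spans a sufficiently rich subspace of $\R^{d+1}$, then inverting the resulting linear system to express $\mu_t-r_tf(Y_t)$ as a deterministic linear transformation of $f(Y_t)$; the measurability and smoothness of $t\mapsto M(t)$ will follow from the $C^{1,2}$-regularity of $g$ hypothesised in the statement.
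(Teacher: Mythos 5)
Your proposal is correct and follows essentially the same route as the paper: equate the It\^o drift of $H_t(x)=\langle g(x,t),f(Y_t)\rangle$ from \eqref{eq:4_drift_time_inhom} with the HJM drift of \Cref{prop:4_1}, then use the affine-span conditions in \ref{LA} to separate the deterministic $x$-dependence from the stochastic part, yielding the PDE for $g$ and the drift condition on $Y$ simultaneously (the paper calls this the ``separability argument'' and carries it out as in \Cref{prop:4_3}). Your extra care in justifying that $M(t)$ can be taken deterministic and that $g(0,t)=0$ follows from $H_t(0)=0$ via the span assumption only makes explicit what the paper leaves implicit.
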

\begin{proof}
	See \Cref{app:b}.
\end{proof}
Assuming some regularity on the matrix function $M(t)$, we may obtain an explicit solution.
\begin{corollary}\label{cor:4_time_inhom_expl}
	Assume that $M(t_1)M(t_2)=M(t_2)M(t_1)$ for all pairs $(t_1,t_2)\in [0,T]\times[0,T]$. Then, the function $g$ is of the form
	\begin{equation}\label{eq:4_pde_expl_sol}
		g(x,t)=e^{\int _0^xM(T-\xi)d\xi}\int _0^xe^{-\int _0^{\xi}M(T-\zeta )d\zeta}\partial _xg(0,T-\xi )d\xi.
	\end{equation}
\end{corollary}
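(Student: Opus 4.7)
The PDE \eqref{eq:4_prop_time_inhom1} is first-order linear in the independent variables $(x,t)$ with characteristics along which $x+t$ is constant, so the natural route is the method of characteristics. The plan is to fix $T>0$ and trace $g$ along the characteristic emanating from $(0,T)$ by setting
\begin{equation*}
v(x) := g(x,\, T-x), \qquad x\in[0,T].
\end{equation*}
By the chain rule, $v'(x)=\partial_x g(x,T-x)-\partial_t g(x,T-x)$, and plugging \eqref{eq:4_prop_time_inhom1} in at the point $(x,T-x)$ turns this into the inhomogeneous linear matrix ODE
\begin{equation*}
v'(x) = M(T-x)\,v(x) + \partial_x g(0,T-x), \qquad v(0) = g(0,T) = 0,
\end{equation*}
where the boundary data comes from the second line of \eqref{eq:4_prop_time_inhom1}.

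Next I would exploit the commutativity hypothesis. Because $M(s_1)M(s_2)=M(s_2)M(s_1)$ for all $s_1,s_2$, the family $\{M(T-\xi)\}_\xi$ consists of pairwise commuting matrices, which is exactly what is needed to make the matrix exponential behave as in the scalar case. In particular,
\begin{equation*}
\Phi(x) := \exp\!\Bigl(-\!\int_0^x M(T-\xi)\,d\xi\Bigr)
\end{equation*}
is well-defined and satisfies $\Phi'(x)=-M(T-x)\Phi(x)=-\Phi(x)M(T-x)$, so $\Phi$ serves as a bona fide integrating factor for the ODE above. (Without commutativity, this identity fails and one would have to resort to a time-ordered/Dyson exponential, which is why the hypothesis is imposed.)

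From here the computation is mechanical: multiplying the ODE by $\Phi(x)$ collapses the left-hand side to a derivative, $(\Phi v)'(x)=\Phi(x)\,\partial_x g(0,T-x)$; integrating from $0$ to $x$ and using $v(0)=0$, then left-multiplying by $\Phi(x)^{-1}=\exp(\int_0^x M(T-\xi)d\xi)$ using commutativity once more, yields precisely \eqref{eq:4_pde_expl_sol}. A direct differentiation of the right-hand side of \eqref{eq:4_pde_expl_sol} in $x$ (and the observation that it depends on $(x,t)$ only through $T=x+t$ along each characteristic) verifies conversely that the formula does solve \eqref{eq:4_prop_time_inhom1} with the stated boundary condition, so uniqueness along characteristics closes the argument. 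The only nontrivial step is the commutativity-based integrating-factor identity in the second paragraph; the rest is the classical first-order scalar recipe transcribed into matrix notation.
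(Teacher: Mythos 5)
Your argument is correct and follows the same essential route as the paper: the method of characteristics along $x+t=\mathrm{const}$, with the commutativity hypothesis used exactly where you say it is needed, namely to guarantee that $\exp\bigl(-\int_0^x M(T-\xi)\,d\xi\bigr)$ differentiates like a scalar integrating factor. The only mechanical difference is that the paper first differentiates the PDE in $x$ so that the inhomogeneity $\partial_x g(0,t)$ drops out, solves the resulting homogeneous transport equation for $u=\partial_x g$ by characteristics, and then recovers $g$ by integrating in $x$ using $g(0,t)=0$; you instead keep the inhomogeneous term and apply variation of constants directly to the characteristic ODE $v'(x)=M(T-x)v(x)+\partial_x g(0,T-x)$, $v(0)=0$. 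Your version is arguably more direct, since it avoids having to identify the ``unspecified function'' $c(x+t)$ that appears in the paper's intermediate step, and it makes explicit that the stated formula is to be read with $T=x+t$ on each characteristic.
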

\begin{proof}
	See \Cref{app:b}.
\end{proof}
\section{Fully consistent kernels and induced RKHS}\label{sec:kernels}
Having derived the HJM-type conditions for the discount such that the induced bond market fulfills the NAFLVR condition, we are interested in finding kernels that generate RKHS rich enough to contain such markets. Indeed, we will verify that suitable kernels can generate models fulfilling the NAFLVR condition. We will make precise the notion of these fully consistent kernels. Before we give an appropriate notion, we begin with a definition for kernels, which hold a special significance for the rest of our considerations.
\begin{definition}\label{def:pol_exp_kernels}
	\begin{enumerate}
		\item Let $p:\R _+\times\R _+\rightarrow\R$ be a symmetric positive semidefinite polynomial in the sense of \Cref{def:4_kernel_function} (for a discussion of the terminology of positive semidefinite functions, see e.g. \cite[Section 2.2]{paulsen_rkhs}). Then $k_p(x,y):=p(x,y)$ is a kernel function, which we will refer to as the \emph{polynomial kernel} with the induced space $\mathcal H(p)$.
		\item Let $k_{\exp}(x,y):=e^{xy}$. We will refer to $k_{\exp}$ as the \emph{exponential kernel} with induced RKHS $\mathcal H(\exp)$.
	\end{enumerate}
\end{definition}
Note that we distinguish between the kernels and the functions themselves to avoid confusion whenever functions are used in a different context. In the definition of the respective RKHS, we, however, omit this distinction to emphasize the nature of the kernel. By a slight abuse of notation, we will use the same notation regardless of the reparametrization of the arguments of the functions and scaling by constant factors; that is, we will still call, e.g., $k(x,y)=ce^{(ax+b)(ay+b)}$ an exponential kernel.
	
In the next step, we introduce the kernels that generate spaces that are consistent with the admissibility conditions for the discount model. To this extent, we will use the following definition.
\begin{definition}\label{def:fully consistent_kernel}
	Let $k$ be a kernel function on $\R _+$ in the sense of \Cref{def:4_kernel_function}. We say $k$ is \emph{fully consistent} if for any choice $y_1,\dots,y_N\in\R _+$, $N\in\mathbb N$ there is a finite-dimensional space $V\subseteq C^1(\R _+, \R)$ fulfilling
	\begin{enumerate}
		\item $\partial _x(V)\subseteq V$.\label{def:fully consistent_kernel_1}
		\item $k_{y_1},\dots,k_{y_N}\in V$.\label{def:fully consistent_kernel_2}
	\end{enumerate}
\end{definition}
\begin{remark}
	Condition \ref{def:fully consistent_kernel_1} of \Cref{def:fully consistent_kernel} is motivated by the theory of invariant manifolds (see, e.g., \cite{filipovic_invariant, teichmann1, tappe_invariant}). Indeed, the choice of $V$ as the smallest finite-dimensional derivative invariant space, which contains the span of kernels, is the natural choice of domain for the discount process. In some practical cases, this space coincides with the span of the kernels, e.g., in the case of the exponential kernel. This justifies the use of kernels fulfilling the conditions of \Cref{def:fully consistent_kernel} as a regression basis for the statistical calibration of the model.
\end{remark}
Next, we state a key result that will yield the necessary and sufficient conditions for the full consistency of kernels. This will serve as the basis for deriving kernels, which we will use as a regression basis for the estimation problem of the discount models.
\begin{proposition}\label{prop:4_u_space}
	Let $\mathcal{U}:=\lbrace x\mapsto \varphi\left(\left(\mathbbm{1}_{d+1}-e^{xM}\right) e_0\right): M\in\R ^{(d+1)\times (d+1)}, \varphi\in (\R ^{d+1})^*, d\in\mathbb{N}\rbrace$. A kernel $k$ for some RKHS is fully consistent if and only if the map $k_y:x\mapsto k(y,x)$ is such that $k_y\in\mathcal{U}$ for all $y\in\R _+$.
\end{proposition}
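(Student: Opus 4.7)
The strategy is to first identify $\mathcal{U}$ with the set of quasi-exponential functions vanishing at the origin---i.e., finite $\R$-linear combinations of $x^j e^{\alpha x}\cos(\beta x)$ and $x^j e^{\alpha x}\sin(\beta x)$ that equal zero at $x=0$---and then to invoke the classical fact that a function lies in some finite-dimensional $\partial_x$-invariant subspace of $C^1(\R_+, \R)$ if and only if it is a quasi-exponential.

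I would first establish $\mathcal{U} \subseteq \{\text{quasi-exp.\ vanishing at } 0\}$, which is immediate: any $u(x) = \varphi((\mathbbm{1}_{d+1}-e^{xM})e_0)$ satisfies $u(0) = 0$, and since $x \mapsto e^{xM}e_0$ solves the linear ODE $w' = Mw$, its entries are quasi-exponentials. For the reverse inclusion I would use a realization argument: given a quasi-exponential $q$ with $q(0) = 0$ satisfying a minimal linear ODE of order $n$, I would take $M \in \R^{n \times n}$ to be the transpose of the companion matrix of that ODE, set $v := (q(0), q'(0), \ldots, q^{(n-1)}(0))^{\top}$, and define $\varphi(u) := -\langle v, u \rangle$. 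A direct computation using $(q(x), \ldots, q^{(n-1)}(x))^{\top} = e^{xM^{\top}}v$ and transposing the resulting scalar identity yields $q(x) = -\varphi(e^{xM}e_0)$, while $\varphi(e_0) = -q(0) = 0$, so $x \mapsto \varphi((\mathbbm{1}-e^{xM})e_0)$ equals $q$ and hence $q \in \mathcal{U}$.

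With this characterization in hand, for the ``if'' direction of the proposition, given representations $k_{y_i} = \varphi_i((\mathbbm{1}-e^{xM_i})e_0)$ I would set $V_i$ to be the subspace spanned by the components of $x \mapsto e^{xM_i}e_0$ together with the constant function; this is finite-dimensional and $\partial_x$-invariant (using $\partial_x e^{xM_i}e_0 = M_i e^{xM_i}e_0$) and contains $k_{y_i}$, so $V := \sum_{i=1}^N V_i$ fulfills Definition \ref{def:fully consistent_kernel}. For the ``only if'' direction, I would apply full consistency with $N = 1$ to obtain a finite-dimensional $\partial_x$-invariant $V$ containing $k_y$; the minimal polynomial of $\partial_x|_V$ then annihilates $k_y$, so $k_y$ is a quasi-exponential, and combined with $k_y(0) = 0$ (built into the discount framework via the boundary condition $k(y, 0) = 0$, reflecting that admissible curves satisfy $H_t(0) = 0$), the characterization of $\mathcal{U}$ gives $k_y \in \mathcal{U}$. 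The principal obstacle is the realization step: fitting a general quasi-exponential into the rigid form $\varphi(e_0) - \varphi(e^{xM}e_0)$ while simultaneously ensuring $\varphi(e_0) = 0$ requires the companion-matrix construction and crucially relies on $q(0) = 0$.
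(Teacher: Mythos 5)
Your argument is, in substance, the intended one, and it is considerably more explicit than the paper's own proof, which simply declares the proposition a direct consequence of \Cref{cor:full_consistency_equiv} (full consistency is equivalent to $b(x)=(k(y_1,x),\dots,k(y_N,x))$ having the form $e^{xA}b_0$) and never explains how to pass from that matrix-exponential form to the specific shape $\varphi\left(\left(\mathbbm{1}_{d+1}-e^{xM}\right)e_0\right)$. Your companion-matrix realization is exactly that missing link: with $w(x)=(q(x),\dots,q^{(n-1)}(x))^{\top}=e^{xC}w(0)$ for the minimal ODE of a quasi-exponential $q$, setting $M=C^{\top}$ and $\varphi=-\langle w(0),\cdot\rangle$ indeed gives $\varphi((\mathbbm 1_{n}-e^{xM})e_0)=-q(0)+q(x)$, and the ``if'' direction via $V_i=\Span\{(e^{xM_i}e_0)_j : j\}+\R\cdot 1$ is correct.

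The one genuine gap is in your ``only if'' direction, and you have half-noticed it yourself. Every element of $\mathcal U$ vanishes at $x=0$ (since $e^{0\cdot M}=\mathbbm 1_{d+1}$ and $\varphi$ is linear), so to conclude $k_y\in\mathcal U$ you need $k(y,0)=0$. You import this from ``the boundary condition of the discount framework,'' but \Cref{def:fully consistent_kernel} contains no such condition: it asks only that $k_{y_1},\dots,k_{y_N}$ lie in a finite-dimensional $\partial_x$-invariant subspace. As literally stated the implication therefore fails: $k_{\exp}(x,y)=e^{xy}$ is fully consistent (the span of $e^{y_1\cdot},\dots,e^{y_N\cdot}$ is derivative-invariant) yet $k_y(0)=1\neq 0$; likewise the kernels of \Cref{prop:4_rkhs}, which the paper asserts are fully consistent, do not satisfy $k(0,y)=0$. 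This is a defect of the statement rather than of your strategy --- the paper's own citation of \Cref{cor:full_consistency_equiv} establishes only the quasi-exponential characterization and silently ignores the vanishing-at-zero constraint built into $\mathcal U$, and the proof of \Cref{prop:4_rkhs} checks membership in $\mathcal U$ without ever evaluating at $x=0$. The clean repair is to replace $\mathbbm{1}_{d+1}-e^{xM}$ by $e^{xM}$ in the definition of $\mathcal U$ (so that $\mathcal U$ is precisely the quasi-exponentials), after which your proof closes with no appeal to $H_t(0)=0$; alternatively one must impose $k(y,0)=0$ for all $y$ as a standing hypothesis. You should state explicitly which of the two you are adopting rather than attributing $k_y(0)=0$ to the definition of full consistency, which does not contain it.
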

\begin{proof}
	See \Cref{app:b}.
\end{proof}
The space $\mathcal{U}$ can be seen as the space of admissible functions where we only consider a specific coordinate since we can choose $\varphi =\langle\cdot ,e_k\rangle$ for $k=1,\dots ,d$. We additionally state the following Lemma, which asserts a ``nice'' vector space structure on the set $\mathcal U$. In particular, this allows us to combine fully consistent kernels to suit our practical needs.
\begin{lemma}\label{lem:v_space_u}
	$\mathcal{U}$ is a vector space of $C^{\infty}$-functions.
\end{lemma}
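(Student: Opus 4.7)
The plan is to verify the three defining properties of a vector space---that $\mathcal{U}$ contains the zero function, is closed under scalar multiplication, and is closed under addition---while simultaneously noting that each element is $C^\infty$. Smoothness is immediate: for fixed $M$, the map $x\mapsto e^{xM}$ is real-analytic as an $\R^{(d+1)\times(d+1)}$-valued function (given by a uniformly convergent matrix power series on compacta), and composing with the continuous linear functional $\varphi$ preserves analyticity. The zero function lies in $\mathcal{U}$ by choosing $M=0$, so that $\mathbbm{1}_{d+1}-e^{xM}=0$; closure under scalar multiplication is handled by replacing $\varphi$ with $c\varphi$, which is still a linear functional on the same space.

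The genuine work lies in closure under addition. Given $f_i(x)=\varphi_i\bigl((\mathbbm{1}_{d_i+1}-e^{xM_i})e_0\bigr)$ for $i=1,2$, my first step would be the natural block-diagonal construction: set $M=\mathrm{diag}(M_1,M_2)\in\R^{(d_1+d_2+2)\times(d_1+d_2+2)}$ and let $\varphi(u_1,u_2):=\varphi_1(u_1)+\varphi_2(u_2)$. A direct computation using $e^{xM}=\mathrm{diag}(e^{xM_1},e^{xM_2})$ shows that if $v\in\R^{d_1+d_2+2}$ is the concatenation of the two first standard basis vectors, then $(\mathbbm{1}-e^{xM})v=\bigl((\mathbbm{1}_{d_1+1}-e^{xM_1})e_0,(\mathbbm{1}_{d_2+1}-e^{xM_2})e_0\bigr)^{\top}$, and consequently $\varphi((\mathbbm{1}-e^{xM})v)=f_1(x)+f_2(x)$.

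The obstacle I expect to be the central one is that this $v$ is \emph{not} the first standard basis vector of the larger space $\R^{d_1+d_2+2}$, so the combined function does not literally fit the rigid form demanded by $\mathcal{U}$, which insists on evaluation at $e_0$. I would dispose of this by a conjugation trick: since $v\neq 0$, pick any invertible $S\in\R^{(d_1+d_2+2)\times(d_1+d_2+2)}$ with $Se_0=v$ (for instance, extend $v$ to a basis and use it as the first column of $S$). Using $e^{xM}=S e^{xS^{-1}MS}S^{-1}$, a short calculation yields $(\mathbbm{1}-e^{xM})v=S(\mathbbm{1}-e^{x\widetilde M})e_0$ with $\widetilde M:=S^{-1}MS$. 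Setting $\widetilde\varphi:=\varphi\circ S\in(\R^{d_1+d_2+2})^*$ then gives $f_1+f_2=\widetilde\varphi\bigl((\mathbbm{1}-e^{x\widetilde M})e_0\bigr)\in\mathcal{U}$, which completes the argument. The conceptual punchline, which may be worth recording as a remark, is that $\mathcal{U}$ is in fact the (seemingly larger) set of maps $x\mapsto\varphi((\mathbbm{1}-e^{xM})v)$ for arbitrary nonzero $v$, so the apparent asymmetry coming from evaluating at $e_0$ is harmless.
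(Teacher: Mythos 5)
Your proof is correct, and its core mechanism for closure under addition -- the block-diagonal matrix $M_1\oplus M_2$ together with the sum functional $\varphi_1\oplus\varphi_2$ -- is the same as the paper's. The difference is in how the two arguments return to the rigid form demanded by the definition of $\mathcal{U}$. The paper rewrites $(\mathbbm{1}_{d+1}-e^{xM})e_0$ as $x\xi(xM)v$ with $\xi(z)=(e^z-1)/z$ and $v=-Me_0$, performs the direct-sum construction in that representation, and then asserts membership in $\mathcal{U}$ ``by the canonical isomorphism''; this tacitly uses that the class of functions $x\mapsto\varphi(x\xi(xM)v)$ with \emph{arbitrary} $v$ coincides with $\mathcal{U}$, where $v$ is tied to $M$ and $e_0$. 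You identify exactly this gap and close it explicitly with the conjugation trick: choosing invertible $S$ with $Se_0=v$ and replacing $(M,\varphi)$ by $(S^{-1}MS,\varphi\circ S)$ restores the canonical form, and your closing remark that $\mathcal{U}$ equals the apparently larger set with arbitrary nonzero $v$ is precisely the fact the paper leaves implicit. Your treatment of the zero function ($M=0$) and of scalar multiplication (scaling $\varphi$ rather than $v$, which avoids re-entering the same normalization issue) is also slightly cleaner than the paper's. In short: same decomposition, but your write-up supplies a step the paper's proof glosses over.
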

\begin{proof}
	See \Cref{app:b}.
\end{proof}
\subsection{RKHS induced by fully consistent kernels}
\Cref{prop:4_u_space} gives a general condition for the full consistency of the kernel functions. We will now consider specific examples of kernels that can be used for the discount model and derive descriptions for the RKHS induced by those kernels. Indeed, we will show that there is a broad class of fully consistent kernels with induced RKHS, which provide a rich modelling basis for the discount model while retaining tractability for the computational task of calibrating to the market data. We will start with a definition.
\begin{definition}\label{def:weighted_norm}
	Let $\ell ^2(\R)$ denote the Hilbert space of square-summable sequences over $\R$ and let $w=(w_k)_{k\geq 0}\subseteq\R _+$ denote a sequence (possibly not in $\ell ^2(\R)$). Define for $f,g\in\ell ^2(\R)$ the weighted product
	\begin{equation}\label{eq:4_weighted_prod}
		\langle f,g\rangle _{\ell _w^2}=\langle w\odot f, g\rangle _{\ell ^2}=\langle f,w\odot g\rangle _{\ell ^2},
	\end{equation}
	where $w\odot f=(w_kf_k)_{k\geq 0}$ denotes the Hadamard product, and the weighted norm
	\begin{equation}\label{eq:4_weighted_norm}
		\Vert f\Vert ^2_{\ell ^2_w}:=\langle f,f\rangle _{\ell ^2_w}.
	\end{equation}
\end{definition}
We can now state a general result that will enable us to characterize fully consistent kernels for our model and obtain the RKHS induced by those kernels.
\begin{lemma}\label{lem:rkhs_general}
	Let $h:\R\rightarrow\R$ be a real-analytic function with $h^{(k)}(0)\geq 0$, where $h^{(k)}$ denotes the $k$-th order derivative of $h$ for $k\in\mathbb{N}$. Define the weight sequence $w=(w_k)_{k\geq 0}$ where $w_k:=\mathbbm{1}_{\lbrace h^{(k)}(0)>0\rbrace}(h^{(k)}(0))^{-1}$. Let $a,c\in\R$, $a\neq 0$ and define $k(x,y):=h((ax-c)(ay-c))$. Then $k$ is the reproducing kernel of the RKHS
	\begin{equation}\label{eq:4_rkhs_general_description}
        \mathcal{H}(k)=\set[\Bigg]{f:\R\rightarrow\R\given f(x)=\sum _{k=0}^{\infty}\mathbbm{1}_{\lbrace h^{(k)}(0)>0\rbrace}b_k\left(x-\frac{c}{a}\right)^k, \left\Vert (b_k)_{k\geq 0}\right\Vert _{\ell ^2_w}\!\!<\infty},
	\end{equation}
	where $b_k:=f^{(k)}(c/a)/k!$, with inner product $\langle\cdot ,\cdot\rangle _{\mathcal{H}(k)}$ given by
	\begin{equation}\label{eq:4_rkhs_general_inner_prod}
		\langle f,g\rangle _{\mathcal{H}(k)}=\sum _{k=0}^{\infty}\frac{w_k}{a^{2k}k!}f^{(k)}\left(\frac{c}{a}\right) g^{(k)}\left(\frac{c}{a}\right)
	\end{equation}
	and with the induced norm
	\begin{equation}\label{eq:4_rkhs_general_inner_norm}
		\Vert f\Vert ^2_{\mathcal{H}(k)}=\sum _{k=0}^{\infty}\frac{w_k}{a^{2k}k!}\left\vert f^{(k)}\left(\frac{c}{a}\right)\right\vert^2.
	\end{equation}
\end{lemma}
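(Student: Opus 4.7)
The plan is to expand $h$ in a Taylor series at $0$, which exposes $k$ as a countable sum of rank-one positive semidefinite kernels, then invoke the standard RKHS construction for such feature maps (equivalently, Moore--Aronszajn) and finally translate the resulting coefficient sequences back into Taylor coefficients at $c/a$ to recover \eqref{eq:4_rkhs_general_description}, \eqref{eq:4_rkhs_general_inner_prod} and \eqref{eq:4_rkhs_general_inner_norm}.

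First I would use real-analyticity of $h$ and the assumption $h^{(k)}(0)\geq 0$ to write
\begin{equation*}
k(x,y)=h((ax-c)(ay-c))=\sum_{k=0}^{\infty}\frac{h^{(k)}(0)\,a^{2k}}{k!}\,\bigl(x-\tfrac{c}{a}\bigr)^k\bigl(y-\tfrac{c}{a}\bigr)^k.
\end{equation*}
Setting $\lambda_k:=h^{(k)}(0)a^{2k}/k!\geq 0$ and $\phi_k(x):=(x-c/a)^k$, this has the form $k(x,y)=\sum_k\lambda_k\phi_k(x)\phi_k(y)$ with the $\phi_k$ linearly independent, which is the defining form of a feature-map RKHS.

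Next, I would invoke the sum-of-kernels/feature-map construction (see \Cref{sec:rkhs_theory}): the unique RKHS associated with such a $k$ consists of all functions $f=\sum_{k:\lambda_k>0}b_k\phi_k$ with $\sum_k \lambda_k^{-1}b_k^2<\infty$, and carries the inner product $\langle\sum b_k\phi_k,\sum c_k\phi_k\rangle=\sum \lambda_k^{-1}b_k c_k$. Rewriting $\lambda_k^{-1}=w_k\,k!/a^{2k}$ recovers \eqref{eq:4_rkhs_general_description} verbatim. Since $\phi_k^{(j)}(c/a)=k!\,\delta_{jk}$, any such $f$ satisfies $b_k=f^{(k)}(c/a)/k!$ on the indices with $\lambda_k>0$, and therefore
\begin{equation*}
\|f\|_{\mathcal{H}(k)}^2=\sum_k \frac{k!}{h^{(k)}(0)a^{2k}}\Bigl(\frac{f^{(k)}(c/a)}{k!}\Bigr)^{2}=\sum_k \frac{w_k}{a^{2k}k!}\bigl|f^{(k)}(c/a)\bigr|^2,
\end{equation*}
which is \eqref{eq:4_rkhs_general_inner_norm}; polarisation yields \eqref{eq:4_rkhs_general_inner_prod}. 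The reproducing property is then a one-line check: as a function of $x$, the kernel section $k(\cdot,y)$ has coefficient sequence $b_k(y)=\lambda_k(y-c/a)^k$, so
\begin{equation*}
\langle f,k(\cdot,y)\rangle_{\mathcal{H}(k)}=\sum_k \lambda_k^{-1}b_k\cdot\lambda_k(y-c/a)^k=\sum_k b_k(y-c/a)^k=f(y).
\end{equation*}

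The main technical obstacle will be the bookkeeping around vanishing Taylor coefficients of $h$: all weighted sums must be interpreted as restricted to $\{k:h^{(k)}(0)>0\}$, which is precisely what the indicator $\mathbbm{1}_{\{h^{(k)}(0)>0\}}$ in \eqref{eq:4_rkhs_general_description} and the convention $w_k=0$ on the complement encode. A secondary point is justifying convergence of the series for $f$ pointwise, which follows from the Cauchy--Schwarz bound $|b_k|\leq\|f\|_{\mathcal{H}(k)}\lambda_k^{1/2}$ together with the radius of convergence of $h$ at the origin.
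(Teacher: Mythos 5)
Your proof is correct and rests on the same underlying fact as the paper's: a kernel of the form $\sum_k\lambda_k\phi_k(x)\phi_k(y)$ with $\lambda_k\geq 0$ and the monomials $\phi_k(x)=(x-c/a)^k$ has RKHS equal to the weighted $\ell^2$ space of Taylor coefficients, with norm $\sum_k\lambda_k^{-1}b_k^2$. The organization differs: the paper first identifies $\mathcal{H}(k_h)$ for $k_h(x,y)=h(xy)$ via the Fock-space description (citing \cite[Theorem 7.2]{paulsen_rkhs}) and then transports the result along the affine reparametrization $\varphi(x)=ax-c$ using the pullback theorem \cite[Theorem 5.7]{paulsen_rkhs}, whereas you expand $k$ in one step directly around the fixed point $c/a$ of both variables and apply the feature-map construction once. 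Your route is more self-contained (no pullback theorem needed, and the scaling factor $a^{2k}$ appears transparently in $\lambda_k=h^{(k)}(0)a^{2k}/k!$ rather than being tracked through the composition), while the paper's route makes the connection to the Segal--Bargmann/Fock space explicit, which it exploits elsewhere. Two small points you share with the paper and should make explicit: the feature-map argument needs that the only square-summable coefficient sequence representing the zero function is the zero sequence, which follows from uniqueness of Taylor coefficients of a convergent power series (this is what licenses the norm formula $\sum_k\lambda_k^{-1}b_k^2$ without a minimization); and the expansion of $h$ must converge on all of $\R$, i.e.\ one tacitly uses that the Taylor series of $h$ at $0$ has infinite radius of convergence, which is an assumption beyond bare real-analyticity that the paper's own proof also invokes silently.
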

\begin{proof}
	See \Cref{app:b}.
\end{proof}
Given the results of \Cref{prop:4_u_space}, we may choose the rich class of fully consistent kernels from the exponential-affine family of functions, and using the results of \Cref{lem:rkhs_general}, we obtain a precise description of the RKHS induced by those kernels.
\begin{proposition}\label{prop:4_rkhs}
	Let $p(t):=a_dt^d+...+a_1t+a_0$ be a polynomial, such that $a_k\geq 0$ for $k=0,\dots ,d$. Let $h_k:=e^{-\alpha ^2/\beta}\sum _{l=0}^{k\wedge\emph{deg}(p)}\binom{k}{l}l!a_l$ for $k\in\mathbb N_0$ and define the weight sequence $w=(w_k)_{k\geq 0}$, where $w_k:=\mathbbm{1}_{\lbrace h_k>0\rbrace}h_k^{-1}$. Let $\alpha,\beta\in\R$, $\alpha\geq 0,\beta > 0$ and define $k(x,y):=p((\sqrt{\beta}x-\alpha /\sqrt{\beta} )(\sqrt{\beta}y-\alpha /\sqrt{\beta}))e^{\beta xy-\alpha (x+y)}$. Then $k$ is a fully consistent kernel in the sense of \Cref{def:fully consistent_kernel} and is the reproducing kernel of the RKHS
	\begin{equation}\label{eq:4_rkhs_description}
        \mathcal{H}(k)=\set[\Bigg]{ f:\R _+\rightarrow\R\given f(x)=\sum _{k=0}^{\infty}\mathbbm{1}_{\lbrace h_k>0\rbrace}
        b_k\left(x-\frac{\alpha}{\beta}\right)^k,\left\Vert (b_k)_{k\geq 0}\right\Vert _{\ell ^2_w} <\infty },
	\end{equation}
	where $b_k:=f^{(k)}(\alpha /\beta)/k!$, with inner product $\langle\cdot ,\cdot\rangle _{\mathcal{H}(k)}$ given by
	\begin{equation}\label{eq:4_inner_prod}
		\langle f,g\rangle _{\mathcal{H}(k)}=\sum _{k=0}^{\infty}\frac{w_k}{\beta ^kk!}f^{(k)}\left(\frac{\alpha}{\beta}\right) g^{(k)}\left(\frac{\alpha}{\beta}\right),
	\end{equation}
	and induced norm
	\begin{equation}\label{eq:4_rkhs_norm}
		\Vert f\Vert ^2_{\mathcal{H}(k)}=\sum _{k=0}^{\infty}\frac{w_k}{\beta ^kk!}\left\vert f^{(k)}\left(\frac{\alpha}{\beta}\right)\right\vert ^2.
	\end{equation}
\end{proposition}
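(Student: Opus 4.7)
The proposition bundles two claims—full consistency of $k$ and the explicit description of $\mathcal{H}(k)$—and my plan is to reduce both to a single algebraic identity and then invoke the previously prepared machinery. The key observation is that with the affine substitution $u(x) := \sqrt{\beta}\,x - \alpha/\sqrt{\beta}$ one has $u(x) u(y) = \beta xy - \alpha(x+y) + \alpha^2/\beta$, so that
\begin{equation*}
k(x,y) = e^{-\alpha^2/\beta}\, p(u(x) u(y))\, e^{u(x) u(y)} = h\bigl((ax-c)(ay-c)\bigr),
\end{equation*}
with $h(t) := e^{-\alpha^2/\beta}\, p(t)\, e^t$, $a := \sqrt{\beta}$, and $c := \alpha/\sqrt{\beta}$. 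This identity is the workhorse of what follows.

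For full consistency, I would read off from the rewriting that, for fixed $y$, $k_y$ is a polynomial of degree at most $\deg p$ in $x$ multiplied by $e^{(\beta y - \alpha)x}$, hence a quasi-polynomial with exponential rate $\beta y - \alpha$. Each constituent building block $x \mapsto x^j e^{(\beta y - \alpha)x}$ can be realised as $\varphi\bigl((\mathbbm{1}_{d+1} - e^{xM})e_0\bigr)$ by choosing $M$ to be a Jordan block with eigenvalue $\beta y - \alpha$ and $\varphi$ an appropriate coordinate functional; combining these via the vector-space structure of $\mathcal{U}$ from \Cref{lem:v_space_u} gives $k_y \in \mathcal{U}$, and \Cref{prop:4_u_space} then delivers full consistency.

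For the RKHS description, I would apply \Cref{lem:rkhs_general} to $h$, $a$, $c$. Real analyticity of $h$ is immediate, and Leibniz's rule for $p(t) e^t$ yields
\begin{equation*}
h^{(k)}(0) = e^{-\alpha^2/\beta} \sum_{l=0}^{k \wedge \deg p} \binom{k}{l}\, l!\, a_l = h_k \geq 0,
\end{equation*}
since every $a_l$ is nonnegative. The hypotheses of the lemma are thus met, and substituting $c/a = \alpha/\beta$ and $a^{2k} = \beta^k$ into its conclusion yields the stated formulas \eqref{eq:4_rkhs_description}, \eqref{eq:4_inner_prod}, and \eqref{eq:4_rkhs_norm}.

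I expect the main obstacle to be the initial completion of the quadratic form $\beta xy - \alpha(x+y)$ into the rank-one product $u(x)u(y)$, which carries the burden of correctly tracking the constant $e^{-\alpha^2/\beta}$ that is then absorbed into $h$. Once the kernel is in canonical form, both claims reduce to direct invocations of the available machinery, with only routine bookkeeping remaining.
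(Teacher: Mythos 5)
Your proposal is correct and matches the paper's proof essentially step for step: you factor $k(x,y)=h(u(x)u(y))$ with $h(t)=e^{-\alpha^2/\beta}p(t)e^{t}$, compute $h^{(k)}(0)=h_k$ by the Leibniz rule and invoke \Cref{lem:rkhs_general} with $a=\sqrt{\beta}$, $c=\alpha/\sqrt{\beta}$ for the RKHS description, and establish full consistency by writing $k_y$ as a polynomial in $x$ of degree at most $\deg p$ times $e^{(\beta y-\alpha)x}$ and appealing to \Cref{prop:4_u_space}. The only cosmetic difference is that you assemble $k_y\in\mathcal{U}$ from monomial building blocks using the vector-space structure of \Cref{lem:v_space_u}, whereas the paper exhibits the quasi-exponential form of $k_y$ directly; both versions leave implicit the same small point, namely that a quasi-exponential of this form is indeed realizable as $\varphi\left(\left(\mathbbm{1}_{d+1}-e^{xM}\right)e_0\right)$.
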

\begin{proof}
	See \Cref{app:b}.
\end{proof}
\begin{remark}
	\begin{enumerate}
		\item In the case $\beta =0$, $p\equiv 1$, the function $k(x,y):=e^{-\alpha (x+y)}$ is a fully consistent kernel and induces the reproducing kernel Hilbert space $\mathcal{H}(k)=\lbrace f:\R_+\rightarrow\R\vert f(x)=ce^{-\alpha x}, c\in\R\rbrace$ with inner product $\langle c_1e^{-\alpha\cdot},c_2e^{-\alpha\cdot}\rangle _{\mathcal{H}(k)} = c_1c_2$. This follows from the fact that $k(x,y)=f(x)f(y)$, where $f(t):=e^{-\alpha t}$ and the considerations in \cite[Proposition 2.19]{paulsen_rkhs}.
		\item One may extend the definition of the kernel $k$ to the case $\alpha\in\R$. The shift in the Taylor-series to the negative results in a reproducing kernel Hilbert space $\mathcal{H}(k)$ of real analytic functions $f:\R\rightarrow\R$ which fulfill the same weighted square-summability condition. Since we are only interested in term structure models in positive time, we may consider the restriction $f\big\vert _{\R_+}$ for $f\in\mathcal{H}(k)$.
		\item In the case $p\equiv 1$, the resulting RKHS is a special case of the Segal-Bargmann space of analytic functions (see, for instance, \cite[Subsection 7.3.2]{paulsen_rkhs} and \Cref{prop:4_bargmann_space}).
		\item Let $\varphi :x\mapsto \sqrt{\beta}x-\alpha /\sqrt{\beta}$ and $\Delta :x\mapsto (x,x)$. Define $\tilde{k}(x,y):=xy, k_p:=p\circ\tilde{k}\circ\varphi$ and $k_{\exp}:=\exp\circ\tilde{k}\circ\varphi$ and $k:=e^{-\alpha ^2/\sqrt{\beta}}k_pk_{\exp}$. The space $\mathcal{H}(k)$ can be realized as the pullback along the map $\Delta$ of the tensor Hilbert space $\mathcal{H}(\exp )\otimes\mathcal{H}(p)$. Indeed, $k$ induces the RKHS $\mathcal{H}(k)=\lbrace f:\R _+\rightarrow\R\vert f(x)=g(x)h(x), g\in\mathcal{H}(\exp ), h\in\mathcal{H}(p)\rbrace$ with induced norm $\Vert f\Vert _{\mathcal{H}(k)}=\min\lbrace\Vert g\Vert _{\mathcal{H}(\exp )}\Vert h\Vert _{\mathcal{H}(p)}, f=gh, g\in\mathcal{H}(\exp ), h\in\mathcal{H}(p)\rbrace$. This follows from \Cref{lem:rkhs_general} and \cite[Theorem 5.16]{paulsen_rkhs}.
	\end{enumerate}
\end{remark}
\begin{proposition}\label{prop:4_fully consistent_ker}
	Let $p_i(t)=a_{d,i}t^d+...+a_{1,i}t+a_{0,i}$ with $a_{k,i}\geq 0$ for $k=0,\dots ,d$ and $i=1,\dots ,d+1$ and let $\alpha _i,\beta _i\in\R$, $\alpha _i\geq 0,\beta _i >0$ for $i=1,\dots ,d+1$. Set $k_i(x,y)=p_i((\sqrt{\beta _i}x-\alpha _i/\sqrt{\beta _i})(\sqrt{\beta _i}y-\alpha _i/\sqrt{\beta _i}))e^{\beta _ixy-\alpha _i (x+y)}$ and define $k(x,y)=\sum _{i=1}^{d+1}k_i(x,y)$. Then $k$ is a fully consistent kernel in the sense of \Cref{def:fully consistent_kernel} and gives rise to the RKHS given by the direct sum $\mathcal{H}(k)=\mathcal{H}(k_1)\oplus\dots\oplus\mathcal{H}(k_{d+1})$, with norm
	\begin{equation}\label{eq:4_fully consistent_ker_norm}
		\Vert f\Vert ^2_{\mathcal{H}(k)}=\emph{min}\left\lbrace\sum _{i=1}^{d+1}\Vert f_i\Vert ^2_{\mathcal{H}(k_i)}:f=\sum _{i=1}^{d+1}f_i, f_i\in\mathcal{H}(k_i), i=1,\dots ,d+1\right\rbrace ,
	\end{equation}
	where the spaces $\mathcal{H}(k_1),\dots ,\mathcal{H}(k_{d+1})$ are defined as in \Cref{prop:4_rkhs}.
\end{proposition}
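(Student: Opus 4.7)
The statement is a combination of two independent-feeling assertions about the sum kernel $k=\sum_{i=1}^{d+1}k_i$: (a) that $k$ is fully consistent in the sense of \Cref{def:fully consistent_kernel}, and (b) that its induced RKHS is the direct sum of the component RKHS with the given quotient-style norm. My plan is to dispatch these two points separately, leveraging \Cref{prop:4_rkhs}, \Cref{prop:4_u_space} and \Cref{lem:v_space_u} for the first, and a standard sum-of-kernels theorem from the RKHS literature for the second.

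For the full consistency, the cleanest route is via \Cref{prop:4_u_space}: it suffices to verify that $x\mapsto k(y,x)\in\mathcal{U}$ for every $y\in\R_+$. By \Cref{prop:4_rkhs}, each $k_i$ is fully consistent, so by that same characterisation $(k_i)_y\in\mathcal U$ for every $y$. Since $\mathcal U$ is a vector space by \Cref{lem:v_space_u}, the finite sum $k_y=\sum_{i=1}^{d+1}(k_i)_y$ lies in $\mathcal U$ as well. Alternatively, one can verify \Cref{def:fully consistent_kernel} directly: given $y_1,\dots,y_N\in\R_+$, pick finite-dimensional $\partial_x$-invariant spaces $V_i\subset C^1(\R_+,\R)$ with $(k_i)_{y_1},\dots,(k_i)_{y_N}\in V_i$, and take $V:=V_1+\dots+V_{d+1}$; this sum is finite-dimensional, stable under $\partial_x$ by linearity, and contains each $k_{y_j}$.

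For the identification of the RKHS, I would invoke the classical theorem of Aronszajn on sums of reproducing kernels (see e.g.\ \cite[Theorem 5.7]{paulsen_rkhs}): if $k_1,\dots,k_{d+1}$ are reproducing kernels on $\R_+$ with associated RKHS $\mathcal H(k_i)$, then $k=\sum_i k_i$ is the reproducing kernel of
\begin{equation*}
\mathcal H(k)=\set{f_1+\dots+f_{d+1}\given f_i\in\mathcal H(k_i)}
\end{equation*}
endowed precisely with the quotient norm \eqref{eq:4_fully consistent_ker_norm}. One then identifies this space with the external direct sum $\mathcal H(k_1)\oplus\dots\oplus\mathcal H(k_{d+1})$ modulo its intersection kernel; since the spaces $\mathcal H(k_i)$ are described in \Cref{prop:4_rkhs} as weighted $\ell^2$ Taylor-coefficient spaces around distinct expansion points $\alpha_i/\beta_i$ (for generic parameters) the decomposition is often unique, but uniqueness is not needed — the minimum in \eqref{eq:4_fully consistent_ker_norm} is always attained and gives a well-defined Hilbertian norm.

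I do not anticipate any genuine obstacle here: once \Cref{prop:4_rkhs} is in hand, this result is essentially a packaging of \Cref{lem:v_space_u} with Aronszajn's sum-of-kernels theorem. The only mild subtlety is to observe that the component spaces $\mathcal{H}(k_i)$ are indeed Hilbert spaces of functions on the common domain $\R_+$ (so that pointwise addition makes sense), which is clear from their explicit Taylor-series description in \Cref{prop:4_rkhs}, and to verify that the norm formula transfers correctly from the abstract direct sum to the function-space quotient — both of which are routine from the cited RKHS reference.
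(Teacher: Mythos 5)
Your proposal is correct and follows essentially the same route as the paper's proof: full consistency is obtained from the vector-space structure of $\mathcal{U}$ (\Cref{lem:v_space_u}) combined with the characterisation in \Cref{prop:4_u_space} applied to each summand, and the identification of $\mathcal{H}(k)$ with the direct sum and its quotient norm is delegated to Aronszajn's sum-of-kernels theorem as presented in \cite{paulsen_rkhs}. The only differences are cosmetic: you additionally sketch a direct verification via a sum of derivative-invariant subspaces, and you cite a slightly different theorem number in \cite{paulsen_rkhs} than the paper does.
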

\section{Model calibration}\label{sec:calibration}
In the following section, we will calibrate our model to real market data. In particular, we will perform a two-step numerical procedure: 
\begin{itemize}
	\item[(1)] In the first step of our procedure, we will take observed (coupon) bond contracts and use the Representer Theorem to fit a discount curve from an admissible kernel space. Thus, for each day, we will obtain a discount curve as a function of the tenor, allowing us to extract time-series data needed to calibrate the underlying stochastic process for our risk-neutral model. Since the Representer Theorem implies that the inferred curve lies in a kernel subspace of dimension depending on the number of observed tenors, this yields a very high-dimensional model.
	\item[(2)] In the second step, we will fit a simple $d$-dimensional stochastic model to the inferred kernel-based curve from the first step, where $d$ will be much lower than the dimension of the implied kernel subspace. In fact, e.g. \cite[Section 3.4]{filipovic} and \cite{filipovic_kr} suggest that 3 to 4 factors already explain more than $99\%$ of the variance in the model, which may serve as a basis for our choice of dimension.
\end{itemize}
We will conduct our analysis using the CRSP dataset of US Treasury bonds\footnote{Dataset used: CRSP Treasuries (Annual) \copyright 2024 Center for Research in Security Prices, LLC (CRSP) \url{https://wrds-www.wharton.upenn.edu/data-dictionary/crsp_a_treasuries/}}. The data were cleaned and preprocessed using the same procedure as in \cite{filipovic3}. For our task, we will use data collected over one year, covering over $252$ trading days from January 1, 2021, to December 31, 2021. In \Cref{fig:4_bond_prices}, we provide the bond price data and the implied yield to maturities realized on the market on  December 31. 

\begin{figure}[h!]
		\includegraphics[width=13cm, height=5.5cm]{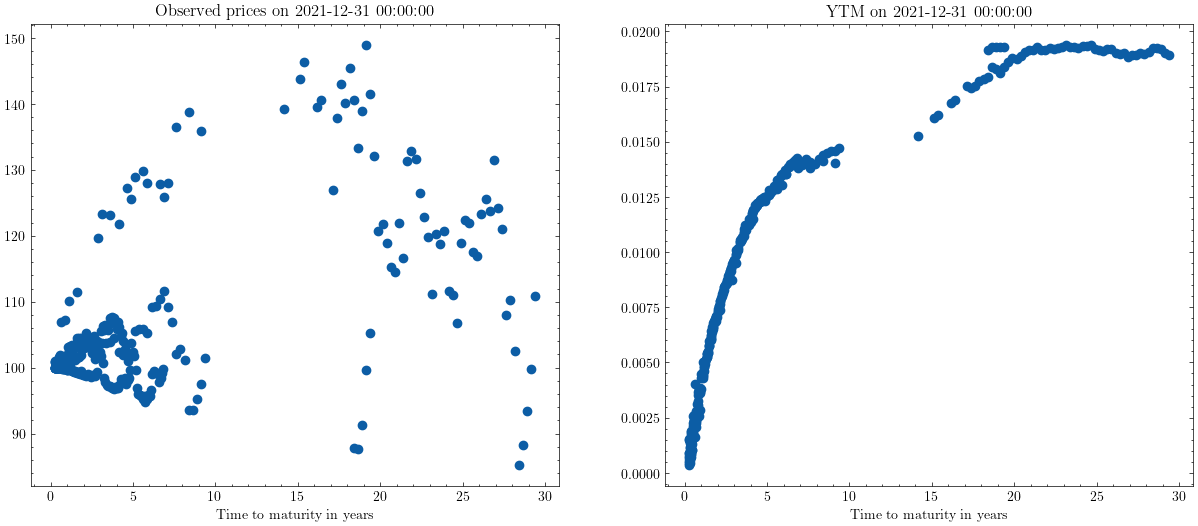}
		\caption{Bond prices and implied yield on 31st of December, 2021}\label{fig:4_bond_prices}
\end{figure}
The contracts offered on the market are coupon bonds with prices quoted on each trading day. Any coupon bond can be written as a linear combination of zero-coupon bonds with different times to maturity multiplied with respective coupons. That is, for any coupon bond $P$, we have
\begin{equation*}
	P = \sum _{i=1}^NC_ih(x_i),
\end{equation*}
where $\{x_1,\dots ,x_N\}$ is a collection of tenors, $C_i$ denotes the cashflow at maturity $x_i$ and $h$ denotes the price of the zero-coupon bond with time to maturity $x_i$. 

We aim to calibrate our model of the term structure of zero-coupon bonds to reproduce the bond prices $P$ observed on the market. Therefore, on any given trading day, we extract the cashflow matrices for a vector of observed coupon bond contracts and use the underlying zero-coupon curve as our variable of interest. In \Cref{fig:4_cashflow}, we depict the cash flow matrix extracted from the observed bond prices on the 31st of December.

\begin{figure}[h!]
		\includegraphics[width=13cm, height=7cm]{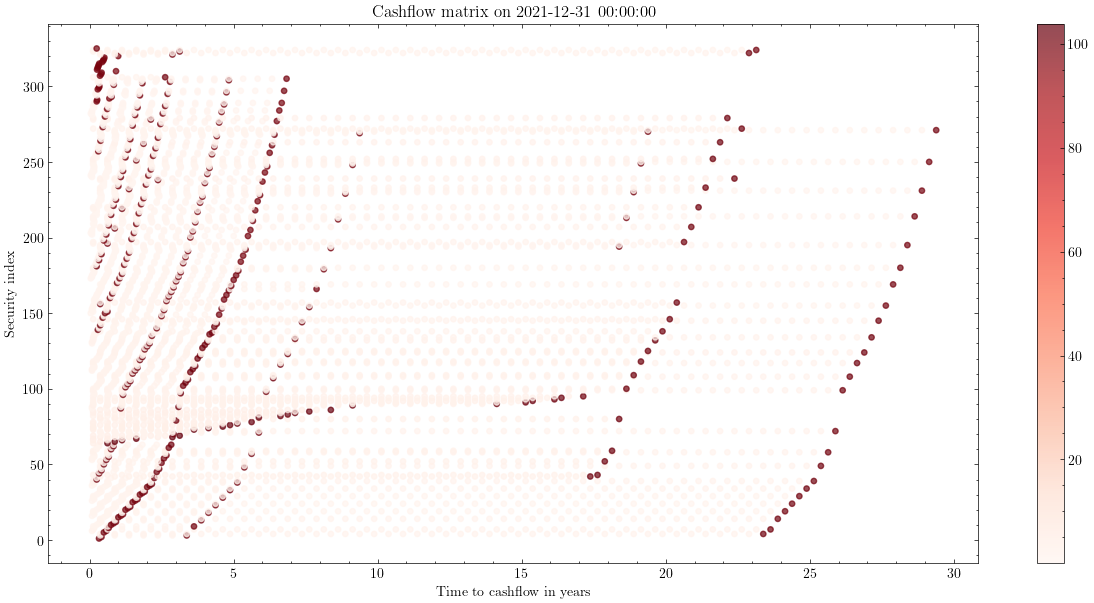}
		\caption{Cashflow matrix extracted from coupon bond on the 31st of December, 2021}\label{fig:4_cashflow}
\end{figure}

Thus, the preprocessed dataset is composed of vectors of zero-bond prices on any given observation day and corresponding cashflow matrices. Next, we continue with describing our procedure.
\subsection{First step optimisation}
Given a dataset with zero-bond prices and cashflow matrices, our aim is to fit a zero-coupon bond curve $h$, which will minimize the pricing error. To this end, we shall fix an appropriate curve space from which to draw our curve. Since we are interested in models fulfilling no-arbitrage, we will use a fully consistent kernel $k$ and fix an RKHS $\mathcal H(k)$. Our goal will now be to formulate and solve a suitable optimization problem. To this end, let $M\geq 0$ denote the number of contracts and $N\geq 0$ denote the number of available different tenors and consider on any given day the vector of quoted coupon bond prices $(P_1,\dots ,P_M)$ with corresponding cash flow matrix $C = (C_{ij})\in\R^{M\times N}$. Let $H$ be the discount and curve and let $C_i$ denote the $i$-th row of $C$ and $h:=1-H$ be the zero-coupon bond curve. Consider the cost functional
\begin{equation}\label{eq:4_cost_func}
	\mathcal{J}(h):=\sum _{i=1}^Mw_i\left(P_i-C_i((h(x_1),\dots ,h(x_N))^{\top}\right) ^2+\lambda\Vert h\Vert _{\mathcal H(k)}^2
\end{equation}
for $0< w_i\leq\infty$ and $\lambda >0$. We observe that $\mathcal J$ has two components: a weighted square-loss function of $h$ against the observed prices and an additional penalty term with coefficient $\lambda$ given by the norm in the RKHS. Indeed, the latter term controls the derivatives of the function $h$ and thus plays the role of a shape penalty term. Thus, we aim to faithfully reproduce prices observed in the market while penalizing functions that do not behave ``nicely'' enough. However, the minimization of the functional $\mathcal J$ is an infinite-dimensional regression problem, which is not tractable numerically unless one fixes an appropriate parametric curve family to minimize over. To approach this problem, we will formally introduce our main tool, the Representer Theorem (cf. \cite[Theorem 8.7.]{paulsen_rkhs}), which is the main justification for why the theory of RKHS is very useful for optimization:
\begin{theorem}\label{thm:4_representer}
	Let $\mathcal X$ be a set and let $k$ be a kernel on $\mathcal X$ with induced RKHS $\mathcal H(k)$. Let $W:\R\rightarrow\R$ be a monotonically increasing function and $\mathcal L:\R^n\rightarrow\R$ be continuous. Consider the cost functional 
	\begin{equation}\label{eq:4_representer1}
		\mathcal J(f):=\mathcal L(f(x_1),\dots ,f(x_n))+W(\Vert f\Vert _{\mathcal H(k)})
	\end{equation}
	for $\{x_1,\dots ,x_n\}\in \mathcal X$. If $f^*$ is a function such that $\mathcal J(f^*)=\inf _{f\in\mathcal H(k)}\mathcal J(f)$,
	then $f^*$ lies in the span of the functions $k_{x_1},\dots ,k_{x_n}$.
\end{theorem}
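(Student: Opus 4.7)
My plan is a classical orthogonal-projection argument. First I would introduce the finite-dimensional subspace $V:=\Span(k_{x_1},\dots,k_{x_n})\subseteq\mathcal H(k)$. Because $V$ is finite dimensional it is automatically closed, so any $f\in\mathcal H(k)$ splits uniquely as $f=f_V+f_{V^\perp}$ with $f_V\in V$ and $f_{V^\perp}\perp V$; in particular I would apply this decomposition to the putative minimizer $f^*$ and compare $\mathcal J(f^*)$ with $\mathcal J(f_V)$. This separates the two terms of $\mathcal J$ cleanly and reduces the problem to checking two one-line estimates.

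The loss term is the easy half. Using the reproducing property together with $k_{x_i}\in V$, which makes $\langle f_{V^\perp},k_{x_i}\rangle_{\mathcal H(k)}=0$, one obtains for every $i=1,\dots,n$
\begin{equation*}
f^*(x_i)=\langle f^*,k_{x_i}\rangle_{\mathcal H(k)}=\langle f_V,k_{x_i}\rangle_{\mathcal H(k)}=f_V(x_i),
\end{equation*}
so $\mathcal L(f^*(x_1),\dots,f^*(x_n))=\mathcal L(f_V(x_1),\dots,f_V(x_n))$; the sampled values of $f^*$ see only the $V$-component. For the regularizer I would invoke Pythagoras, $\|f^*\|_{\mathcal H(k)}^2=\|f_V\|_{\mathcal H(k)}^2+\|f_{V^\perp}\|_{\mathcal H(k)}^2\geq\|f_V\|_{\mathcal H(k)}^2$, and then monotonicity of $W$ to conclude $W(\|f^*\|_{\mathcal H(k)})\geq W(\|f_V\|_{\mathcal H(k)})$.

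Combining the two pieces yields $\mathcal J(f_V)\leq\mathcal J(f^*)=\inf_{\mathcal H(k)}\mathcal J$, forcing equality throughout and in particular $W(\|f^*\|_{\mathcal H(k)})=W(\|f_V\|_{\mathcal H(k)})$. The only real subtlety, and what I would flag as the main obstacle, is the strength of \emph{monotonically increasing}: if $W$ is strictly increasing then equality of $W$-values immediately gives $\|f_{V^\perp}\|_{\mathcal H(k)}=0$ and hence $f^*\in V$, as claimed; if $W$ is only non-decreasing, the argument still delivers a minimizer in $V$ (namely the projection $f_V$), but strict monotonicity or uniqueness of the minimizer is needed to conclude that \emph{every} minimizer lies in $V$. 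This is the standard reading (cf.\ \cite[Theorem 8.7.]{paulsen_rkhs}) and with that convention the proof goes through verbatim.
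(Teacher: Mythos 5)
Your proposal is correct and is exactly the standard orthogonal-projection argument behind the result the paper cites (Paulsen--Raghupathi, Theorem 8.7); the paper gives no proof of its own, so there is nothing different to compare against. Your caveat is also well taken: as stated, with \emph{monotonically increasing} read as merely non-decreasing, the conclusion that \emph{every} minimizer $f^*$ lies in $\Span\{k_{x_1},\dots,k_{x_n}\}$ can fail (e.g.\ for constant $W$), and one either needs $W$ strictly increasing or must weaken the conclusion to the existence of a minimizer in the span, namely the projection $f_V$.
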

Indeed, by using the Representer Theorem, one may reduce the infinite-dimensional problem of minimizing \eqref{eq:4_cost_func} to a finite-dimensional ridge regression over the coefficients in the linear representation of the minimizer within the span of our kernels. Thus, we have the following
\begin{proposition}\label{prop:4_minimisation}
	Let $M,N\geq 0$ and $C\in\R ^{M\times N}$. Denote by $C_i:=(C_{i1},\dots ,C_{i,N})^{\top}$ the $i$-th row vector of $C$ and let $0<w_i\leq\infty$ for $i=1,\dots ,M$. Furthermore, define the index sets $\mathcal I_1:=\lbrace 1\geq i\geq M: w_i=\infty\rbrace$ and $\mathcal I_0:=\lbrace 1,\dots ,M\rbrace\backslash\mathcal I_1$. Consider the minimization problem
	\begin{equation}\label{eq:4_minimisation}
		\min _{h\in\mathcal H(k)}\left\lbrace \sum _{i\in\mathcal I_0}^Mw_i\left( P_i-C_i(h(x_1),\dots ,h(x_N))^{\top}\right) ^2+\lambda \Vert h\Vert ^2_{\mathcal H(k)}\right\rbrace .
	\end{equation}
	Let $K_{ij}=k(x_i,x_j)$ denote the kernel matrix induced by the reproducing kernel $k$ and $\Lambda :=\emph{diag}(\lambda /w_1,\dots ,\lambda /w_M)$, where we define $\lambda /\infty :=0$ and assume that either $\mathcal I_1=\emptyset$ or that $C_{\mathcal I_1}KC_{\mathcal I_1}^{\top}$ is invertible. Then the matrix $CKC^{\top}+\Lambda$ is invertible and there exists a unique solution $\hat h$ to \eqref{eq:4_minimisation} given by 
	\begin{equation}\label{eq:4_minimisation1}
		\hat h=\sum _{i=1}^M\alpha _ik(\cdot, x_i),
	\end{equation}
	where $\alpha=(\alpha _1,\dots ,\alpha _M)^{\top}$ is given by
	\begin{equation*}
		\alpha = C^{\top}(CKC^{\top}+\Lambda)^{-1}P
	\end{equation*}
\end{proposition}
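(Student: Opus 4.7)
The plan is to combine the Representer Theorem (\Cref{thm:4_representer}) with a finite-dimensional ridge regression and a Woodbury-type matrix identity.

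First I would treat the case $\mathcal{I}_1 = \emptyset$. The cost in \eqref{eq:4_cost_func} has the form $\mathcal{L}(h(x_1),\dots,h(x_N)) + W(\Vert h\Vert_{\mathcal{H}(k)})$ with $W(t) = \lambda t^2$ monotonically increasing on $\R_+$ and $\mathcal{L}$ a continuous weighted sum of squares, so by \Cref{thm:4_representer} the minimizer has the form $\hat h = \sum_{j=1}^N \beta_j k(\cdot, x_j)$ for some $\beta \in \R^N$. Substituting $h(x_l) = (K\beta)_l$ and $\Vert h\Vert_{\mathcal{H}(k)}^2 = \beta^{\top} K \beta$ reduces the problem to the finite-dimensional convex quadratic
\begin{equation*}
\min_{\beta \in \R^N} (P - CK\beta)^{\top} W_0 (P - CK\beta) + \lambda \beta^{\top} K \beta,
\end{equation*}
with $W_0 := \mathrm{diag}(w_1,\dots,w_M)$. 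The first-order optimality condition reduces to $K(A\beta - C^{\top} W_0 P) = 0$ with $A := C^{\top} W_0 C K + \lambda I$. A particular solution is then $\beta = A^{-1} C^{\top} W_0 P$: indeed, if $Av = 0$, multiplying by $v^{\top} K$ from the left gives $\Vert W_0^{1/2} C K v\Vert^2 + \lambda v^{\top} K v = 0$, which forces $K v = 0$ and then $v = 0$, so $A$ is invertible.

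To obtain the dual form stated in the proposition, I would verify the algebraic identity $A\, C^{\top} = C^{\top} W_0 (C K C^{\top} + \Lambda)$, which is a one-line expansion using $\Lambda = \lambda W_0^{-1}$. In the present case $C K C^{\top} + \Lambda$ is symmetric positive definite (PSD plus $\Lambda \succ 0$), hence invertible, and inverting both sides of the identity gives $A^{-1} C^{\top} W_0 = C^{\top}(C K C^{\top} + \Lambda)^{-1}$, so $\beta$ coincides with $\alpha$ as in the statement.

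The main obstacle is the case $\mathcal{I}_1 \neq \emptyset$: then the cost is $+\infty$ off the affine constraint set $\{h : C_{i} (h(x_1),\dots,h(x_N))^{\top} = P_i,\, i \in \mathcal{I}_1\}$, so \Cref{thm:4_representer} does not apply verbatim. I would reformulate this as a constrained problem and run a Lagrangian/KKT analysis on the finite-dimensional subproblem, or equivalently pass to the limit $w_i \to \infty$ for $i \in \mathcal{I}_1$ in the formula already established for finite weights. Invertibility of the limiting matrix $C K C^{\top} + \Lambda$ (with the convention $\lambda/\infty = 0$) follows from positive semidefiniteness of $C K C^{\top}$ combined with the hypothesis on $C_{\mathcal{I}_1} K C_{\mathcal{I}_1}^{\top}$: if $(C K C^{\top} + \Lambda)v = 0$, then $v_{\mathcal{I}_0} = 0$ from the $\Lambda$ block, $K C_{\mathcal{I}_1}^{\top} v_{\mathcal{I}_1} = 0$ from the PSD block, and then invertibility of $C_{\mathcal{I}_1} K C_{\mathcal{I}_1}^{\top}$ forces $v_{\mathcal{I}_1} = 0$. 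The same identity then yields $\beta = \alpha$ in this case as well.
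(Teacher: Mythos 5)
Your proposal is correct and follows essentially the same route as the paper: the paper's proof is a two-line citation of the Representer Theorem together with Theorem A.1 of Filipovi\'c et al., and your argument is in effect a direct proof of that cited black box (Representer reduction, normal equations $K(A\beta - C^{\top}W_0P)=0$, invertibility of $A$, the identity $AC^{\top}=C^{\top}W_0(CKC^{\top}+\Lambda)$, and the block argument for invertibility of $CKC^{\top}+\Lambda$ under the hypothesis on $C_{\mathcal I_1}KC_{\mathcal I_1}^{\top}$ --- all of which check out). The only part left as a sketch is the hard-constraint case $\mathcal I_1\neq\emptyset$, where the Representer Theorem must be rerun for an extended-real-valued loss (or the limit $w_i\to\infty$ justified by epi-convergence); you correctly flag this, and the paper itself delegates exactly this point to the cited reference.
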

\begin{proof}
	This follows immediately from \Cref{thm:4_representer} and \cite[Theorem A.1]{filipovic3}.
\end{proof}
Note that to satisfy the terminal bond payout condition $h(0)=1$, we add a soft constraint by introducing a synthetic cashflow of $1$ at maturity $0$ into the dataset for training. 

For the purposes of the numerical analysis, we will make use of the kernel
\begin{equation*}
	k_{\exp}(x,y):=e^{\beta xy-\alpha (x+y)}.
\end{equation*}
The corresponding RKHS $\mathcal H(\exp)$ we will use for our minimization is thus the Segal-Bargmann space. To begin our optimization procedure, we want to find kernel parameters $\alpha$ and $\beta$ and the ridge parameter $\lambda$, which will facilitate a good fitting. In order to find optimal parameters, a cross-validation procedure is used, which yields the following optimal set of estimates:
\begin{equation*}
	\{\alpha ,\beta ,\lambda\} = \{ 0.2, 0.04, 0.001 \}.
\end{equation*}

\Cref{fig:fit_price_full_model} presents a showcase by providing the estimation results for the day 31st of December.  
\begin{figure}[h!]
		\includegraphics[width=13cm, height=5.5cm]{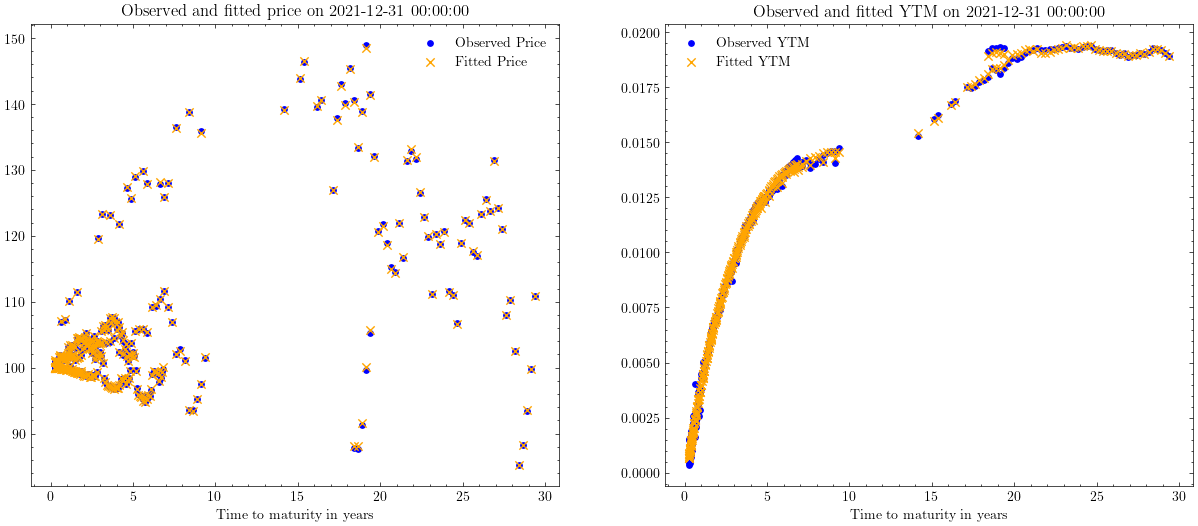}
		\caption{Observed data against fitted data on the 31st of December, 2021}\label{fig:fit_price_full_model}
\end{figure}

\begin{figure}[h!]\label{fig:full_model_curves}
		\includegraphics[width=13cm, height=5.5cm]{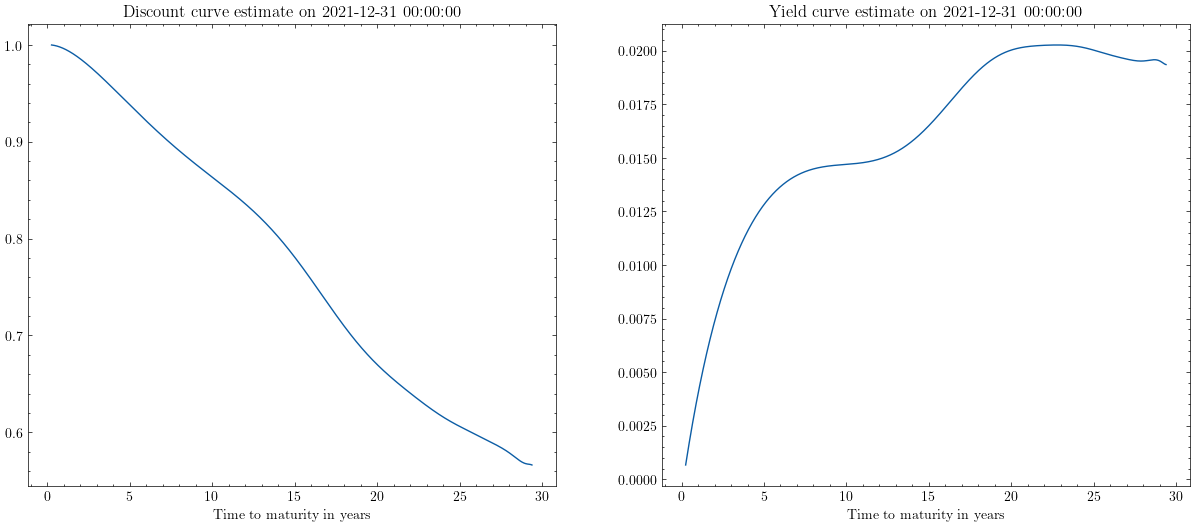}
		\caption{Implied zero-coupon price curve and yield curve on the 31st of December, 2021}
\end{figure}
We note that the successful fit results observed in \Cref{fig:fit_price_full_model} are not exclusive to the day chosen. Indeed, we observe that the results seem to carry over similarly for all of the trading days in the dataset with the average mean-squared error across all contracts on a given observation day around $0.000185$, that is $\approx 1.85$ basis points of the average yield.

\begin{figure}[h!]\label{fig:full_model_mse}
		\includegraphics[width=13cm, height=6.5cm]{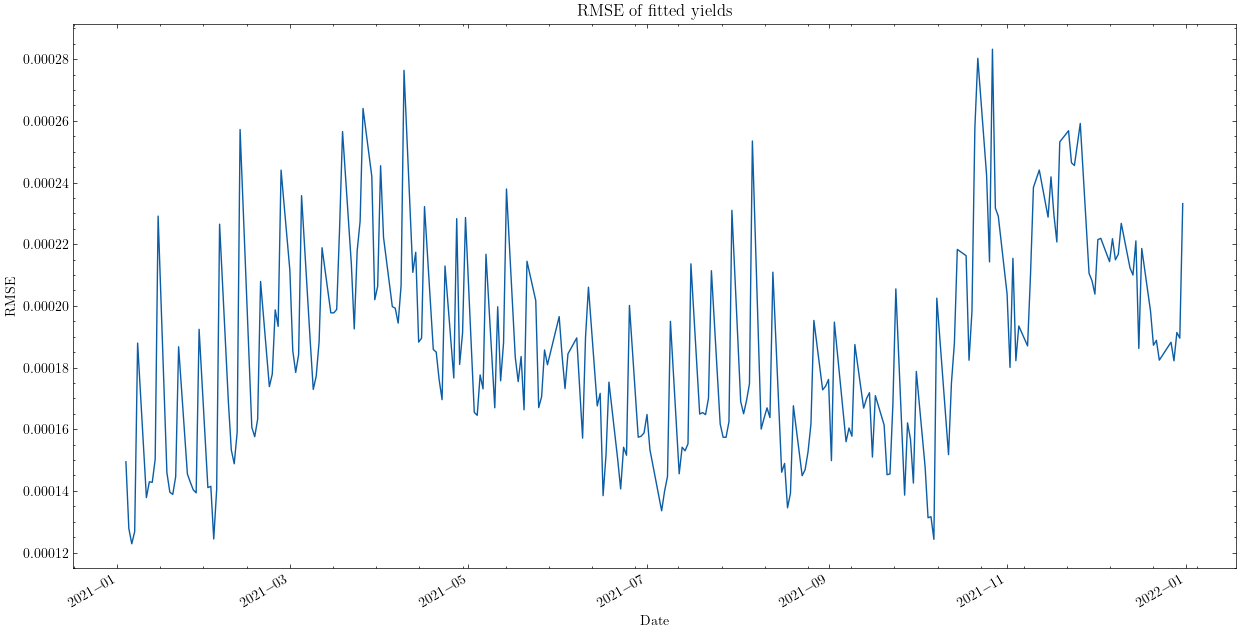}
		\caption{root mean square error of fitted yields across all contracts on each trading day in the observed time window}
\end{figure}
We also perform a sensitivity analysis, where we take the parameter set for the best fit and calibrate the model several times while varying the parameters in a range of $20\%$ of the original value to $500\%$. We capture the results in terms of root mean square errors of the yields and norm in the RKHS in heat maps in \Cref{fig:heat_map}. One parameter was fixed in each of the plots, while the remaining two varied. We notice that the model behaves relatively robustly with respect to the ridge parameter $\lambda$. For the kernel parameters $\alpha$ and $\beta$, the results show a very high sensitivity, particularly in the case of $\beta$. This comes as little surprise, as $\beta$ contributes to the exponent in a multiplicative way. Hence, small discrepancies, in particular towards larger values, lead to rapidly growing curves. We note that while reducing the value of the parameters $\alpha$ and $\beta$ often leads to worse results, this can be done jointly to obtain a fit that seems to be of a similar quality to our best result.   
\begin{figure}[h!]
	\includegraphics[width=14cm, height=9cm]{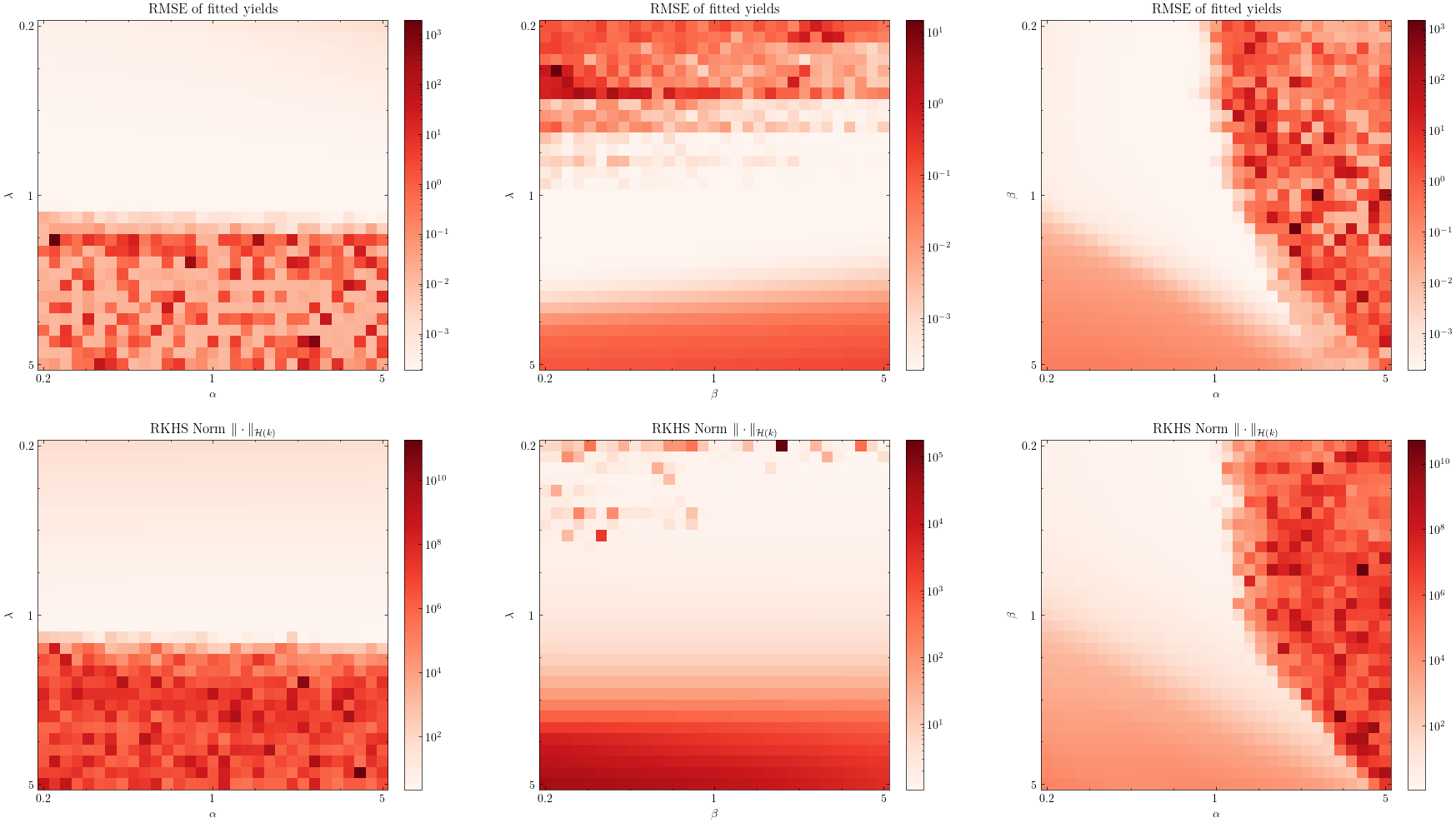}
		\caption{Sensitivity of the model with respect to change in parameters. In each plot, one of the parameters is kept fixed.}\label{fig:heat_map}
\end{figure}
\subsection{Second step optimisation}
We consider the results we obtained in the first step optimisation. For each observation day, we fitted a bond-price curve to the observed prices in the market using reproducing kernels as a regression basis. Using the Representer \Cref{thm:4_representer}, this yields at time $t$ a function of time to maturity which is a linear combination of kernels, that is
\begin{equation*}
	\hat h_t=\sum _{x\in\mathcal X_t}c_{x,t}k_x,
\end{equation*}
where $\mathcal X_t$ denotes the collection of tenors of the zero-coupon bonds available on day $t$, that is $\mathcal X_t:=\{ x^t_1,\dots ,x^t_{M_t}\}$ for some $M_t\in\mathbb N$ for all $t\geq 0 $. For our dataset, $M_t\approx 300$, typically. While pleasing from a numerical and fitting perspective, we want to compare with our implied stochastic model. Since $k$ is a fully consistent kernel, $\hat h_t$ generates an admissible term structure model. Indeed, we may take $h_t$ as is and define the model
\begin{equation*}
	\hat H_t(x)=\langle C_t,K(x)\rangle,
\end{equation*}
where $K=(k_{x_1},\dots ,k_{x_M})^{\top}$, where $\{x_1,\dots ,x_M\}=\bigcup _{t\geq 0}\mathcal X_t$ is the collection of distinct tenors available across all times $t$ and $C$ is the process defined as
\begin{equation*}
	\begin{aligned}
		C_{t,i}=\begin{cases}
			c_{x_i, t},\quad&\text{if }x_i\in\mathcal X_t,\\
			0,&\text{else.}
		\end{cases}
	\end{aligned}
\end{equation*}
Indeed, with this specification, we obtain a consistent $M$-dimensional model $H:=1-\hat H$ of the type specified in \Cref{cor:4_diagonalisable_m}, where $M\leq T\max _{t\geq 0}M_t$, $K\equiv g$ and $C_t$ is one realisation of the stochastic process $Z_t$, that is $C_t=Z_t(\omega _0)$ for some $\omega _0\in\Omega$. We will henceforth refer to it as the ``full'' model. This model satisfies
\begin{equation*}
	\hat H_t =\hat h_t\quad\text{for all }t\geq 0.
\end{equation*}
Since the longest time to maturity in the data set is approximately $30$ years and we consider tenors in day steps, we have $M\leq 30\times 365$. While consistent, the dimensionality is far from satisfactory. Indeed, classic results using PCA, suggest that $4$ dimensions explain more than $99\%$ of the variance in term structure HJM models (see, e.g. \cite[Section 3.4.]{filipovic}). For the second step optimisation, we will therefore aim to find a consistent model which will perform comparably well to the full model, but with a lower-dimensional specification. To this end, we will define an appropriate finite-dimensional subspace of the RKHS $\mathcal H(k)$ and minimise a loss functional with respect to the full model.

To begin, we define our model specification. Under the assumption of an affine discount model, by \Cref{prop:4_3}, for any $d$-dimensional consistent model there is some matrix $M\in\R ^{(d+1)\times (d+1)}$ such that $g(x)=(\mathbbm 1_{d+1}-e^{xM})e_0$. Let now $M\in\R ^{(d+1)\times (d+1)}$ and $\sigma :\R _+\times\Omega\rightarrow\R ^{d\times d}$ be a progressively measurable matrix process with a.s. integrable paths and let $H$ be affine no-arbitrage model induced by $(M, y_0,\sigma)$ in the sense of \Cref{prop:4_3}. Due to \Cref{cor:4_basis_change}, we may write
\begin{equation}\label{eq:4_base_model}
	H_t(x)=1-\langle e^{Jx}p,Z_t\rangle = 1-\sum _{i=0}^dZ_{t,i}q_i(x)e^{\lambda _ix},
\end{equation}
where $J$ is an appropriate Jordan block matrix with Eigenvalues $\{\lambda _0,\dots ,\lambda _d\}$, $q_i\in\text{Pol}_d(\R )$ for $i=0,\dots ,d$ and $Z$ is a stochastic process fulfilling the no-arbitrage quadratic drift condition. We therefore want to find a model $H$ of low dimension (henceforth referred to as the ``reduced'' model) such that the discrepancy with the full model is minimised. We proceed by defining appropriate subspaces for the implied constrained minimisation problem.

\begin{definition}\label{def:q_exp}
	Let $k$ be a reproducing kernel, $\mathcal{H}(k)$ be the RKHS induced by $k$, and let $d\in\mathbb N$. Define the space
	\begin{equation}\label{eq:4_q_exp1}
		E^d(k):=\set[\Bigg]{ \sum _{i=1}^d\eta _ik(\cdot ,y_i)\given \eta _i\in\R ,y_i\in\R _+,i=1,...,d }
	\end{equation}
\end{definition}

\begin{remark}
    Note that $E^d(k)$ is not a vector space, but rather a union of vector spaces. To see this, define
    \begin{equation*}
        \mathcal E^d(y_1,\dots ,y_d;k):=\set[\Bigg]{ \sum _{i=1}^d\eta _ik(\cdot ,y_i)\given \eta _i\in\R ,i=1,...,d }.
    \end{equation*}
    One can easily show that $\mathcal E^d(y_1,\dots y_d;k)$ is a vector space for any fixed $\{y_1,\dots, y_d\}\subset\R$. Then we have
    \begin{equation*}
        E^d(k)=\bigcup _{\{ y_1,\dots ,y_d\}\subset \R_+}\mathcal E^d(y_1,\dots ,y_d; k).
    \end{equation*}
\end{remark}
Let now $h_t:=1-H_t$ denote the zero-coupon bond price curve implied by the simpler model. In order to fit such a parsimonious model where $d\ll M$, we therefore need to minimise the functional
\begin{equation}\label{eq:4_minimiser_functional}
	\mathcal L:=\sum _{t=0}^T\Vert\hat h_t-h_t\Vert _{\mathcal H(k)}^2
\end{equation}
over the chosen admissible set $E^d(k)$, that is we need to solve the constrained optimisation problem
\begin{equation*}
	\min _{h:=\{h_t\} _{t\geq 0}\subseteq E^d(k)}\mathcal{L}(h).
\end{equation*}
We first provide an existence result for a minimiser on our admissible subset. Note that uniqueness of the minimiser is not necessarily maintained when restricting to a subset.
\begin{proposition}\label{prop:4_existence_min}
	Let $k$ be a fully consistent kernel of the form $k=k_p\cdot k_{\exp}$, $\mathcal{H}(k)$ be the corresponding RKHS and let $d\in\mathbb{N}$. Let $E^d(k)$ be defined as in \Cref{def:q_exp} and $h\in\mathcal{H}(k)$. Consider the functional
	\begin{equation}\label{eq:4_existence_min1}
		\mathcal L(g):=\Vert h-g\Vert _{\mathcal H(k)}^2
	\end{equation}
	Then the following statements hold true.
	\begin{enumerate}
		\item Assume $p(x,x) > 0$ for all $x\in\R _+$, then $\mathcal L$ attains its minimum over $E^d(k)$.
		\item There exist symmetric, positive semidefinite polynomials $q:\R _+\rightarrow\R$ and $r:\R _+\times\R _+\rightarrow\R$ with $p(x,y)=q(x)q(y)r(x,y)$ and $r(x,x)>0$ for all $x\in\R _+$, and $\mathcal L$ attains its minimum over $qE^d(k_r\cdot k_{\exp}):=\{qf: f\in E^d(k_r\cdot k_{\exp})\}$.
	\end{enumerate}
\end{proposition}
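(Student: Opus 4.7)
Both parts reduce to an existence question for a parameterized optimization. For any fixed tuple $\mathbf{y} = (y_1, \dots, y_d) \in \R_+^d$, the subspace $\mathcal E^d(\mathbf{y}; k)$ is finite-dimensional, hence closed in $\mathcal{H}(k)$, and the orthogonal projection $\pi_{\mathbf{y}} h$ uniquely minimizes $\mathcal{L}$ on $\mathcal E^d(\mathbf{y}; k)$, with value
\begin{equation*}
F(\mathbf{y}) := \|h - \pi_{\mathbf{y}} h\|_{\mathcal{H}(k)}^2 = \|h\|_{\mathcal{H}(k)}^2 - \|\pi_{\mathbf{y}} h\|_{\mathcal{H}(k)}^2.
\end{equation*}
Thus minimizing $\mathcal{L}$ over $E^d(k) = \bigcup_{\mathbf{y} \in \R_+^d} \mathcal E^d(\mathbf{y}; k)$ is equivalent to minimizing the continuous function $F$ over $\R_+^d$.

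For part (i), let $I := \inf_{\mathbf{y} \in \R_+^d} F(\mathbf{y})$ and take a minimizing sequence $(\mathbf{y}^{(n)})_n$. Extract a subsequence such that each coordinate $y_i^{(n)}$ converges in the compactification $[0, \infty]$ to some $\bar y_i$. The assumption $p(x,x) > 0$ guarantees that $k(y,y) > 0$ for every $y \in \R_+$, so each $k_y$ is a nonzero element of $\mathcal{H}(k)$ and the map $y \mapsto k_y$ is continuous $\R_+ \to \mathcal{H}(k)$. For indices with $\bar y_i < \infty$, the kernels $k_{y_i^{(n)}}$ converge in norm to $k_{\bar y_i}$. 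For indices with $\bar y_i = \infty$, I would use the explicit RKHS norm from \Cref{prop:4_rkhs} to show that $k_y / \|k_y\|_{\mathcal{H}(k)} \to 0$ weakly as $y \to \infty$, so that $|h(y)| / \sqrt{k(y,y)} \to 0$ and the rank-one contribution of such indices to the projection vanishes in the limit. Consequently $\pi_{\mathbf{y}^{(n)}} h$ converges to some $g^* \in \mathcal E^{d'}(\bar{\mathbf y}^{\mathrm{fin}}; k)$ where $\bar{\mathbf y}^{\mathrm{fin}}$ lists only the finite limit coordinates and $d' \leq d$. Padding with zero coefficients embeds this into $E^d(k)$, so $g^* \in E^d(k)$ realizes the infimum.

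For part (ii), I first construct the factorization $p(x, y) = q(x) q(y) r(x, y)$ with $r(x,x) > 0$ on $\R_+$. By the Cauchy--Schwarz inequality for the symmetric positive semidefinite polynomial $p$, any $x_0 \geq 0$ with $p(x_0, x_0) = 0$ satisfies $p(x_0, \cdot) \equiv 0$, so $(x - x_0)$ and, by symmetry, $(y - x_0)$ each divide $p(x, y)$ as a polynomial. Iterating over the finitely many non-negative roots of $x \mapsto p(x,x)$ yields $q$ encoding these roots (with multiplicity) and a symmetric positive semidefinite polynomial $r$ with $r(x,x) > 0$ on $\R_+$. The multiplication $f \mapsto qf$ is a bijection between $E^d(k_r \cdot k_{\exp})$ and $qE^d(k_r \cdot k_{\exp})$, and applying part (i) to the kernel $k_r \cdot k_{\exp}$ (for which the hypothesis now holds) produces a minimizer in $E^d(k_r \cdot k_{\exp})$; pulling back through multiplication by $q$ yields a minimizer of $\mathcal L$ over $qE^d(k_r \cdot k_{\exp})$.

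\textbf{Main obstacle.} The delicate step in part (i) is handling the limit $y_i^{(n)} \to \infty$: one must show that such components of a minimizing sequence contribute nothing in the limit, so that the overall limit remains inside $E^d(k)$ rather than escaping to an enlarged space. This requires the weighted-Taylor description of $\mathcal H(k)$ from \Cref{prop:4_rkhs} together with a careful weak-convergence argument for the normalized kernels. In part (ii), the care lies in verifying that $r$ is still positive semidefinite after iterative division and that multiplication by $q$ interacts compatibly with the two RKHS norms involved.
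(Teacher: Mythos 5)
Your overall strategy for part (i) --- reducing to the minimization of the node-parameterized function $F(\mathbf y)=\Vert h-\pi_{\mathbf y}h\Vert^2_{\mathcal H(k)}$ over a compactification of $\R_+^d$ --- is genuinely different from the paper's, which instead shows that $E^d(k)$ is weakly closed and invokes a general existence result (\Cref{lem:weak_optimisation}) for continuous, coercive, strictly convex functionals on weakly closed subsets of a reflexive space. Your handling of escaping nodes $y_i\to\infty$ via normalized kernels tending weakly to zero mirrors the paper's estimate $\vert\eta_n\vert\leq C/\sqrt{k(y_n,y_n)}$, and your part (ii) (Cauchy--Schwarz forces $p(x_0,\cdot)\equiv 0$ at a diagonal zero, iterate the factorization, transfer by the multiplication isometry) matches \Cref{lem:q_factorisation2} and \Cref{prop:4_isometry} closely.

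There is, however, a genuine gap in part (i): the assertion that $F$ is continuous fails at node collisions, and this is precisely the degeneration your compactification argument does not cover. If two coordinates $y_i^{(n)},y_j^{(n)}$ converge to the same finite value $\bar y$, the spaces $\mathcal E^d(\mathbf y^{(n)};k)$ converge to a $d$-dimensional space containing the derivative direction $\partial_yk_y\vert_{y=\bar y}$, which strictly contains $\mathcal E^d(\bar{\mathbf y};k)$; hence $\lim_nF(\mathbf y^{(n)})$ can be strictly smaller than $F(\bar{\mathbf y})$, so $F$ is upper but not lower semicontinuous, and the limit of the minimizing projections need not lie in $E^d(k)$. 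Concretely, for $k=k_{\exp}$ one checks that $n(k_{1+1/n}-k_1)\to(x\mapsto xe^x)$ in $\mathcal H(\exp)$-norm, so $E^2(k_{\exp})$ is not norm closed; taking $h(x)=xe^x\in\mathcal H(\exp)$, the infimum of $\mathcal L$ over $E^2(k_{\exp})$ is $0$ and is not attained. Repairing this requires either enlarging $E^d(k)$ to the confluent family with polynomial coefficients (the actual closure) or restricting attention to $d=1$. You are in good company: the paper's own proof treats only $d=1$ after a ``without loss of generality'' that silently discards exactly this collision phenomenon, so the issue is not resolved there either --- but your write-up localizes it sharply in the unproved claim that $F$ is continuous. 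A secondary, fixable point: when some nodes escape to infinity while others stay bounded, the projection onto the joint span is not a sum of rank-one contributions, so you still need an argument (e.g.\ block-diagonal convergence of the Gram matrix in the normalized kernel basis) that the escaping directions decouple in the limit.
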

\begin{proof}
	See \Cref{app:b}.
\end{proof}
We shall now proceed with the optimisation step. For our purposes, we will fix the set $E^d$, that is, we will consider models of the form as in \Cref{eq:4_base_model} where $\text{deg}(q_i)=0$ for all $i=0,\dots ,d$. This corresponds to choosing a diagonalisable $M\in\R^{(d+1)\times (d+1)}$ such that the triplet $(M, y_0, \sigma )$ generates the model $H$. Since the set of diagonalisable matrices is dense in the set of matrices, the use of a simplified model of the type given in \Cref{cor:4_diagonalisable_m} is justified in the numerical calibration. 

\begin{proposition}\label{prop:4_sol_2nd_minimisation}
	Let $\hat h_t:=\sum _{i=1}^{M_t}c_{t,i}k_{x_i}$ and $h_t\in E^d(k)$ for $t=0,\dots ,T$, $d\in\mathbb N$. Consider the minimisation problem
	\begin{equation}\label{eq:4_min_problem}
		\min _{c_t}\Vert\hat h_t-h_t\Vert _{\mathcal H}.
	\end{equation}
	Define the matrices $K'_t\in\R ^{M_t\times (d+1)}$ and $K''\in\R ^{(d+1)\times (d+1)}$ with entries
	\begin{equation}\label{eq:4_def_matrices2}
		\begin{aligned}
			(K'_t)_{ij}&=e^{\lambda _jx_i}\qquad&\text{for }i\in\lbrace 1,\dots ,M_t\rbrace, j\in\lbrace 0,\dots ,d\rbrace,\\
			(K'')_{ij}&=\langle e^{\lambda _i\cdot},e^{\lambda _j\cdot}\rangle _{\mathcal H},\qquad&\text{for }i,j\in\lbrace 0,\dots ,d\rbrace,
		\end{aligned}
	\end{equation}
	and assume the matrix $K''$ is invertible. Then the solution vector $\hat a _t$ is given by
	\begin{equation*}
		\hat c _t =(K'')^{-1}(K'_t)^{\top}\eta_t\qquad\text{for }t=0,\dots ,T,
	\end{equation*}
\end{proposition}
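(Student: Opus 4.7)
The statement is a textbook least-squares calibration in a Hilbert space, with the slight subtlety that the ``data'' $\hat h_t$ live in a kernel expansion while the ``model'' $h_t\in E^d(k)$ is parametrized in the exponential basis $\{e^{\lambda_j\cdot}:j=0,\ldots,d\}$. My plan is to set this up as an unconstrained convex quadratic minimization in the coefficient vector and derive the normal equations.

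First, I would fix notation: write $h_t(x)=\sum_{j=0}^d c_{t,j}\,e^{\lambda_j x}$ so that the unknown is the coefficient vector $c_t=(c_{t,0},\ldots,c_{t,d})^\top\in\R^{d+1}$, while $\hat h_t=\sum_{i=1}^{M_t}\eta_{t,i}k_{x_i}$ is given with known vector $\eta_t=(\eta_{t,1},\ldots,\eta_{t,M_t})^\top$. I would then expand
\begin{equation*}
\|\hat h_t-h_t\|_{\mathcal H}^2 = \|\hat h_t\|_{\mathcal H}^2 - 2\langle \hat h_t, h_t\rangle_{\mathcal H} + \|h_t\|_{\mathcal H}^2.
\end{equation*}
The first term is independent of $c_t$ and can be dropped. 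For the cross-term, the reproducing property yields $\langle k_{x_i}, e^{\lambda_j\cdot}\rangle_{\mathcal H}=e^{\lambda_j x_i}=(K'_t)_{ij}$, giving $\langle \hat h_t, h_t\rangle_{\mathcal H}=\eta_t^\top K'_t c_t$. The quadratic term is by bilinearity $\|h_t\|_{\mathcal H}^2=c_t^\top K'' c_t$, using the definition of $K''$ directly.

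Thus, up to a constant, the objective reduces to the convex quadratic $c_t^\top K'' c_t - 2\eta_t^\top K'_t c_t$. Since $K''$ is a Gram matrix, it is positive semidefinite; the invertibility assumption promotes it to positive definiteness, so the objective is strictly convex and coercive, guaranteeing existence and uniqueness of the minimizer. Setting the gradient with respect to $c_t$ to zero yields the normal equations $K'' c_t = (K'_t)^\top \eta_t$, and inverting $K''$ gives $\hat c_t=(K'')^{-1}(K'_t)^\top \eta_t$.

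The routine obstacle to flag is justifying that the reproducing property applies to the exponentials $e^{\lambda_j\cdot}$, i.e.\ that these lie in $\mathcal H(k)$; this follows from Proposition \ref{prop:4_rkhs} (or the surrounding RKHS structure), since the exponentials with exponent matching the scale parameter of $k$ are by construction elements of the Segal--Bargmann-type RKHS generated by $k_{\exp}$. Once this inclusion is acknowledged, the derivation is a one-line differentiation of a quadratic form, and the proof can essentially be concluded by citation of the first-order optimality condition for a strictly convex quadratic.
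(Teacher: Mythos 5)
Your proof is correct, and its skeleton---expanding $\Vert\hat h_t-h_t\Vert^2_{\mathcal H(k)}$ into a quadratic form in the unknown coefficient vector and solving the normal equations $K''c_t=(K'_t)^{\top}\eta_t$---is the same as the paper's. The genuine difference lies in how the Gram matrices are computed. You obtain the cross term from the reproducing property, $\langle k_{x_i},e^{\lambda_j\cdot}\rangle_{\mathcal H(k)}=e^{\lambda_j x_i}$, after checking that $e^{\lambda_j\cdot}\in\mathcal H(k)$; for the exponential kernel this is immediate, since $e^{\lambda_j\cdot}=e^{\alpha y_j}k_{y_j}$ with $y_j=(\lambda_j+\alpha)/\beta$, so the entries of $K'_t$ come out exactly as in the statement. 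The paper instead works with the general polynomial-exponential kernel $k=k_p\cdot k_{\exp}$: it writes the reduced basis functions as $q\cdot k'_{y_j}$ via the factorisation $p(x,y)=q(x)q(y)r(x,y)$ of \Cref{lem:q_factorisation2} and evaluates the inner products through the multiplier isometry $M_q$ of \Cref{prop:4_isometry}, arriving at $(K'_t)_{ij}=q(x_i)k'(x_i,y_j)$ and $(K'')_{ij}=k'(y_i,y_j)$. Your route is more elementary and matches the matrices as literally defined in the proposition (which presupposes a pure exponential basis, i.e.\ $\deg q_i=0$); the paper's route buys the extension to polynomial-exponential reduced bases at the cost of the isometry machinery. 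Two further points in your favour: you add the correct observation that invertibility of the Gram matrix $K''$ makes the quadratic strictly convex and coercive, so the minimiser exists and is unique---a step the paper's proof omits; and your assignment of the symbols ($\eta_t$ for the known full-model coefficients, $c_t$ for the unknown reduced-model ones) is the only one consistent with the dimensions in the final formula, whereas the paper's proof swaps the roles of $c_t$ and $\eta_t$ between its expansion of the quadratic form and its first-order condition.
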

\begin{proof}
	Consider the kernel $k=k_p\cdot k_{\exp}$ and the space $E^d(k_p\cdot k_{\exp})$. Let $\mathcal J_t:=\Vert\hat h_t-h_t\Vert ^2_{\mathcal H(k)}$ for $h_t\in E^d(k_p\cdot k_{\exp})$ for $t\in\mathbb N$. Then $h_t=\sum _{i=1}^dq\eta _{t,i} k'_{y_i}$ for $\eta _i\in\R$, $y_i\in\R _+$, $i=1,\dots ,d$, where $k':=r\cdot\exp$, and $q$ and $r$ are polynomials as specified in \Cref{lem:q_factorisation2}. Using the bilinearity of the inner product of the RKHS, we may expand the quadratic form 
	\begin{equation*}
		\begin{aligned}
			\Vert\hat h_t-h_t\Vert_{\mathcal H(k)}^2&=\langle\hat h_t-h_t,\hat h_t-h_t\rangle _{\mathcal H(k)}\\
                                    &=\langle\hat h_t,\hat h_t\rangle _{\mathcal H(k)}-2\langle\hat h_t,h_t\rangle _{\mathcal H(k)}+\langle h_t,h_t\rangle _{\mathcal H(k)}\\
								&=\sum _{i=1}^{M_t}\sum _{j=1}^{M_t}c_{t,i}c_{t,j}\langle k_{x_i},k_{x_j}\rangle _{\mathcal H(k)} -2\sum _{i=1}^{M_t}\sum _{j=0}^dc_{t,i}\eta_{t,j}\langle k_{x_i}, qk'_{y_j}\rangle _{\mathcal H(k)}\\
								&+\sum _{i=0}^d\sum _{j=0}^d\eta_{t,i}\eta_{t,j}\langle qk'_{y_i},qk'_{y_j}\rangle _{\mathcal H(k)}.
		\end{aligned}
	\end{equation*}
	Using \Cref{prop:4_isometry} and the fact that $k(x,y)=q(x)q(y)k'(x,y)$, we note that 
	\begin{equation*}
		\begin{aligned}
			\langle qk'_x,qk'_y\rangle _{\mathcal H(k)}&=\langle M_q(k'_x),M_q(k'_y)\rangle _{\mathcal H(k)}=\langle k'_x, k'_y\rangle _{\mathcal H(k')}=k'(x,y),\\
			\langle k_x,qk'_y\rangle _{\mathcal H(k)}&=\langle q(x)qk'_x,qk'_y\rangle _{\mathcal H(k)}=q(x)\langle M_q(k'_x),M_q(k'_y)\rangle _{\mathcal H(k)}=q(x)k'(x,y). 
		\end{aligned}
	\end{equation*}
	This yields
	\begin{equation*}
		\Vert\hat h_t-h_t\Vert_{\mathcal H(k)}^2=c_t^{\top}K_tc_t-2c_t^{\top}K'_t\eta_t+\eta_t^{\top}K''\eta_t,
	\end{equation*}
	where $K_t\in\R^{M_t\times M_t}$, $K'_t\in\R ^{M_t\times (d+1)}$, $K''\in\R^{(d+1)\times (d+1)}$ are matrices with entries
	\begin{equation*}
		\begin{aligned}
			(K_t)_{ij}&=k(x_i,x_j)\qquad&\text{for }i,j\in\lbrace 1,\dots ,M_t\rbrace,\\
			(K'_t)_{ij}&=q(x_i)k'(x_i,y_j)\qquad&\text{for }i\in\lbrace 1,\dots ,M_t\rbrace, j\in\lbrace 0,\dots ,d\rbrace,\\
			(K'')_{ij}&=k'(y_i,y_j),\qquad&\text{for }i,j\in\lbrace 0,\dots ,d\rbrace.
		\end{aligned}
	\end{equation*}
	We have the following first order condition
	\begin{equation*}
		\nabla _{c_t}\mathcal J_t=2K''c_t-2(\eta_t^{\top}K'_t)^{\top}=0. 
	\end{equation*}
	The assertion follows by solving for $c_t$.
\end{proof}
We provide here a slightly more general expression for the inner product of the RKHS with the exponential kernel. To be more precise, this pertains if one would like to extend the parametric families of functions $E^d$ to the case of polynomial coefficients. We remark here, that to the authors' knowledge in this case, the existence of a minimiser for the optimisation problem \eqref{eq:4_existence_min1} is not clear.
\begin{lemma}\label{lem:inner_product_specific}
	Let $k(x,y):=e^{\beta xy -\alpha (x+y)}$ and $p,q\in\text{Pol}(\R )$. Then
	\begin{equation*}
		\langle pe^{\lambda\cdot},qe^{\mu\cdot}\rangle _{\mathcal H(k)}=p(\partial _{\lambda})q(\partial _{\mu})e^{(\lambda -\alpha )(\mu +\alpha )/\beta}
	\end{equation*}
\end{lemma}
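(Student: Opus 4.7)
The plan starts from the operator identity $p(\partial_\lambda) e^{\lambda x}=p(x)e^{\lambda x}$ (applied pointwise in $x$, with $\partial_\lambda$ acting on the exponent parameter). Together with bilinearity of the pairing, this formally collapses the lemma to the two-parameter generating identity
$$\langle pe^{\lambda\cdot},qe^{\mu\cdot}\rangle_{\mathcal H(k)}=p(\partial_\lambda)q(\partial_\mu)\,F(\lambda,\mu),\qquad F(\lambda,\mu):=\langle e^{\lambda\cdot},e^{\mu\cdot}\rangle_{\mathcal H(k)},$$
so the whole proof reduces to computing $F$ in closed form and justifying that the differential operators $p(\partial_\lambda),q(\partial_\mu)$ commute with the Hilbert-space pairing.

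To compute $F(\lambda,\mu)$ I would avoid the spectral series of \Cref{prop:4_rkhs} and instead use the reproducing property directly. Writing the kernel as $k_y(x)=e^{-\alpha y}e^{(\beta y-\alpha)x}$, the reproducing identity $\langle e^{\lambda\cdot},k_y\rangle_{\mathcal H(k)}=e^{\lambda y}$ rearranges to $\langle e^{\lambda\cdot},e^{(\beta y-\alpha)\cdot}\rangle_{\mathcal H(k)}=e^{(\lambda+\alpha)y}$; substituting $\mu=\beta y-\alpha$, i.e.\ $y=(\mu+\alpha)/\beta$, then produces an explicit exponential expression for $F(\lambda,\mu)$. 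I would cross-check this against the weighted Taylor-coefficient inner product of \Cref{prop:4_rkhs} with $p\equiv 1$: there $w_k=e^{\alpha^2/\beta}$ and $(e^{\lambda\cdot})^{(k)}(\alpha/\beta)=\lambda^k e^{\lambda\alpha/\beta}$, so the sum collapses to another explicit exponential. The two computations must agree, which both confirms $F$ and builds confidence that $e^{\lambda\cdot}\in\mathcal H(k)$ in the first place (the series converges for every $\lambda\in\R$).

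The routine technical step is justifying $p(\partial_\lambda)\langle e^{\lambda\cdot},\cdot\rangle_{\mathcal H(k)}=\langle p(\partial_\lambda)e^{\lambda\cdot},\cdot\rangle_{\mathcal H(k)}$. I would argue that $\lambda\mapsto e^{\lambda\cdot}$ is a real-analytic $\mathcal H(k)$-valued map, since $\|e^{\lambda\cdot}\|^2_{\mathcal H(k)}=F(\lambda,\lambda)$ is finite and entire in $\lambda$; this lets a finite-order differential operator act term-by-term on the series representation and pass through the continuous pairing. An equivalent route is to write $p=\sum p_k t^k$, compute each $\partial_\lambda^k\langle e^{\lambda\cdot},qe^{\mu\cdot}\rangle$ via dominated convergence using the norm bound $\|x^k e^{\lambda x}\|_{\mathcal H(k)}$, and then recombine.

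The one genuinely delicate point I anticipate is reconciling the output of the above derivation with the \emph{precise} exponent $(\lambda-\alpha)(\mu+\alpha)/\beta$ written in the lemma: the reproducing-property calculation above produces an expression that is manifestly symmetric under $\lambda\leftrightarrow\mu$, as it must be since $F$ is itself symmetric, whereas $(\lambda-\alpha)(\mu+\alpha)$ is not symmetric. The main work will therefore consist of tracking carefully the signs of $\alpha$ and the orientation of the substitution $\mu=\beta y-\alpha$ through \Cref{def:pol_exp_kernels} and \Cref{prop:4_rkhs}, together with the reproducing step, to determine precisely where a stray sign relative to the stated formula arises; either the derivation will reproduce $(\lambda-\alpha)(\mu+\alpha)/\beta$ after a change of variable I have missed, or the calculation will pinpoint the exact source of the asymmetry and force a corrected exponent. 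Only after this is resolved can the polynomial-prefactor identity be asserted by applying $p(\partial_\lambda)q(\partial_\mu)$ to whichever closed form for $F$ survives the scrutiny.
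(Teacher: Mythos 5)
Your plan has the same skeleton as the paper's own proof: obtain a closed form for $F(\lambda,\mu):=\langle e^{\lambda\cdot},e^{\mu\cdot}\rangle_{\mathcal H(k)}$, then pass $p(\partial_\lambda)q(\partial_\mu)$ through the pairing (the paper does this interchange via \Cref{lem:dx_product}; your analyticity/dominated-convergence argument is an equivalent justification). The substantive difference is that the paper merely \emph{asserts} the closed form in one line, writing $\langle e^{\lambda\cdot},e^{\mu\cdot}\rangle_{\mathcal H(k)}=e^{(\lambda-\alpha)(\mu+\alpha)/\beta}=e^{-\alpha(a+b)}k(a,b)$ with $a=(\lambda+\alpha)/\beta$, $b=(\mu+\alpha)/\beta$, whereas you actually derive it --- and your derivation settles the ``delicate point'' you leave open: the asymmetry is not an artifact of your bookkeeping, it is a sign error in the stated lemma (and in the first display of the paper's proof). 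Carrying out your reproducing-property computation: $k_y=e^{-\alpha y}e^{(\beta y-\alpha)\cdot}$ gives $\langle e^{\lambda\cdot},e^{(\beta y-\alpha)\cdot}\rangle_{\mathcal H(k)}=e^{\alpha y}e^{\lambda y}$, so with $y=(\mu+\alpha)/\beta$,
\begin{equation*}
F(\lambda,\mu)=e^{(\lambda+\alpha)(\mu+\alpha)/\beta},
\end{equation*}
and your series cross-check via \Cref{prop:4_rkhs} (with $w_k=e^{\alpha^2/\beta}$ and $(e^{\lambda\cdot})^{(k)}(\alpha/\beta)=\lambda^ke^{\lambda\alpha/\beta}$) yields $e^{\alpha^2/\beta}e^{(\lambda+\mu)\alpha/\beta}e^{\lambda\mu/\beta}$, which is the same quantity. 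It is symmetric and finite for all $\lambda,\mu$, confirming both $e^{\lambda\cdot}\in\mathcal H(k)$ and the formula.

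So the second branch of your fork is the one that occurs: the printed exponent $(\lambda-\alpha)(\mu+\alpha)/\beta$ cannot be correct for $\alpha\neq 0$ (no symmetric bilinear form can produce it), and no change of variables recovers it. The paper's intermediate identity is also off by a sign: with the stated $a,b$ one has $e^{-\alpha(a+b)}k(a,b)=e^{\beta ab-2\alpha(a+b)}$, which equals neither $e^{(\lambda-\alpha)(\mu+\alpha)/\beta}$ nor $F$; the correct relation is $F(\lambda,\mu)=e^{+\alpha(a+b)}k(a,b)=e^{\beta ab}$. The corrected statement should therefore read
\begin{equation*}
\langle pe^{\lambda\cdot},qe^{\mu\cdot}\rangle _{\mathcal H(k)}=p(\partial _{\lambda})q(\partial _{\mu})e^{(\lambda +\alpha )(\mu +\alpha )/\beta},
\end{equation*}
which is further corroborated downstream: in \Cref{prop:4_sol_2nd_minimisation} the matrix $K''$ with entries $\langle e^{\lambda_i\cdot},e^{\lambda_j\cdot}\rangle_{\mathcal H}$ is a Gram matrix and must be symmetric, which the printed exponent would violate. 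With this correction, your argument is complete; the remaining interchange step is exactly the paper's \Cref{lem:dx_product} applied coefficientwise by bilinearity.
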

\begin{proof}
	Let $a:=(\lambda +\alpha )/\beta$ and $b:=(\mu +\alpha )/\beta$. Then
	\begin{equation*}
		\langle e^{\lambda\cdot},e^{\mu\cdot}\rangle _{\mathcal H(k)}=e^{(\lambda -\alpha )(\mu +\alpha )/\beta}=e^{-\alpha (a+b)}k(a,b).
	\end{equation*}
	We have for any $k,l\in\mathbb N$
	\begin{equation*}
		\langle \cdot ^ke^{\lambda\cdot},\cdot ^le^{\mu\cdot}\rangle _{\mathcal H(k)}=\langle\partial _{\lambda}^ke^{\lambda\cdot},\partial _{\mu}^le^{\mu\cdot}\rangle _{\mathcal H(k)}.
	\end{equation*}
	By \Cref{lem:dx_product} and bilinearity of the inner product,
	\begin{equation*}
		\langle\partial _{\lambda}^ke^{\lambda\cdot},\partial _{\mu}^le^{\mu\cdot}\rangle _{\mathcal H(k)}=\partial _{\lambda}^k\partial _{\mu}^l\langle e^{\lambda\cdot},e^{\mu\cdot}\rangle _{\mathcal H(k)}=\partial _{\lambda}^k\partial _{\mu}^le^{(\lambda -\alpha )(\mu +\alpha )/\beta}
	\end{equation*}
	Again, by bilinearity of the inner product, we have for general polynomials $p$ and $q$
	\begin{equation*}
		\langle pe^{\lambda\cdot},qe^{\mu\cdot}\rangle _{\mathcal H(k)}=p(\partial _{\lambda})q(\partial _{\mu})e^{(\lambda -\alpha )(\mu +\alpha )/\beta}
	\end{equation*}
	as asserted.
\end{proof}
We perform the fitting procedure for $d\in\{ 1,\dots ,30\}$ and capture some of the results. Firstly, we fix a generic trading day in the data set and observe how well the fit behaves with growing dimensionality of the reduced model.
\begin{figure}[h!]
		\includegraphics[width=13cm, height=6cm]{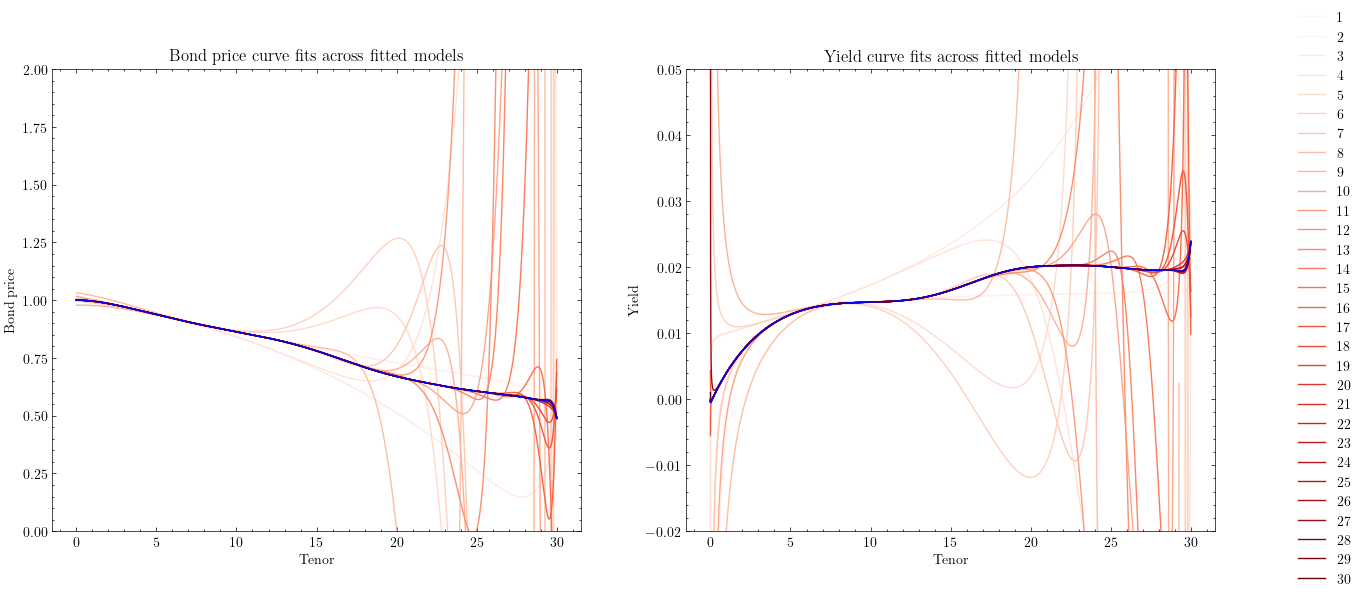}
		\caption{Curves produced by reduced models with full model (blue) on the 31st of December, 2021}\label{fig:curve_fit}
\end{figure}
In \Cref{fig:curve_fit} we see that the curves become indistinguishable to the eye starting around dimension $d=20$. In particular, the shape of the curve seems to be captured very well in the minimisation with the RKHS norm. This suggests two things: firstly, the RKHS seems to be well-suited for performing fitting procedures with a shape penalisation, and secondly, that the full model has a very high amount of redundancy. For the lower dimensional reduced models, we see an acceptable fit for shorter time to maturities, with a drop in performance towards the long-end. This may be due to the high density of low-term contracts available in the data set and a sparsity for medium-term contracts with fewer contacts towards the long end. Additionally, small deviations from the nominal price of $1$ at time to maturity $0$ results in large errors for the yield. This may suggest that a simple soft constraint may not be sufficient for model regression. Extrapolation with low-dimensional reduced models also seems to still have room for improvement. In order to quantify the quality of fit better, we present the fitting errors with respect to the contract prices available in the data set. Since we are optimising with respect to the exponents, which correspond to the time to maturity in the full model, a low amount of contracts with great times to maturity implies a lower number of exponents which control the rapid growth of the bond curve, hence a bigger error when extrapolating for time to maturity.
\begin{figure}[h!]
		\includegraphics[width=13cm, height=7cm]{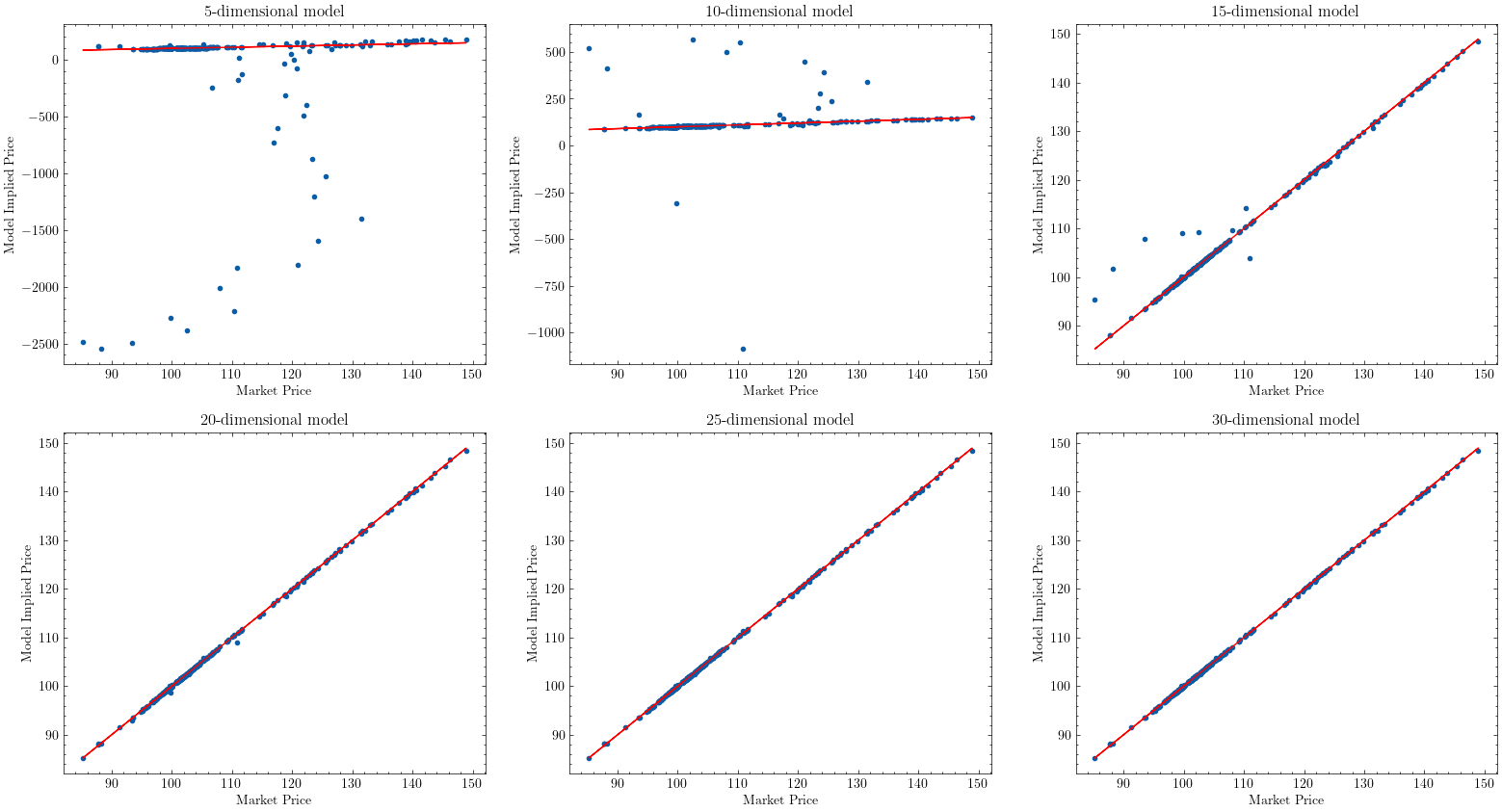}
		\caption{Price fits on reduced models on the 31st of December, 2021}\label{fig:price_fits1}
\end{figure}
In \Cref{fig:price_fits1} we proceed in increments of $5$ in dimensionality of the reduced model to observe how well the real market bond prices are approximated using our model. The prices in data set are plotted in a scatter plot against the model implied prices, along with the identity line for ease of comparison. Note here that we are comparing market prices with the reduced model prices, not the prices implied by the full model. It can be seen that the $20$-dimensional model and onwards already have negligible errors. For the lower-dimensional models, we observe discrepancies in the prices, which arise mostly due to the error with respect to long time to maturity contracts. It appears that towards the short-end, the model can capture prices nicely, but errors grow exponentially with longer tenors. We present the relative pricing errors of the reduced models as a function of time to maturity to emphasise this behaviour. The output has been truncated, with errors growing large for the low-dimensional models. 

\begin{figure}[h!]
		\includegraphics[width=13cm, height=7cm]{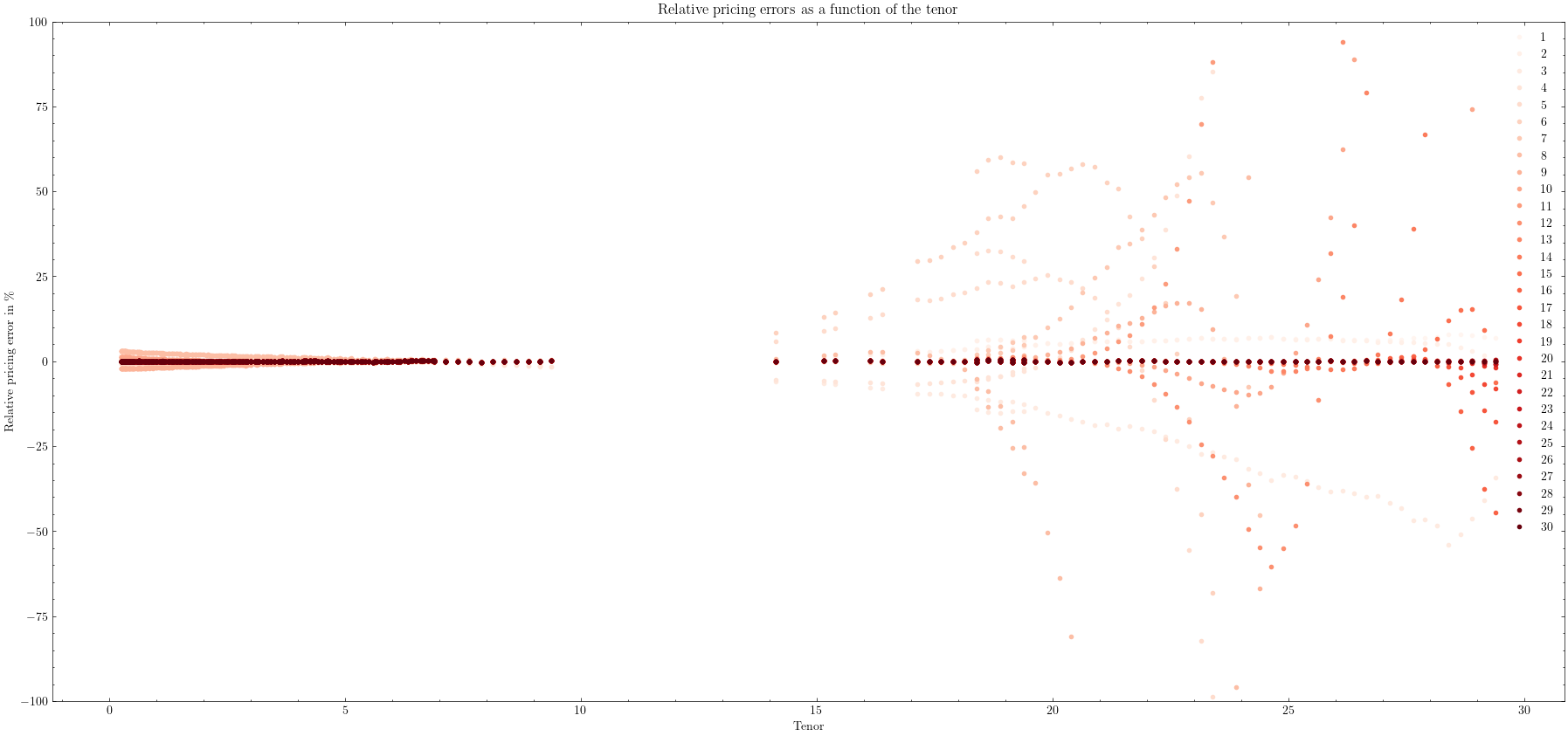}
		\caption{Relative pricing errors of reduced models (truncated at $100\%$) on the 31st of December, 2021}\label{fig:absolute_errs}
\end{figure}
\Cref{fig:absolute_errs} shows relative pricing errors for contracts with an average nominal value of $P\approx 100$. Errors for the short-end contracts appear well-behaved, even for the low-dimensional reduced models, as opposed to the long-end contracts. We observe here again the higher density in the region of short-term maturities which may be a contributing factor to the performance of the model. It can also be seen that the higher-dimensional reduced models produce an almost perfect fit to the market prices for all observed maturities. Performing a second step optimisation for higher dimensions of reduced models seems to be unnecessary. Finally, we capture the root mean square errors of fitted yields across all contracts on a given trading as a time series evolution on all days in our dataset for the reduced models. 
\begin{figure}[h!]
		\includegraphics[width=13cm, height=7cm]{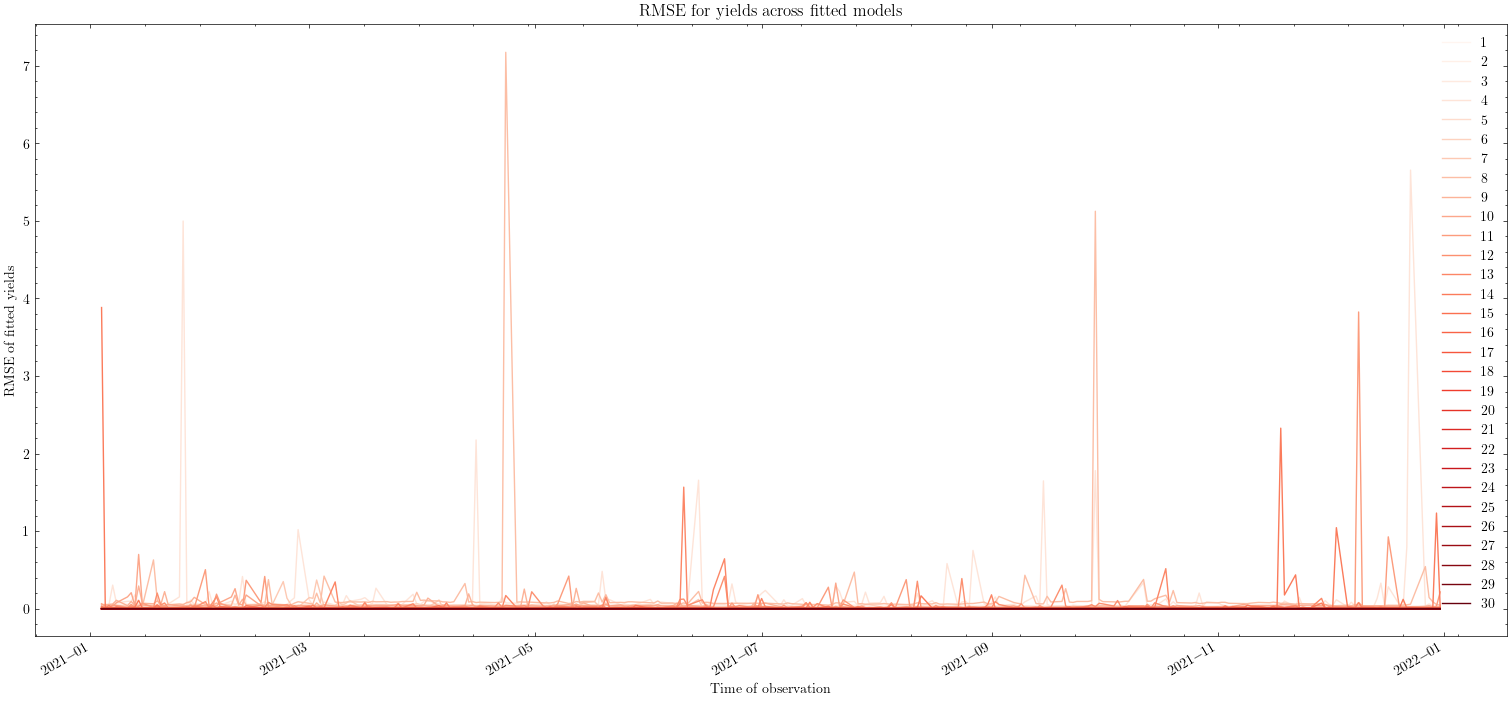}
		\caption{Root mean square errors of fitted yields on each trading day in the observed time window}\label{fig:mse}
\end{figure}
We again observe that the higher-dimensional reduced models produce a very good fit while the lower-dimensional models have some errors. Again, these occur mainly in the long-term contracts. The long-end of the curve seems to require higher dimensionality to be captured well. To better see the behaviour of errors, we compare the root mean square error of the higher-dimensional models, where the error is the lowest to the full model.
\begin{figure}[h!]
		\includegraphics[width=13cm, height=7cm]{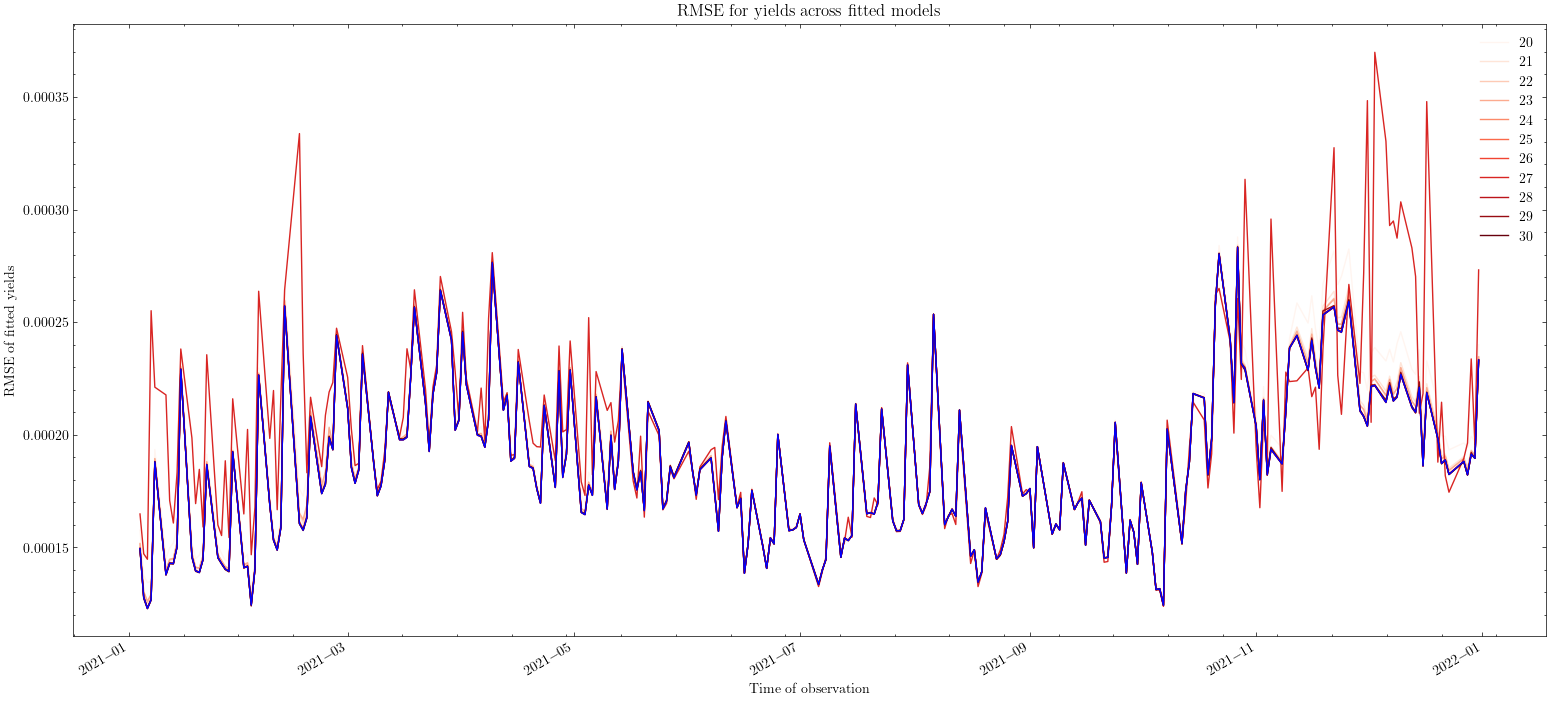}
		\caption{Comparison of the RMSE of the best performing models to the full model (blue)}\label{fig:mse2}
\end{figure}

To conclude our analysis, we consider the stochastic driving factors inherent to our model. Indeed, we have already observed a high amount of redundancy in going from the full model to the reduced model. Next, we want to see if we can still reduce the models to a ``minimal'' driving model by identifying correlation in the factors. We begin by noting that since the reduced models obtained in the second step optimisation are of the form \eqref{eq:4_base_model}, we may use the coefficients observed during the fit as a particular realisation of the underlying stochastic driving process, as in the case of the full model, that is we have
\begin{equation*}
	a_{t,i}=Z_{t,i}(\omega _0)\quad\text{for some }\omega _0\in\Omega.
\end{equation*}
Furthermore, we obtain as a result of the optimisation procedure the exponents of the exponential summands, which correspond exactly to the Eigenvalues of the matrix $M$. Indeed, with the model specification from \Cref{cor:4_diagonalisable_m}, we obtain a calibration of the curve components of the model, as well as the drift of the stochastic process due to the no-arbitrage quadratic drift condition. We thus do not have to estimate the drift, which can often be quite complex, but are left with estimation of the diffusion matrix $\sigma$, a far more tractable problem. For our purposes, we simply estimate the covariance matrices of the time series $a_t$ obtained from the fit and use these as a bootstrap for a constant diffusion matrix $\sigma$ for our simulations. 
\begin{figure}[h!]
		\includegraphics[width=13cm, height=4cm]{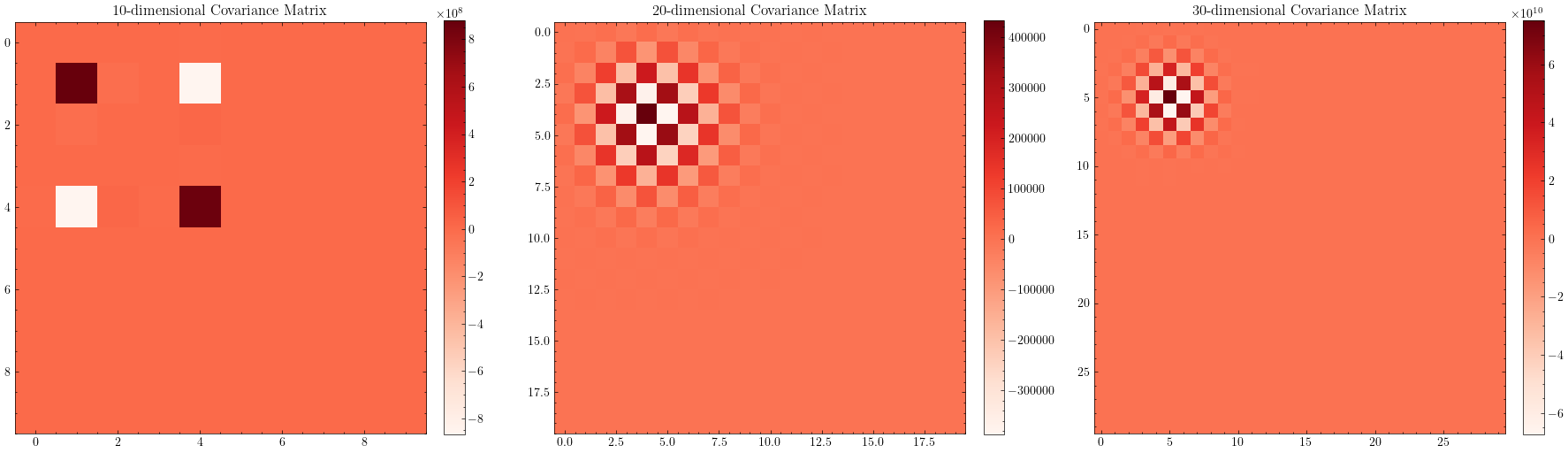}
		\caption{Estimated covariance matrices of extracted stochastic coefficients.}\label{fig:covs}
\end{figure}
From the heat maps in \Cref{fig:covs} we confirm our suspicion that some factors show strong correlation, even for the lower-dimensional models. This behaviour of course is amplified for the high-dimensional model. This suggests that there are still factors which could possibly be neglected without losing too much performance. However, adding our results for the fitting errors for the low-dimensional models, we see that our second-step optimisation appears to be unable to recognise which factors are negligible. It seems some of the correlated factors perform the task of a control factor for the higher exponents in the model. When left out, this leads to the high errors in the long-end of the curve.

To simulate our stochastic process, we use the drift condition and initialise $\sigma$  with the extracted covariance matrices. The resulting paths are captured in \Cref{fig:simulations}. Here, we notice that the dynamics of the simulated paths look reasonable when compared to the observed time series of the coefficients. This suggests that the process with no-arbitrage dynamics, even with a constant diffusion factor, seem to reflect the data observed on the market reasonably well. 
\begin{figure}[h!]
	\includegraphics[width=12.5cm, height=7cm]{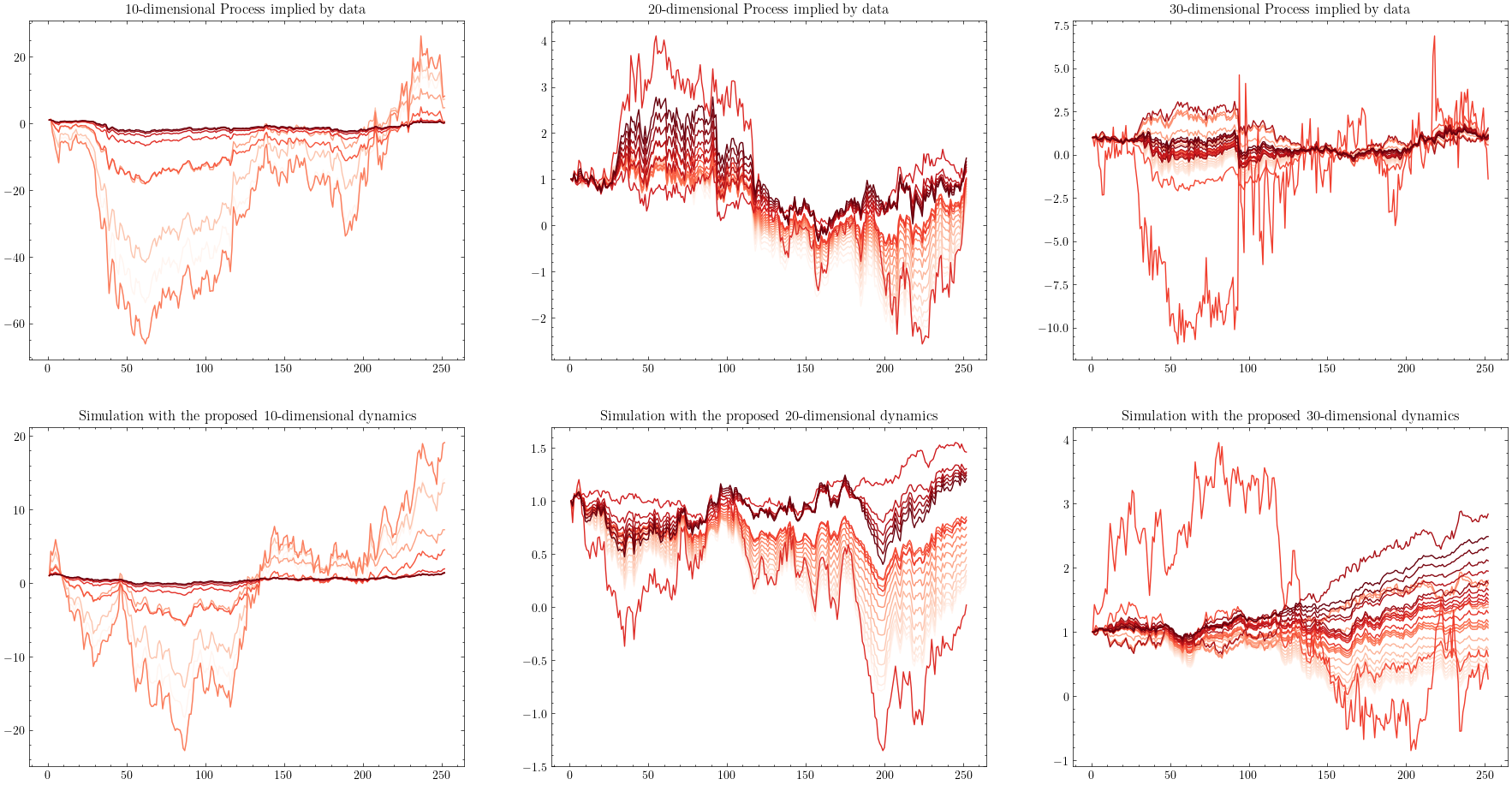}
	\caption{Extracted paths vs. simulated paths of stochastic processes calibrated using the covariance data normalised by their starting values.}\label{fig:simulations}
\end{figure}
Due to the simplicity of the model, it is now easy to simulate the entire term structure for arbitrary time frames. However, due to the observed bad extrapolation properties, simulating curves for tenors which the model has not been trained on does not yield reasonable results. We provide our results for simulations across a time frame of $252$ days, the same as in the learning set.
\begin{figure}[h!]
	\includegraphics[width=13cm, height=10cm]{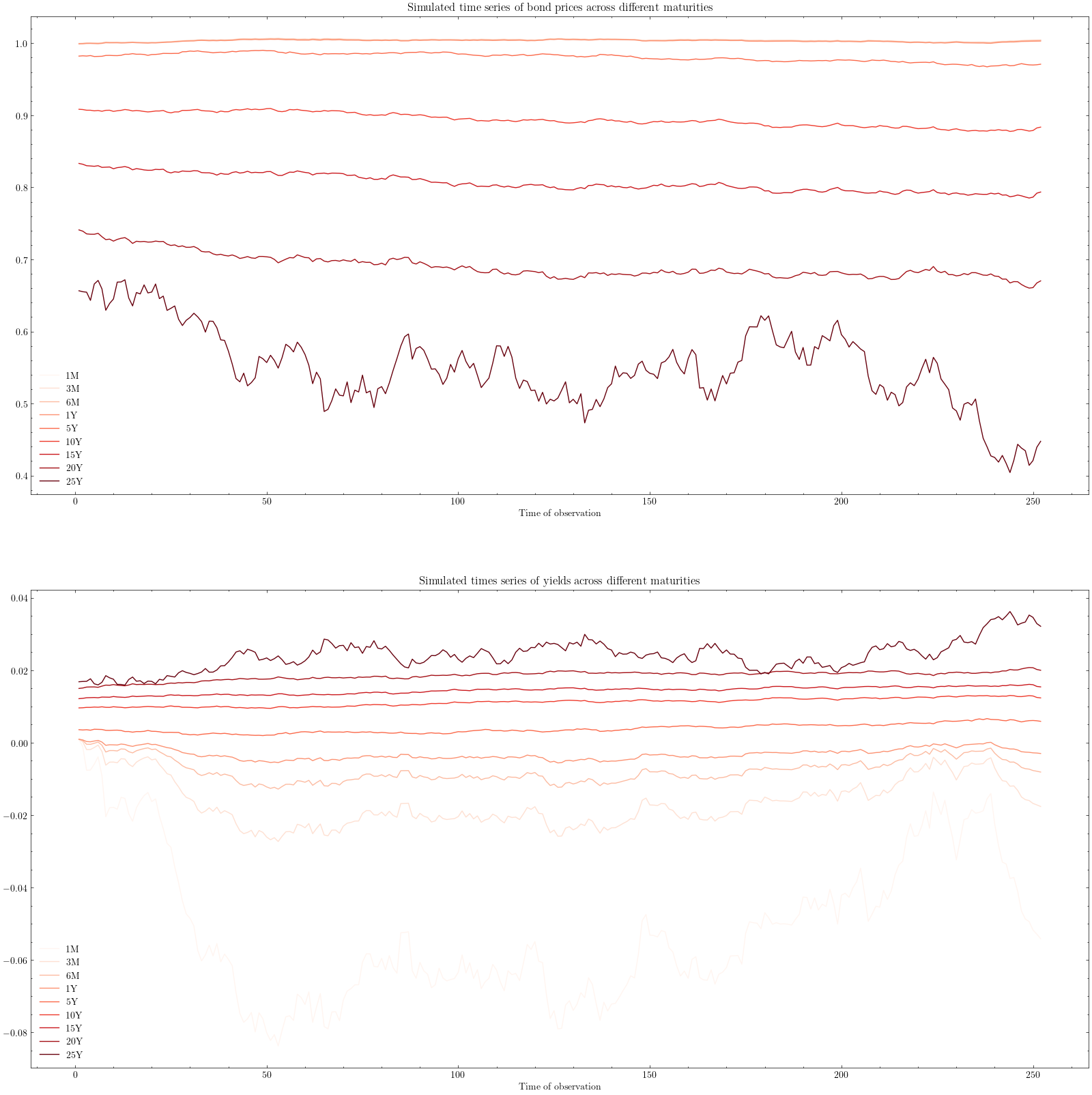}
	\caption{Simulation of time series of bond prices and yields for contracts with several different tenors.}\label{fig:simulations2}
\end{figure}

We observe that for longer tenors, the time series becomes more volatile. This is mainly due to fall in performance of the model for long tenors. Indeed, we notice a deterioration in the model starting at tenors around $25$ years. This can be attributed to the sparsity of long tenor contracts available in the training data. In absence of such data, one solution might be to include synthetic contracts with price points around the same as existing contracts. Furthermore, adding synthetic contracts with tenors far exceeding those observed in the data can serve as a soft constraint in the fitting process which improves extrapolation. Towards the short end, implementing a hard constraint might lead to an improvement on the results with respect to the yield curve, as small deviations can result in large errors.
\begin{figure}[h!]
	\includegraphics[width=13cm, height=6cm]{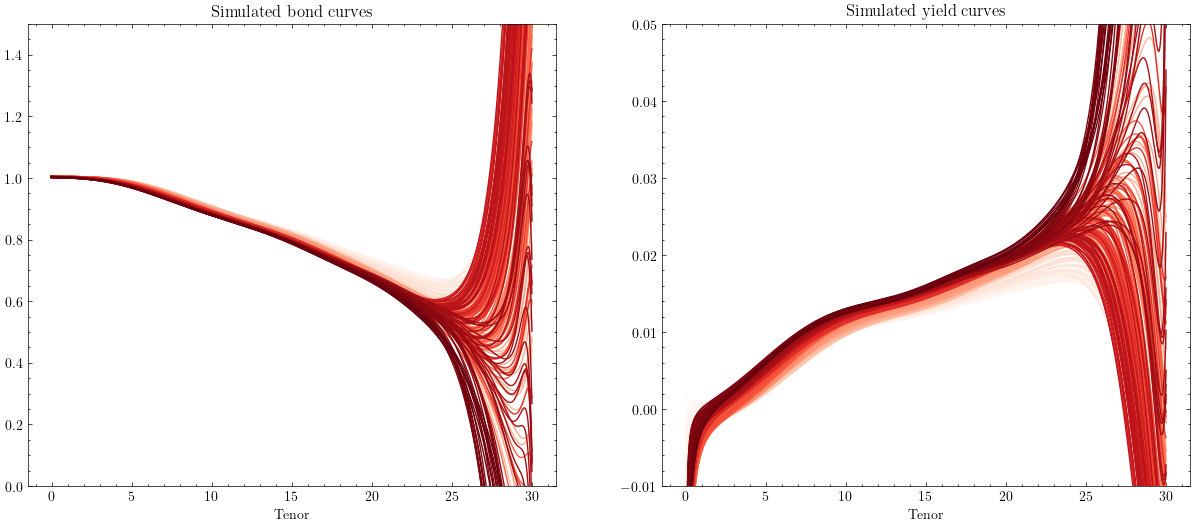}
	\caption{Simulation of bond price curves for every observation point in the time frame.}\label{fig:simulations3}
\end{figure}

\begin{remark}
    One possible way to facilitate better extrapolative properties in case of the exponential kernel is by restricting the choice of model parameters $\alpha$ and $\beta$ for the kernel regression. Observe in the full model specification the curve $\hat h_t$ is a linear combination of exponential functions, that is
    \begin{equation*}
        \hat h_t =\sum _{y\in\mathcal X_t}c_{y,t}e^{\beta y(\cdot)-\alpha (y+\cdot)}.
    \end{equation*}
    Thus, there is a constant $C\in\R$ with $\lvert h_t(x)\rvert\leq Ce^{y_{\max}x}$ for all $x,t\geq 0$, where $y_{\max}:=\max _t\max \mathcal X_t$.

    Thus, we may ensure the function $\hat h_t$ stays bounded for all $t\geq 0$ if we can control the exponents of the kernel. We observe that as $x$ grows, the exponential kernel stays bounded for any fixed $y\geq 0$ if and only if $\beta xy-\alpha (x+y)<0$. Assuming $x,y>0$, we may write 
    \begin{equation*}
        \alpha \left(\frac1x +\frac1y\right) >\beta.
    \end{equation*}
    As $x\rightarrow\infty$, we may rewrite this as
    \begin{equation*}
        \frac\alpha\beta > y.
    \end{equation*}
    Since the inequality has to hold for all values $y>0$ supplied by the data, we may plug in $y_{\max}$ and obtain a parameter set that ensures that all estimated curves for the bond price are bounded for all time:
    \begin{equation*}
        \Theta :=\set[\Big]{ (\alpha ,\beta)\given \frac\alpha\beta >y_{\max}}.
    \end{equation*}
    For our application, cross-validation reveals that best results are not within this parameter set. Restricting the parameter search to the set $\Theta$ may thus result in curves suited for extrapolation for long-term maturities with possible slight trade-off in performance on the fitted prices. We do not pursue this treatment in the analysis.
\end{remark}
\subsection{Comparison to standard methodology}
In this section, we want to provide a comparison between our method of kernel regression and a subsequent model reduction and a naive regression using standard methods with a fixed parametric family of exponential curves as the reference model.

For the naive model, we fix the space
\begin{equation*}
	\mathcal E^d:=\left\{ \sum _{i=1}^dc_{i}e^{\lambda _i\cdot}: c_i,\lambda _i\in\R\text{ for }i=1,\dots ,d\right\}.
\end{equation*}
Our goal is to solve the following joint optimisation problem:
\begin{equation*}
	\min _{(g_t)_{t=1}^T\subset\mathcal E^d}\frac{1}{T}\sum _{t=1}^T\vert P_t-C_tg_t\vert ^2
\end{equation*}
for observed price vectors $\{ P_1,\dots ,P_T\}$ and cashflow matrices $\{ C_1,\dots ,C_T\}$. Similarly to the model reduction optimisation scheme, this reduces to a minimisation over the exponents $\{\lambda _1,\dots ,\lambda _d\}$. Unfortunately, this is a non-convex problem with no obvious existence and uniqueness results. To facilitate better fits, we once again add the soft constraint $g(0)=1$ by adding a synthetic contract with tenor $0$ and value $1$. To improve performance, we make the informed choice of starting values close to the optimal exponents observed in the kernel regression. Comparing run times for the naive regression and the kernel regression with subsequent model reduction, we note that while the kernel regression for the full model takes $1-2$ minutes for the full data sample of $1$ year, as well as about $6$ hours for model reduction for all dimensions from $1$ to $30$, the naive regression takes $4-6$ times as long for the $30$-dimensional regression only on the same machine. We compare fitting results in \Cref{fig:naive_comp}. 
\begin{figure}[h!]
	\includegraphics[width=13cm, height=7cm]{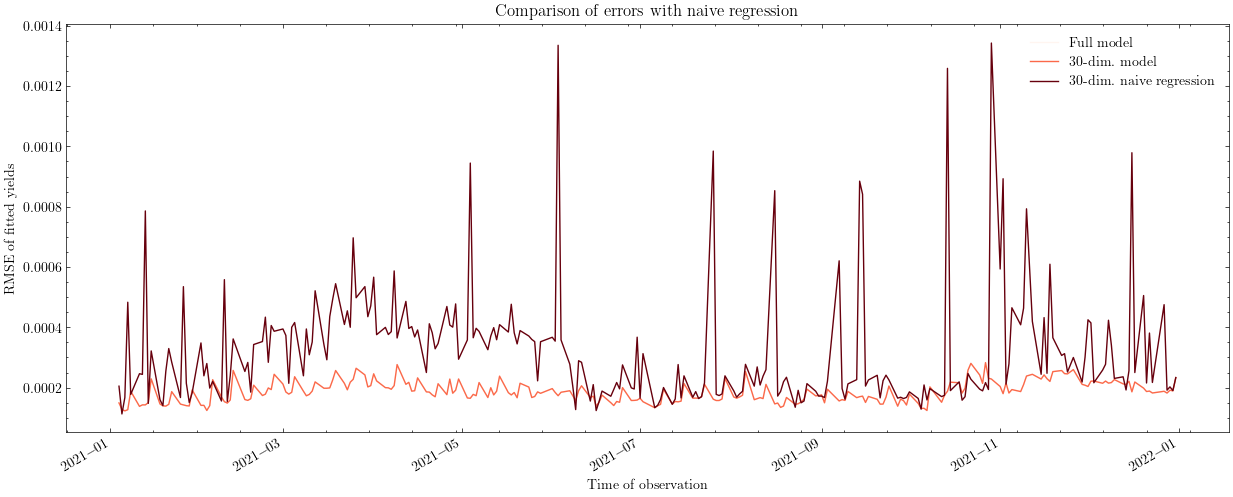}
	\caption{Comparison between kernel regression and naive regression.}\label{fig:simulations4}\label{fig:naive_comp}
\end{figure}
We observe that the naive regression exhibits more erratic behaviour than our approach. In particular, we note that for the observed time range, both the full model and the $30$-dimensional model show smaller root mean square errors for the fitted yields than a naive regression, while also being significantly faster. The $30$-dimensional reduced model is virtually indistinguishable from the full model on the observed time range.
\section{Conclusion and research outlook}\label{sec:conclusion}
In this paper, we have taken the theory of the discount framework due to \cite{filipovic_discount} and used the theory of RKHS to reduce the infinite-dimensional curve fitting problem to a finite-dimensional kernel regression with kernels that are consistent for discount models fulfilling the NAFLVR market condition and a linearity assumption. This allowed us to not only find suitable curve families with very good fit results for the resulting static estimation problem from trading day to trading, but also calibrate an affine stochastic model for the discount. Using a kernel regression and subsequent dimensionality reduction scheme, we were able to fully specify a consistent stochastic model and through the use of a simple bootstrapping method using the extracted covariances of the paths of the diffusion process, as well as the drift implied by our theoretical results, we were able to fully calibrate the model for simulation purposes for arbitrary time frames.

The main focus for this paper was to introduce families of reproducing kernels consistent with the existing discount theory. Various theoretical points still require a more rigorous analysis. In particular, the global existence of the SDE for consistent discount models is not entirely clear due to the presence of quadratic drift. The possibility of more complex models not fulfilling the \ref{LA} specification is also of interest. For the estimation problem and model calibration, investigating more general polynomial-exponential kernels could possibly deliver even better results for both fitting the curves to the available data, as well as in the subsequent calibration of the stochastic model. As observed in our analysis, the proposed dimensionality reduction is not suitable for removing dependence in the extracted paths of the diffusion. Additional steps to further reduce the model to its minimal representation with independent paths would further offer improvements to the specification of the stochastic model and simulations. Finally, adding regularisation terms in the several optimisation steps could potentially enhance the extrapolation properties with respect to the tenor of the curves.

\newpage
\appendix
\section{Reproducing kernel Hilbert spaces}\label{sec:rkhs_theory}
In this section, we provide some of the basic theory of reproducing kernel Hilbert spaces which is used in later parts of the paper. In particular, we provide references for some of the results used throughout the paper. Most of the theory and results contained in this section can be found in \cite{aronszajn} and \cite{paulsen_rkhs}. We begin with the following
\begin{definition}\label{def:4_rkhs}
	Let $\mathcal X$ be a set and $\mathbb F$ denote either $\R$ or $\mathbb C$ and denote by $\mathfrak{F}(\mathcal X,\mathbb F)$ the set of functions $f:\mathcal X\rightarrow\mathbb F$. A subset $\mathcal H\subseteq\mathcal F(\mathcal X,\mathbb F)$ is called a \emph{reproducing kernel Hilbert space} (RKHS) if 
	\begin{enumerate}
		\item $\mathcal H$ is a vector space of $\mathfrak F(\mathcal X,\mathbb F)$.
		\item $\mathcal H$ is a Hilbert space with some inner product $\langle \cdot,\cdot\rangle _{\mathcal H}$.
		\item For every $x\in \mathcal X$, the evaluation functional $\delta _x:\mathcal H\rightarrow\mathbb F$ defined as $\delta _x:f\mapsto f(x)$ is bounded.
	\end{enumerate}
\end{definition}
For the remainder of this section let $\mathcal H\subseteq\mathfrak F(\mathcal X, \mathbb F)$ be an RKHS with inner product $\langle\cdot ,\cdot\rangle _{\mathcal H}$.
In the case of an RKHS, by the Riesz representation theorem and the dual representation of a Hilbert space, there exists a function $k_x\in\mathcal H$, such that
	\begin{equation}\label{eq:4_reproducing}
		f(x)=\langle f,k_x\rangle_{\mathcal H}\quad\text{for any }x\in\mathcal X.
	\end{equation}
\begin{definition}
	The function $k(x,y):=\langle k_x,k_y\rangle _{\mathcal H}$ is called the \emph{reproducing kernel} of $\mathcal H$.
\end{definition}
We also call \Cref{eq:4_reproducing} the \emph{reproducing property} of the reproducing kernel $k$. This property allows us to derive some simple, but powerful results on RKHS, such as denseness of the kernel function (see, e.g.~\cite[Proposition 2.1]{paulsen_rkhs}):
\begin{proposition}\label{prop:4_denseness}
	Let $\mathcal H$ be an RKHS with reproducing kernel $k$ on the set $\mathcal X$. Then the linear span of the set $\{ k_x: x\in \mathcal X\}$ is dense in $\mathcal H$.
\end{proposition}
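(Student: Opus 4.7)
The plan is to argue by showing that the orthogonal complement of the closed linear span of the kernel sections is trivial, which, in a Hilbert space, forces the span itself to be dense. This is a direct application of the reproducing property together with the basic decomposition theorem for Hilbert spaces.

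First I would set $V := \overline{\operatorname{Span}\{k_x : x \in \mathcal{X}\}}$, the closure taken with respect to the norm of $\mathcal{H}$. Since $V$ is a closed linear subspace of the Hilbert space $\mathcal{H}$, we have the orthogonal decomposition $\mathcal{H} = V \oplus V^{\perp}$, and the claim $V = \mathcal{H}$ is equivalent to $V^{\perp} = \{0\}$.

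Next I would pick an arbitrary $f \in V^{\perp}$ and show $f = 0$. By definition of $V^{\perp}$, we have $\langle f, g \rangle_{\mathcal{H}} = 0$ for every $g \in V$, and in particular $\langle f, k_x \rangle_{\mathcal{H}} = 0$ for every $x \in \mathcal{X}$. Applying the reproducing property \eqref{eq:4_reproducing}, this gives $f(x) = \langle f, k_x \rangle_{\mathcal{H}} = 0$ for every $x \in \mathcal{X}$. Since elements of $\mathcal{H}$ are honest functions on $\mathcal{X}$ (not equivalence classes), this means $f$ is the zero function, hence the zero element of $\mathcal{H}$.

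There is no real obstacle here: the argument is essentially a two-line application of the reproducing property and the characterisation of dense subspaces via orthogonal complements. The only subtlety worth mentioning in the write-up is that, in contrast to $L^2$-type Hilbert spaces, pointwise vanishing of $f$ on all of $\mathcal{X}$ genuinely implies $f = 0$ in $\mathcal{H}$, which is precisely the content of \Cref{def:4_rkhs} combined with the Riesz representation that produced the kernel sections $k_x$. Thus $V^{\perp} = \{0\}$, and consequently $\operatorname{Span}\{k_x : x \in \mathcal{X}\}$ is dense in $\mathcal{H}$, as required.
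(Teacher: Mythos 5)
Your proof is correct and complete: passing to the closed span $V$, noting $\mathcal{H}=V\oplus V^{\perp}$, and using the reproducing property to show any $f\in V^{\perp}$ vanishes pointwise (hence is the zero element, since members of an RKHS are genuine functions) is exactly the standard argument. The paper itself gives no proof but defers to \cite[Proposition 2.1]{paulsen_rkhs}, and the proof there is precisely this orthogonal-complement argument, so your approach coincides with the intended one.
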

Once can also derive the following convenient convergence properties, which we will utilise throughout the paper:
\begin{lemma}\label{lem:convergence}
	Let $\mathcal H$ be an RKHS, let $f\in\mathcal H$ and let $(f_n)_{n\geq 0}\subseteq\mathcal H$ be a sequence of functions in $\mathcal H$.
	\begin{enumerate}
		\item If $f_n\rightarrow f$ in norm, then $f_n\rightarrow f$ pointwise.
		\item If $f_n\rightarrow f$ weakly, then $f_n\rightarrow f$ pointwise. If, additionally $\Vert f_n\Vert _{\mathcal H}< \infty$ for all $n\geq 0$, then $f_n\rightarrow f$ pointwise implies $f_n\rightarrow f$ weakly. 
	\end{enumerate}
\end{lemma}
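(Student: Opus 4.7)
The proof plan is to exploit two ingredients: the reproducing property $f(x)=\langle f,k_x\rangle_{\mathcal H}$ and the denseness of $\mathrm{span}\{k_x:x\in\mathcal X\}$ stated in Proposition A.2. Part (i) and the first half of part (ii) follow immediately by testing against the kernel function $k_x$, and the converse in (ii) requires an $\epsilon/3$-type argument using the boundedness of $\|f_n\|_{\mathcal H}$ (which I read as $\sup_n\|f_n\|_{\mathcal H}<\infty$, since $f_n\in\mathcal H$ already implies finiteness of each individual norm).

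For part (i), fix $x\in\mathcal X$. By the reproducing property and Cauchy-Schwarz,
\begin{equation*}
|f_n(x)-f(x)|=|\langle f_n-f,k_x\rangle_{\mathcal H}|\leq \|f_n-f\|_{\mathcal H}\,\|k_x\|_{\mathcal H},
\end{equation*}
and since $\|k_x\|_{\mathcal H}=\sqrt{k(x,x)}$ is a finite constant (depending on $x$), norm convergence forces the right-hand side to zero. For the first assertion of (ii), weak convergence $f_n\rightharpoonup f$ applied to the test vector $g=k_x\in\mathcal H$ gives directly $f_n(x)=\langle f_n,k_x\rangle_{\mathcal H}\to\langle f,k_x\rangle_{\mathcal H}=f(x)$.

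The main (mildly) nontrivial step is the converse in (ii). Let $M:=\sup_n\|f_n\|_{\mathcal H}<\infty$, let $g\in\mathcal H$, and let $\epsilon>0$. By Proposition A.2 there exist $c_1,\dots,c_m\in\mathbb F$ and $x_1,\dots,x_m\in\mathcal X$ such that $\tilde g:=\sum_{i=1}^m c_i k_{x_i}$ satisfies $\|g-\tilde g\|_{\mathcal H}<\epsilon$. Decompose
\begin{equation*}
\langle f_n-f,g\rangle_{\mathcal H}=\langle f_n-f,\tilde g\rangle_{\mathcal H}+\langle f_n-f,g-\tilde g\rangle_{\mathcal H}.
\end{equation*}
The first summand equals $\sum_{i=1}^m \overline{c_i}(f_n(x_i)-f(x_i))$ and converges to $0$ as $n\to\infty$ by pointwise convergence at the finitely many points $x_1,\dots,x_m$. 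The second summand is bounded by $(\|f_n\|_{\mathcal H}+\|f\|_{\mathcal H})\|g-\tilde g\|_{\mathcal H}\leq (M+\|f\|_{\mathcal H})\epsilon$ uniformly in $n$. Therefore $\limsup_n|\langle f_n-f,g\rangle_{\mathcal H}|\leq (M+\|f\|_{\mathcal H})\epsilon$, and since $\epsilon$ was arbitrary this limit is $0$, which yields weak convergence.

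There is essentially no hard step; the closest thing to an obstacle is the correct reading of the ``$\|f_n\|_{\mathcal H}<\infty$'' hypothesis as uniform boundedness, which is what makes the $\epsilon/3$ splitting above actually produce a bound that is uniform in $n$ and can be absorbed against a single choice of $\tilde g$.
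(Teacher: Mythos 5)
Your proof is correct, but there is nothing in the paper to compare it against: the lemma's proof is announced as ``See \Cref{app:b},'' yet no proof of \Cref{lem:convergence} actually appears in \Cref{app:b}, in \Cref{app:c}, or anywhere else in the paper --- the reference is dangling, and your argument fills that hole. What you wrote is the standard argument (and the one the paper's general reference \cite{paulsen_rkhs} would give): Cauchy--Schwarz against $k_x$ for part (i), testing the weak limit against $k_x$ for the forward half of (ii), and, for the converse, splitting an arbitrary $g\in\mathcal H$ into a finite linear combination of kernel functions (via \Cref{prop:4_denseness}) plus a remainder of norm at most $\epsilon$, with uniform boundedness absorbing the remainder. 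Your insistence on reading ``$\Vert f_n\Vert _{\mathcal H}<\infty$ for all $n\geq 0$'' as $\sup _n\Vert f_n\Vert _{\mathcal H}<\infty$ is not merely charitable, it is necessary: as literally written the hypothesis is vacuous (every element of $\mathcal H$ has finite norm), and the converse in (ii) is then false. A counterexample lives in the paper's own space $\mathcal H(\exp)$: take $f_n(x):=\sqrt{n}\,x^n/\sqrt{n!}$, which converges to $0$ pointwise with $\Vert f_n\Vert _{\mathcal H(\exp)}=\sqrt{n}$, and $g:=\sum _{k\geq 0}2^{-k/2}x^{2^k}/\sqrt{(2^k)!}\in\mathcal H(\exp)$; then $\langle f_{2^k},g\rangle _{\mathcal H(\exp)}=1$ for every $k$, so $f_n$ does not converge weakly to $0$. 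Your reading is also the one consistent with the lemma's only use in the paper (the proof of \Cref{prop:4_existence_min}), where the sequence in question is weakly convergent and hence norm-bounded.
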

\begin{proof}
    See \Cref{app:b}.
\end{proof}

We provide one of the more important characterisations of functions which are candidates to become reproducing kernels of some RKHS.
\begin{definition}\label{def:4_kernel_function}
	Let $\mathcal X$ be a set and let $f:\mathcal X\times \mathcal X\rightarrow\mathbb F$ be a symmetric function, that is $f(x,y)=f(y,x)$ for all $x,y\in\mathcal X$. $f$ is called \emph{positive semidefinite} if for any $n\in\mathbb N$ and any choice $\{ x_1,\dots ,x_n\}\subseteq \mathcal X$ of distinct $n$ points the matrix $(k(x_i,x_j))_{i,j=1}^n$ is positive semidefinite. In that case, we will also refer to $f$ as a \emph{kernel function}.
\end{definition}
\begin{remark}
	Note that positive semidefiniteness implies that $f(x,y)\geq 0$ for all $x,y\in \mathcal X$ and if $\mathbb F =\mathbb C$, then it also implies that $f$ is symmetric.
\end{remark}
Kernel functions are fundamental in the sense that they correspond to reproducing kernel of some RKHS. Indeed, we have the following two important results:
\begin{proposition}
	Let $\mathcal H$ be an RKHS on the set $\mathcal X$ with reproducing kernel $k$. Then $k$ is a kernel function.
\end{proposition}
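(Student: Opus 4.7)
The plan is to unpack the definition of $k$ and exploit the fact that $k(x,y)$ is, by construction, built from inner products of the representers $k_x \in \mathcal{H}$. The two things to check are symmetry and the positive semidefiniteness of every Gram-type matrix $(k(x_i, x_j))_{i,j=1}^n$.

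For symmetry, I would simply note that since the scalars in \Cref{def:4_rkhs} are $\mathbb F$ (taken to be $\mathbb R$ in the context relevant to the paper) and the Hilbert space inner product is symmetric in the real case, we have
\begin{equation*}
k(x,y) = \langle k_x, k_y \rangle_{\mathcal H} = \langle k_y, k_x \rangle_{\mathcal H} = k(y,x),
\end{equation*}
so $k$ is symmetric in the sense of \Cref{def:4_kernel_function}. (If $\mathbb F = \mathbb C$, conjugate symmetry of the inner product gives $k(x,y) = \overline{k(y,x)}$, which together with the remark following \Cref{def:4_kernel_function} is the appropriate analogue; however the proposition as stated only requires the symmetric real case.)

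For positive semidefiniteness, I would fix any $n \in \mathbb N$, any distinct points $x_1,\dots,x_n \in \mathcal X$, and any scalars $c_1,\dots,c_n \in \mathbb F$. The key computation is to rewrite the quadratic form associated with the matrix $(k(x_i,x_j))_{i,j=1}^n$ as the squared norm of a single vector in $\mathcal H$, using bilinearity (resp.\ sesquilinearity) of the inner product:
\begin{equation*}
\sum_{i=1}^n \sum_{j=1}^n c_i c_j\, k(x_i, x_j)
= \sum_{i,j} c_i c_j \langle k_{x_i}, k_{x_j} \rangle_{\mathcal H}
= \Big\langle \sum_{i=1}^n c_i k_{x_i},\ \sum_{j=1}^n c_j k_{x_j} \Big\rangle_{\mathcal H}
= \Big\| \sum_{i=1}^n c_i k_{x_i} \Big\|_{\mathcal H}^2 \geq 0.
\end{equation*}
This shows that the matrix $(k(x_i,x_j))_{i,j=1}^n$ is positive semidefinite.

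Since both properties required by \Cref{def:4_kernel_function} have been verified, $k$ is a kernel function. There is essentially no obstacle in this proof; the only subtlety is being careful about the real versus complex setting in the symmetry step, which is resolved by the conventions of \Cref{def:4_rkhs} and the remark after \Cref{def:4_kernel_function}.
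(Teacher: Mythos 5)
Your proof is correct and follows essentially the same route as the paper, which simply observes that $k(x,y)=\langle k_x,k_y\rangle_{\mathcal H}$ and invokes the symmetry and positivity properties of the inner product; you merely spell out the quadratic-form computation $\sum_{i,j}c_ic_j\,k(x_i,x_j)=\bigl\Vert\sum_i c_ik_{x_i}\bigr\Vert_{\mathcal H}^2\geq 0$ that the paper leaves implicit. No gaps.
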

\begin{proof}
	We have for any $x,y\in \mathcal X$ that $k(x,y)=\langle k_x,k_y\rangle _{\mathcal H}$. Therefore, symmetry and positive definiteness follows from the properties of the inner product.
\end{proof}
The converse statement is a fundamental result due to Moore (see, e.g. \cite{aronszajn}, \cite[Theorem 2.14.]{paulsen_rkhs}).
\begin{theorem}\label{thm:4_aronszajn_moore}
	Let $\mathcal X$ be a set and let $k:\mathcal X\times \mathcal X\rightarrow\mathbb F$ be a kernel function. Then there exists a unique RKHS $\mathcal H$ such that $k$ is the reproducing kernel for $\mathcal H$.
\end{theorem}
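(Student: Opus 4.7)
The plan is to carry out the classical Moore--Aronszajn construction. Start by defining the auxiliary space
\begin{equation*}
	\mathcal H_0 := \mathrm{span}\{k_y : y \in \mathcal X\} \subseteq \mathfrak F(\mathcal X,\mathbb F),
\end{equation*}
where $k_y:x\mapsto k(x,y)$. On $\mathcal H_0$, define the sesquilinear form
\begin{equation*}
	\left\langle \sum_{i=1}^n a_i k_{y_i},\, \sum_{j=1}^m b_j k_{z_j}\right\rangle_0 := \sum_{i=1}^n\sum_{j=1}^m a_i \overline{b_j}\, k(z_j, y_i).
\end{equation*}
The first step is to show this is well defined, i.e.\ independent of the representation of the elements. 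This follows because for any $f=\sum_i a_i k_{y_i}\in \mathcal H_0$ and any $z\in\mathcal X$ one has $\langle f, k_z\rangle_0 = \sum_i a_i k(z,y_i) = f(z)$, which depends only on $f$; bilinearity then extends this to both arguments. The reproducing identity $\langle f,k_z\rangle_0 = f(z)$ falls out for free.

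Next, show $\langle\cdot,\cdot\rangle_0$ is an inner product. Symmetry (or Hermitian symmetry) is immediate from the symmetry of $k$. Positive \emph{semi}definiteness of the form is exactly positive semidefiniteness of the matrix $(k(y_i,y_j))_{i,j}$ from \Cref{def:4_kernel_function}. To upgrade to strict definiteness, apply the Cauchy--Schwarz inequality (valid for any positive semidefinite sesquilinear form): for every $f\in\mathcal H_0$ and $z\in\mathcal X$,
\begin{equation*}
	|f(z)|^2 = |\langle f,k_z\rangle_0|^2 \leq \langle f,f\rangle_0 \cdot k(z,z).
\end{equation*}
Hence $\langle f,f\rangle_0 = 0$ forces $f\equiv 0$ as a function on $\mathcal X$, so $\langle\cdot,\cdot\rangle_0$ is positive definite on $\mathcal H_0$.

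Now complete $\mathcal H_0$ in the norm $\|\cdot\|_0$ to obtain an abstract Hilbert space $\mathcal H$. The subtle step will be realizing this completion as a space of genuine functions on $\mathcal X$. The plan is to use the inequality above: every $\|\cdot\|_0$-Cauchy sequence $(f_n)\subseteq\mathcal H_0$ is, for each fixed $z\in\mathcal X$, Cauchy in $\mathbb F$ since $|f_n(z)-f_m(z)|\leq \|f_n-f_m\|_0\sqrt{k(z,z)}$. So define a map $\Phi$ from the completion into $\mathfrak F(\mathcal X,\mathbb F)$ by taking pointwise limits; the same inequality shows $\Phi$ is injective (a Cauchy sequence converging to $0$ pointwise must have norm-limit $0$, as the inner product with any $k_z$ tends to $0$ and $\{k_z\}_{z\in\mathcal X}$ has dense span). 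Transporting the inner product through $\Phi$ gives a Hilbert space $\mathcal H\subseteq\mathfrak F(\mathcal X,\mathbb F)$ containing each $k_y$ and satisfying the reproducing property $f(z)=\langle f,k_z\rangle_{\mathcal H}$ by continuity. Boundedness of $\delta_z$ is then immediate from Cauchy--Schwarz with constant $\sqrt{k(z,z)}$.

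For uniqueness, suppose $\mathcal H'$ is another RKHS on $\mathcal X$ with reproducing kernel $k$. Then $k_y\in\mathcal H'$ for every $y$, and $\langle k_y,k_z\rangle_{\mathcal H'}=k(z,y)=\langle k_y,k_z\rangle_{\mathcal H}$, so the linear map sending $k_y\mapsto k_y$ extends by linearity and continuity to an isometry $\mathcal H_0\to\mathcal H'$; by \Cref{prop:4_denseness} applied to both spaces, this isometry extends to an isometric isomorphism $\mathcal H\to\mathcal H'$ which is the identity on $\mathrm{span}\{k_y\}$, and the reproducing property forces it to be the identity everywhere as a map of functions. The chief obstacle is the realization step: making sure the abstract metric completion lives inside $\mathfrak F(\mathcal X,\mathbb F)$ rather than as an abstract equivalence-class space, which is precisely what the pointwise-convergence estimate delivers.
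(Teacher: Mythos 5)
Your proof is correct: it is the classical Moore--Aronszajn construction, which is exactly the argument given in the references the paper cites for this statement (the paper itself offers no proof, only pointers to \cite{aronszajn} and \cite[Theorem 2.14]{paulsen_rkhs}). You handle the two genuinely delicate points --- well-definedness of the pre-inner product via the reproducing identity, and the realization of the abstract completion inside $\mathfrak F(\mathcal X,\mathbb F)$ via the pointwise Cauchy estimate $|f_n(z)-f_m(z)|\leq\|f_n-f_m\|_0\sqrt{k(z,z)}$ --- in the standard way, so nothing further is needed.
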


\section{Technical tools}\label{app:b}
\begin{lemma}\label{lem:dx_invariance}
	A finite-dimensional vector space $V\subseteq C^1(\mathbb R_+,\mathbb R)$ is derivative invariant if and only if there is a matrix $A$ such that the coordinate functions of $b(x):=\exp (xA)b_0$ are a basis for $V$ where $b_0:=b(0)$.
\end{lemma}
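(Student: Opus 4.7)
My plan is to prove the two directions separately, with the forward direction relying on the standard existence/uniqueness theorem for linear ODEs with constant coefficients and the reverse direction being a direct computation.

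For the reverse direction ($\Leftarrow$), I would start by assuming $b(x) = \exp(xA) b_0$ for some square matrix $A$ and vector $b_0$, and that the coordinate functions $b_1, \dots, b_n$ of $b$ form a basis of $V$. Differentiating coordinatewise gives $b'(x) = A\exp(xA) b_0 = A b(x)$, i.e.\ $b_i'(x) = \sum_j A_{ij} b_j(x)$. Thus $\partial_x b_i \in \mathrm{Span}(b_1, \dots, b_n) = V$ for each $i$, and since $V$ is the span of these basis functions, $\partial_x V \subseteq V$. This shows $V$ is derivative invariant.

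For the forward direction ($\Rightarrow$), I would pick any basis $v_1, \dots, v_n$ of $V$. Since $V$ is derivative invariant, $\partial_x v_i \in V$ for each $i$, so there exists a matrix $A = (A_{ij}) \in \R^{n \times n}$ with $\partial_x v_i = \sum_{j=1}^n A_{ij} v_j$. Setting $b(x) := (v_1(x), \dots, v_n(x))^{\top}$, this system of identities reads $b'(x) = A b(x)$, a linear ODE in $\R^n$ with initial datum $b_0 := b(0) = (v_1(0), \dots, v_n(0))^{\top}$. By uniqueness of solutions to linear ODEs with constant coefficients, we must have $b(x) = \exp(xA) b_0$. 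By construction the coordinate functions of $b$ are exactly $v_1, \dots, v_n$, which form a basis of $V$.

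The steps are essentially routine once one spots that derivative invariance is precisely the statement that $\partial_x$ restricts to a linear endomorphism of $V$, whose matrix in a chosen basis plays the role of $A$. The only point requiring any care is the distinction between the map $\partial_x$ acting on the scalar functions $v_i$ and the vector-valued ODE for $b$; I would make the index conventions for $A$ explicit so that the linear combination $\partial_x v_i = \sum_j A_{ij} v_j$ matches the system $b' = A b$ without a transpose mismatch. No other obstacle is expected, as smoothness of $\exp(xA)b_0$ in $C^1(\R_+, \R^n)$ is immediate and linear independence of the coordinate functions is inherited from the chosen basis $\{v_i\}$.
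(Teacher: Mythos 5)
Your proof is correct and follows essentially the same route as the paper: the reverse direction by differentiating $b(x)=\exp(xA)b_0$ to get $b'=Ab$, and the forward direction by expanding the derivatives of a chosen basis in that basis to form $A$ and invoking uniqueness for the resulting linear ODE system. No gaps; your remark about avoiding a transpose mismatch in the index conventions is a sensible precaution but not a substantive difference.
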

\begin{proof}
	Let $V\subseteq C^1(\R _+,\R)$ be a finite-dimensional vector space. Assume there are $A\in\R ^{d\times d}$ for some $d\in\mathbb N$ and $b:\R _+\rightarrow\R ^d$ defined as above such that $\{ b(x): x\in\R _+\}$ are a basis for $V$, then we see that for each $k=1,\dots,d$, $b_k' = (Ab)_k = \sum _{l=1}^d A_{kl}b_l \in V$. Thus, the derivatives of the basis functions $b_1,\dots,b_d$ are in $V$ and, hence, $V$ is derivative invariant.
For the converse, assume that $V$ is derivative invariant. Let $b_1,\dots,b_d$ be a basis for $V$. Then $b_k' \in V$ and hence there are coefficients $a_{k,1},\dots,a_{k,d}$ such that $b_k' = \sum_{l=1}^d a_{k,l}$. Define $A:=(a_{k,l})_{kl}$. Observe that $b' = Ab$. We find that $b(x) = \exp (xA) b_0$ where $b_0=b(0)$.
\end{proof}
\begin{corollary}\label{cor:full_consistency_equiv}
	Let $k$ be a kernel function in the sense of \Cref{def:4_kernel_function}. $k$ is fully consistent if and only if for any $y_1,\dots, y_N$ there is a matrix $A_N$ such that $b(x) := (k(y_1,x),\dots,k(y_N,x))$ satisfies $b(x) = \exp( xA )b_0$.
\end{corollary}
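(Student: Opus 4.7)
The plan is to read this corollary as a near-direct translation of \Cref{def:fully consistent_kernel} through \Cref{lem:dx_invariance}, which characterises finite-dimensional derivative-invariant subspaces of $C^1(\R_+,\R)$ as precisely the spans of coordinate functions of orbits $x \mapsto \exp(xA)b_0$. With this dictionary in hand, both directions reduce to matching the invariant subspace afforded by full consistency against a matrix-exponential representation of the vector $b(x)$.

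For the forward direction, I would fix $y_1,\ldots,y_N \in \R_+$, invoke \Cref{def:fully consistent_kernel} to obtain a finite-dimensional derivative-invariant $V \subseteq C^1(\R_+,\R)$ containing $k_{y_1},\ldots,k_{y_N}$, and apply \Cref{lem:dx_invariance} to produce $\tilde A$ and $\tilde b_0$ whose orbit $x\mapsto\exp(x\tilde A)\tilde b_0$ has coordinate functions forming a basis of $V$. Each $k_{y_i}$ then sits in $V$ as a linear combination of these coordinates, so $b(x)=(k(y_1,x),\ldots,k(y_N,x))^\top = C\exp(x\tilde A)\tilde b_0$ for some matrix $C$. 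To land on the exact form $b(x)=\exp(xA_N)b_0$ with $b_0=b(0)$, I would pass to the minimal derivative-invariant extension of $\operatorname{span}\{k_{y_1},\ldots,k_{y_N}\}$ (which is contained in $V$, hence finite-dimensional), read off the matrix representation $A_N$ of $\partial_x$ in a basis whose first $N$ entries are the $k_{y_i}$'s (padded by further derivatives to close the system), and verify the resulting ODE $b'(x)=A_N b(x)$ with initial condition $b_0$ by direct differentiation, so that $b(x)=\exp(xA_N)b_0$.

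For the reverse direction, if such a matrix $A_N$ exists, the coordinates of $x\mapsto\exp(xA_N)b_0$ span a finite-dimensional derivative-invariant subspace by the easy direction of \Cref{lem:dx_invariance}, and that subspace contains each $k_{y_i}$ by the assumed identity $b(x)=\exp(xA_N)b_0$, so it serves as the witness $V$ in \Cref{def:fully consistent_kernel}. The main obstacle is a bookkeeping issue of matrix size: if $A_N$ is required to be literally $N\times N$, then $\operatorname{span}\{k_{y_1},\ldots,k_{y_N}\}$ itself must be derivative-invariant, which is strictly stronger than full consistency (consider for example $k(x,y)=1+xy$ with $N=1$ and $y_1\neq 0$). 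The natural reading, which makes the equivalence work, is that the right-hand side is built from a derivative-closed extension of some dimension $d\geq N$ and that $b$ consists of $N$ prescribed coordinates of $\exp(xA_N)b_0$; making this interpretation explicit is the step that requires care, after which both implications become straightforward applications of \Cref{lem:dx_invariance}.
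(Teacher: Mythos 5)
Your proposal follows the same route as the paper---both directions are reduced to \Cref{lem:dx_invariance}---but your version is the more careful one, and the discrepancy you flag is real. The paper's proof consists of the single instruction ``take $V=\operatorname{span}\{k_{y_1},\dots,k_{y_N}\}$,'' which silently assumes that this span is itself derivative-invariant; \Cref{def:fully consistent_kernel} only guarantees the existence of \emph{some} finite-dimensional derivative-invariant $V$ containing the sections, possibly strictly larger. Your counterexample $k(x,y)=1+xy$ (a legitimate positive semidefinite kernel, fully consistent via $V=\operatorname{span}\{1,x\}$, yet with $1+y_1x\neq e^{ax}$ for any scalar $a$ when $y_1\neq 0$) shows that the corollary is false if $A_N$ is read as an $N\times N$ matrix and $b$ as exactly the $N$ kernel sections. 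Your fix---passing to the minimal derivative-invariant extension of the span, taking $A_N$ of size $d\geq N$, and reading the $k_{y_i}$ as prescribed coordinates of the orbit $x\mapsto\exp(xA_N)b_0$---is the interpretation under which the equivalence holds, and it is also the one consistent with how the corollary is consumed in \Cref{prop:4_u_space}, where the matrix $M$ has arbitrary dimension and the section is recovered through a linear functional. The reverse direction in your proposal is the easy half of \Cref{lem:dx_invariance} and is fine. In short: same skeleton as the paper, but you have correctly identified and repaired a gap that the paper's one-line proof leaves open.
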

\begin{proof}
	This follows immediately from \Cref{def:fully consistent_kernel} and \Cref{lem:dx_invariance} by taking $V=\text{span}\{k_{y_1},\dots ,k_{y_N}\}$.
\end{proof}
\begin{proposition}\label{prop:4_bargmann_space}
	Let $k_{\exp}$ be the exponential kernel on $\mathbb{C}$, that is $k_{\exp}=e^{x\bar y}$ and let $\mathcal H(\exp )$ denote the RKHS induced by $k_{\exp}$. Denote by $\mathcal B$ the Segal-Bargmann space of analytic functions defined as
	\begin{equation*}
		\mathcal B:=\set{ f:\mathbb C\rightarrow \mathbb C\given \Vert f\Vert _{\mathcal B} <\infty },
	\end{equation*}
	where the norm $\Vert\cdot\Vert _{\mathcal B}$ is induced by the inner product
	\begin{equation*}
		\langle f,g\rangle _{\mathcal B}:=\frac{1}{\pi}\int _{\mathbb C}f(z)\overline{g(z)}e^{-\vert z\vert ^2}\frac{i}{2}dz\wedge d\bar z,\qquad f,g\in\mathcal B.
	\end{equation*}
	Then $\mathcal B =\mathcal H(\exp )$.
\end{proposition}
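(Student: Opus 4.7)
The plan is to invoke the uniqueness part of the Moore--Aronszajn theorem (\Cref{thm:4_aronszajn_moore}): it suffices to show that $\mathcal{B}$ is an RKHS and that its reproducing kernel equals $k_{\exp}$. Since $\mathcal{H}(\exp)$ is, by definition, the unique RKHS with kernel $k_{\exp}$, the two spaces must coincide as sets, and their norms must agree.

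First I would establish an orthonormal basis for $\mathcal{B}$. Writing $z = re^{i\theta}$ and $\frac{i}{2}dz\wedge d\bar{z} = r\,dr\,d\theta$, direct computation in polar coordinates gives
\begin{equation*}
    \langle z^n, z^m\rangle_{\mathcal{B}} = \frac{1}{\pi}\int_0^{2\pi}\int_0^{\infty} r^{n+m}e^{i(n-m)\theta}e^{-r^2}r\,dr\,d\theta = n!\,\delta_{nm},
\end{equation*}
so the monomials $e_n(z) := z^n/\sqrt{n!}$ form an orthonormal family. Standard arguments (which I would cite rather than rederive) show that every entire function of finite $\mathcal{B}$-norm admits an $L^2$-convergent expansion $f(z) = \sum_n a_n e_n(z)$ with $\sum_n |a_n|^2 < \infty$, so $\{e_n\}_{n\geq 0}$ is a complete orthonormal basis and $\mathcal{B}$ is a Hilbert space.

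Next I would show that $\mathcal{B}$ is an RKHS with kernel $k_{\exp}$. For fixed $w\in\mathbb{C}$, the function $k_w(z) := e^{z\bar{w}} = \sum_{n\geq 0}\bar{w}^n z^n/n! = \sum_{n\geq 0}(\bar{w}^n/\sqrt{n!})\,e_n(z)$ has squared norm $\sum_n |w|^{2n}/n! = e^{|w|^2} < \infty$, so $k_w\in\mathcal{B}$. For $f = \sum_n a_n e_n \in \mathcal{B}$ we compute
\begin{equation*}
    \langle f, k_w\rangle_{\mathcal{B}} = \sum_{n\geq 0} a_n \cdot \overline{\bar{w}^n/\sqrt{n!}} = \sum_{n\geq 0} \frac{a_n}{\sqrt{n!}} w^n = f(w),
\end{equation*}
where the last equality uses that the Taylor coefficients of $f$ at $0$ are $a_n/\sqrt{n!}$. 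This establishes both the boundedness of the evaluation functional $\delta_w$ (since $|f(w)| \le \|f\|_{\mathcal{B}} \cdot e^{|w|^2/2}$) and the reproducing property with kernel $k(z,w) = \langle k_w, k_z\rangle_{\mathcal{B}} = k_w(z) = e^{z\bar{w}}$, which is exactly $k_{\exp}$.

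Having verified that $\mathcal{B}$ is an RKHS with reproducing kernel $k_{\exp}$, the uniqueness assertion of \Cref{thm:4_aronszajn_moore} yields $\mathcal{B} = \mathcal{H}(\exp)$. The main technical obstacle is the completeness step --- justifying that every element of $\mathcal{B}$ is expanded by the monomials in $\mathcal{B}$-norm --- which is classical but nontrivial; I would either quote this from \cite[Subsection 7.3.2]{paulsen_rkhs} or deduce it from the density of polynomials in the Fock-type weighted $L^2$ space. Everything else reduces to the polar-coordinate integral and bookkeeping of Taylor coefficients.
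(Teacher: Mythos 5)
Your proposal is correct, and it shares its computational core with the paper's proof --- both arguments ultimately rest on the polar-coordinate evaluation of the Gaussian integral, the angular orthogonality $\int_0^{2\pi}e^{i(n-m)\theta}\,d\theta=2\pi\delta_{nm}$, and the moment identity $\int_0^\infty r^{2n+1}e^{-r^2}\,dr=n!/2$ --- but the logical scaffolding differs. The paper expands $f$ and $g$ in their Taylor series inside the integral and shows directly that $\langle f,g\rangle_{\mathcal B}=\sum_k f^{(k)}(0)\overline{g^{(k)}(0)}/k!$, i.e.\ it identifies the integral inner product with the power-series inner product of $\mathcal H(\exp)$ already known from the Fock-space description (\Cref{lem:rkhs_general} with $h=\exp$, $a=1$, $c=0$); the identification of the two spaces is then immediate. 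You instead verify the RKHS axioms for $\mathcal B$ --- orthonormality of the monomials, membership of $k_w$, boundedness of $\delta_w$, and the reproducing property --- and conclude via the uniqueness clause of \Cref{thm:4_aronszajn_moore}. Your route has the advantage of not presupposing the explicit form of $\langle\cdot,\cdot\rangle_{\mathcal H(\exp)}$, needing only that $\mathcal H(\exp)$ is \emph{defined} as the unique RKHS with kernel $e^{x\bar y}$; it also isolates honestly the one genuinely nontrivial analytic point, namely completeness of the monomial system in $\mathcal B$ (equivalently, that every finite-norm entire function is the $\mathcal B$-limit of its Taylor polynomials), which you rightly propose to cite. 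The paper's route is computationally more direct but quietly uses the same fact when it interchanges the double Taylor summation with the integral, so neither argument escapes that step. Both proofs are valid; yours is arguably the cleaner deduction given the abstract machinery already set up in \Cref{sec:rkhs_theory}.
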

\begin{proof}
	It is sufficient to prove that $\langle f,g\rangle _{\mathcal H(\exp )}=\langle f,g\rangle _{\mathcal B}$ for all $f,g\in\mathcal B$. We first note that the surface $1$-form $dz\wedge d\bar z$ satisfies 
	\begin{equation*}
	    \begin{aligned}	
                dz\wedge d\bar z&=(dx+idy)\wedge (dx-idy)\\
                &=dx\wedge dx+idy\wedge dx-idx\wedge dy-dy\wedge dy=-2idx\wedge dy
            \end{aligned}
	\end{equation*}
	by the skew-symmetry of the wedge product, and therefore we have
	\begin{equation*}
		\begin{aligned}
			\langle f,g\rangle _{\mathcal B}&=\frac{1}{\pi}\int _{\mathbb C}f(z)\overline{g(z)}e^{-\vert z\vert ^2}\frac{i}{2}dz\wedge d\bar z\\
            &=\frac{1}{\pi}\int _0^{\infty}\int _0^{\infty}f(x+iy)\overline{g(x+iy)}e^{-(x^2+y^2)}dxdy.
		\end{aligned}
	\end{equation*}
	Making the coordinate change $x+iy=re^{it}$, we find
	\begin{equation*}
		\begin{aligned}
			\langle f,g\rangle _{\mathcal B}&=\frac{1}{\pi}\int _0^{\infty}\int _0^{2\pi}f(re^{it})\overline{g(re^{it})}re^{-r^2}dtdr\\
							&=\frac{1}{\pi}\int _0^{\infty}\int _0^{2\pi}\sum _{k=0}^{\infty}\sum _{l=0}^{\infty}\frac{f^{(k)}(0)}{k!}\frac{\overline{g^{(l)}(0)}}{l!}\left( re^{it}\right) ^k\left( re^{-it}\right) ^lre^{-r^2}dtdr\\
							&=\frac{1}{\pi}\sum _{k=0}^{\infty}\sum _{l=0}^{\infty}\frac{f^{(k)}(0)}{k!}\frac{\overline{g^{(l)}(0)}}{l!}\int _0^{\infty}r^{2k+1}e^{-r^2}\int _0^{2\pi}e^{it(k-l)}dtdr,
		\end{aligned}
	\end{equation*}
	where we used the analyticity of the functions $f$ and $g$ to write down their Taylor series expansions and interchange summation with integration. Now we note that
	\begin{equation*}
		\begin{aligned}
			\int _0^{2\pi}e^{it(k-l)}dt =\begin{cases}2\pi, &\text{ if }k=l,\\
			0, &\text{ else.}\end{cases}
		\end{aligned}
	\end{equation*}
	We additionally have that
	\begin{equation*}
		\begin{aligned}
			\int _0^{\infty}r^{2k+1}e^{-r^2}dr=\frac{\Gamma (k+1)}{2} = \frac{k!}{2}
		\end{aligned}
	\end{equation*}
	for $k\in\mathbb N$. Therefore, 
	\begin{equation*}
		\langle f,g\rangle _{\mathcal B}=\sum _{k=0}^{\infty}\frac{f^{(k)}(0)\overline{g^{(k)}(0)}}{k!},
	\end{equation*}
	which proves the assertion.

\end{proof}
\begin{lemma}\label{lem:dx_product}
	Let $k_{\exp}$ be the exponential kernel on $\mathbb C$ and $\mathcal H(\exp )$ be the induced RKHS. Then we have
	\begin{equation*}
		\langle\partial _{\lambda } (\cdot ^ke^{\lambda\cdot}),\partial _{\mu}(\cdot ^le^{\mu\cdot})\rangle _{\mathcal H(\exp )}=\partial _{\lambda}\partial _{\mu}\langle e^{\lambda\cdot},e^{\mu\cdot}\rangle _{\mathcal H(\exp )}
	\end{equation*}
\end{lemma}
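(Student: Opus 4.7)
The plan is to reduce the lemma to the assertion that parameter differentiation may be exchanged with the RKHS inner product: once we know that the family $\lambda\mapsto f_\lambda := \cdot^k e^{\lambda\cdot}$ is strongly differentiable in $\mathcal H(\exp)$ with derivative $\cdot^{k+1} e^{\lambda\cdot}$, bilinearity and continuity of $\langle\cdot,\cdot\rangle_{\mathcal H(\exp)}$ immediately yield $\partial_\lambda\langle f_\lambda,g\rangle = \langle\partial_\lambda f_\lambda,g\rangle$ for any $g\in\mathcal H(\exp)$; applying the symmetric statement in $\mu$ and iterating then produces the identity actually used in \Cref{lem:inner_product_specific}.

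First I would exploit the explicit Taylor-coefficient description of the norm obtained by specialising \Cref{lem:rkhs_general} (equivalently \Cref{prop:4_rkhs}) at $\alpha=0$, $\beta=1$, $p\equiv 1$, namely $\|f\|_{\mathcal H(\exp)}^2=\sum_{n\ge 0}|f^{(n)}(0)|^2/n!$. The Leibniz rule applied to $f_\lambda(x)=x^k e^{\lambda x}$ gives $f_\lambda^{(n)}(0)=\frac{n!}{(n-k)!}\lambda^{n-k}$ for $n\ge k$ and zero otherwise, so $f_\lambda\in\mathcal H(\exp)$ and its norm is a convergent power series in $|\lambda|^2$. The same formula identifies the $n$-th Taylor coefficient at zero of the candidate derivative $\cdot^{k+1}e^{\lambda\cdot}$.

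Next I would control the difference quotient $g_h:=h^{-1}(f_{\lambda+h}-f_\lambda)-\cdot^{k+1}e^{\lambda\cdot}$ in the $\mathcal H(\exp)$-norm. Its coefficients at zero read
\begin{equation*}
g_h^{(n)}(0)=\frac{n!}{(n-k)!}\left(\frac{(\lambda+h)^{n-k}-\lambda^{n-k}}{h}-(n-k)\lambda^{n-k-1}\right)=\frac{n!}{(n-k)!}\sum_{j\ge 2}\binom{n-k}{j}\lambda^{n-k-j}h^{j-1},
\end{equation*}
so for $|h|\le 1$ the binomial theorem gives $|g_h^{(n)}(0)|\le |h|\,\frac{n!}{(n-k)!}(|\lambda|+1)^{n-k}$, and therefore
\begin{equation*}
\|g_h\|_{\mathcal H(\exp)}^2\le |h|^2\sum_{n\ge k+1}\frac{n!}{((n-k)!)^2}(|\lambda|+1)^{2(n-k)},
\end{equation*}
with the latter series convergent by the ratio test. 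This establishes $\|g_h\|_{\mathcal H(\exp)}=O(|h|)$, i.e.\ strong differentiability. Applying the resulting commutation $\partial_\lambda\langle\cdot,\cdot\rangle=\langle\partial_\lambda\cdot,\cdot\rangle$ to $g=\cdot^l e^{\mu\cdot}$, and then the $\mu$-analogue, delivers the stated equality, and induction on the order of differentiation yields the form $\langle\partial_\lambda^k e^{\lambda\cdot},\partial_\mu^l e^{\mu\cdot}\rangle=\partial_\lambda^k\partial_\mu^l\langle e^{\lambda\cdot},e^{\mu\cdot}\rangle$ invoked in \Cref{lem:inner_product_specific}.

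The main obstacle is the upgrade from the obvious pointwise convergence of the difference quotient to norm convergence in $\mathcal H(\exp)$; this works only because the factorial weights $1/n!$ tame the polynomial-in-$n$ factors produced both by the Leibniz rule applied to $x^k e^{\lambda x}$ and by the binomial remainder estimate, reducing the error norm to $|h|^2$ times a convergent series in $|\lambda|$. With that bound in hand, the remaining steps—bilinearity, continuity of the inner product, and induction on the number of parameter derivatives—are routine.
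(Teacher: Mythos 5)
Your proof is correct, but it takes a genuinely different route from the paper's. The paper passes to the integral (Segal--Bargmann) representation of the inner product via \Cref{prop:4_bargmann_space}, exhibits an integrable majorant, and invokes the Leibniz integral rule to differentiate under the integral sign. You instead work entirely in the sequence-space picture $\Vert f\Vert^2_{\mathcal H(\exp)}=\sum_{n\geq 0}\vert f^{(n)}(0)\vert^2/n!$ coming from \Cref{lem:rkhs_general}, and upgrade the claim to \emph{strong} (norm) differentiability of $\lambda\mapsto \cdot^k e^{\lambda\cdot}$ with derivative $\cdot^{k+1}e^{\lambda\cdot}$; after that, interchanging $\partial_\lambda$ with the inner product is just Cauchy--Schwarz and bilinearity. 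Your coefficient computations check out: the $n$-th Taylor coefficient of $x^ke^{\lambda x}$ at $0$ is indeed $n!\,\lambda^{n-k}/(n-k)!$, the second-order binomial remainder gives $\vert g_h^{(n)}(0)\vert\leq \vert h\vert\,\tfrac{n!}{(n-k)!}(\vert\lambda\vert+1)^{n-k}$ for $\vert h\vert\leq 1$, and the resulting series $\sum_{n}\tfrac{n!}{((n-k)!)^2}(\vert\lambda\vert+1)^{2(n-k)}$ converges by the ratio test, so $\Vert g_h\Vert_{\mathcal H(\exp)}=O(\vert h\vert)$ as claimed. The trade-off: the paper's argument is shorter once the Bargmann identification is granted, but it rests on differentiating under a two-dimensional integral where the domain and majorant bookkeeping is delicate; yours is more elementary and self-contained, proves the quantitatively stronger statement that the difference quotients converge in norm at rate $O(\vert h\vert)$, and makes transparent that the factorial weights are exactly what tames the polynomial-in-$n$ factors. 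One cosmetic point, which applies equally to the paper: as literally displayed the identity only balances for $k=l=0$; what is actually proved and used in \Cref{lem:inner_product_specific} is the one-step commutation $\partial_\lambda\langle f_\lambda,g\rangle=\langle\partial_\lambda f_\lambda,g\rangle$ iterated to give $\langle\partial_\lambda^k e^{\lambda\cdot},\partial_\mu^l e^{\mu\cdot}\rangle_{\mathcal H(\exp)}=\partial_\lambda^k\partial_\mu^l\langle e^{\lambda\cdot},e^{\mu\cdot}\rangle_{\mathcal H(\exp)}$, and your reading is the correct one.
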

\begin{proof}
	To simplify the proof, we will use \Cref{prop:4_bargmann_space} and work with the integral representation of the inner product. We have
	\begin{equation*}
		\begin{aligned}
			&\int _0^{\infty}\int _0^{\infty}\partial _{\lambda}\partial _{\mu}\left( (x+iy)^ke^{\lambda (x+iy)}\overline{(x+iy)^le^{\mu (x-iy)}}e^{-(x^2+y^2)}\right) dxdy\\
			&=\int _0^{\infty}\int _0^{\infty}(x+iy)^k(x-iy)^l\partial _{\lambda}\left( e^{\lambda (x+iy)}\right)\partial _{\mu}\left(\overline{e^{\mu (x+iy)}}\right)e^{-(x^2+y^2)}dxdy\\
			&=\int _0^{\infty}\int _0^{\infty} (x+iy)^{k+1}(x-iy)^{l+1}e^{\lambda (x+iy)}e^{\mu (x-iy)}e^{-(x^2+y^2)}dxdy\\
			&=\int _0^{\infty}\int _0^{2\pi}r^{k+1}e^{itk}r^{l+1}e^{-itl}e^{\lambda re^{itk}}e^{\mu re^{-itl}}e^{-r^2}dtdr\\
			&=\int _0^{\infty}\int _0^{2\pi}r^{k+l+2}e^{it(k-l)}e^{r(\lambda e^{itk}+\mu e^{-itl})}e^{-r^2}dtdr\leq 2\pi\int _0^{\infty}r^{k+l+2}e^{r(\mu +\lambda)-r^2}dr.
		\end{aligned}
	\end{equation*}
	For any $\lambda ,\mu\in\R$ we can find a constant $C>0$ such that $e^{r(\lambda +\mu)}<Ce^{r^2/2}$ and thus
	\begin{equation*}
		\begin{aligned}
			&\int _0^{\infty}\int _0^{\infty}\partial _{\lambda}\partial _{\mu}\left( (x+iy)^ke^{\lambda (x+iy)}\overline{(x+iy)^le^{\mu (x-iy)}}e^{-(x^2+y^2)}\right) dxdy\\
			&\leq 2\pi C\int _0^{\infty}r^{k+l+2}e^{-r^2/2}dr<\infty.
		\end{aligned}
	\end{equation*}
	Since we have found an integrable majorant, we may use the Leibniz integral rule (see, e.g. \cite{folland}) to interchange differentiation and integration, yielding the result.

\end{proof}
Before we provide the next result, we first provide a definition of the notiotion of \emph{coercive} functions for the reader's convenience.
\begin{definition}\label{def:coercive}
	Let $E$ be a Banach space. A function $f:E\rightarrow\R$ is called \emph{coercive} if for any sequence $\{x _n\}_{n\in\mathbb N}\subseteq E$ with $\Vert x_n\Vert\rightarrow\infty$ as $n\rightarrow\infty$, it holds that $f(x_n)\rightarrow\infty$ as $n\rightarrow\infty$.
\end{definition}
\begin{lemma}\label{lem:weak_optimisation}
	Let $E$ be a reflexive Banach space and let $q:E\rightarrow\R$ be a continuous, coercive, strictly convex map. Assume $V\subseteq E$ is a non-empty set which is closed in the weak topology $\sigma (E, E^*)$. Then there is $x_0\in V$, such that 
	\begin{equation*}
		q(x_0)=\inf\lbrace q(x): x\in V\rbrace.
	\end{equation*}
\end{lemma}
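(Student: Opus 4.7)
The plan is to apply the direct method of the calculus of variations. Set $m:=\inf\{q(x):x\in V\}$; since $V$ is non-empty and $q$ takes real values, $m<\infty$. Pick a minimizing sequence $(x_n)\subseteq V$ with $q(x_n)\to m$. In particular $\{q(x_n)\}$ is bounded above, so by coercivity the sequence $(x_n)$ is bounded in $E$ (if it were unbounded, $q(x_n)$ would diverge to $+\infty$, contradicting $q(x_n)\to m<\infty$). This step also shows $m>-\infty$ will follow at the end.

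Next I would invoke reflexivity of $E$: by the Banach--Alaoglu theorem, closed balls in a reflexive Banach space are weakly compact, so there exists a subsequence $(x_{n_k})$ and an element $x_0\in E$ with $x_{n_k}\rightharpoonup x_0$ in the weak topology $\sigma(E,E^*)$. Since $V$ is weakly closed by hypothesis, $x_0\in V$.

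Finally I would show $q(x_0)\leq m$, which together with $x_0\in V$ forces $q(x_0)=m$. The required ingredient is the weak lower semicontinuity of $q$: because $q$ is continuous and convex, each sublevel set $\{x\in E:q(x)\leq c\}$ is norm-closed and convex, hence weakly closed by Mazur's theorem. Equivalently, $q$ is lower semicontinuous with respect to $\sigma(E,E^*)$, so
\begin{equation*}
    q(x_0)\leq\liminf_{k\to\infty}q(x_{n_k})=m.
\end{equation*}
Thus $x_0$ is the desired minimizer.

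The only non-routine ingredient is the weak lower semicontinuity of the continuous convex map $q$ via Mazur's theorem; everything else is a packaging of reflexivity and the coercivity assumption. Note that strict convexity is not needed for the \emph{existence} statement proved here; it would enter only if one additionally wanted \emph{uniqueness} of $x_0$, since on any convex subset of $V$ containing two distinct minimizers, strict convexity would be violated by the midpoint.
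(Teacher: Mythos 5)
Your proof is correct and takes essentially the same route as the paper's: coercivity yields boundedness, reflexivity together with Banach--Alaoglu yields weak compactness, and convexity plus norm-continuity yields weak lower semicontinuity via Mazur; the paper merely packages this through weakly compact sublevel sets $C_a=\{x\in E: q(x)\leq a\}$ intersected with $V$ rather than through a minimizing sequence. Your side remark that strict convexity is irrelevant for existence (entering only for uniqueness) is also accurate.
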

\begin{proof}
	Since $q$ is coercive, the set $C_a:=\{ x\in E: q(x)\leq a\}$ for $a\in\R$ is bounded. Furthermore, since $C_a$ is closed and convex, there is $a\geq 0$, such that $C_a\cap V\neq\emptyset$. By the Banach-Alaoglu Theorem (cf. \cite[Theorem 3.16]{brezis}), $C_a$ is weakly compact, hence, $C_a\cap V$ is weakly compact by the closedness of $V$. Since $q$ is convex and continuous in the norm topology on $E$, it is lower semicontinuous in the weak topology. Thus, there is $x_0\in C_a\cap V$, such that $q(x_0)=\inf _{x\in C_a\cap V}q(x)=\inf _{x\in V}q(x)$ as required.
\end{proof}

\begin{lemma}\label{lem:q_factorisation}
	Let $p:\R _+\times\R _+\rightarrow\R$ be a symmetric, positive semidefinite polynomial. Assume there is $x_0\geq 0$ such that $p(x_0,x_0)=0$. Then there is a positive semidefinite polynomial $r:\R _+\times\R _+\rightarrow\R$ such that
	\begin{equation}\label{eq:q_factorisation1}
		p(x,y)=(x-x_0)(y-x_0)r(x,y)\qquad\text{for all }x,y\geq 0.
	\end{equation}
\end{lemma}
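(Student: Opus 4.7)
The plan is to first extract the factor $(x-x_0)(y-x_0)$ from $p$ as a polynomial, and then verify that the resulting quotient $r$ inherits symmetry and positive semidefiniteness from $p$.

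The first step is to use the defining positive semidefiniteness of $p$ on the two-point set $\{x_0, x\}$ for an arbitrary $x\geq 0$. The resulting $2\times 2$ Gram matrix
\begin{equation*}
\begin{pmatrix} p(x_0,x_0) & p(x_0,x)\\ p(x,x_0) & p(x,x)\end{pmatrix}
=
\begin{pmatrix} 0 & p(x_0,x)\\ p(x,x_0) & p(x,x)\end{pmatrix}
\end{equation*}
is positive semidefinite, which forces $p(x,x_0)=0$ for all $x\geq 0$. Viewing $p(\cdot ,x_0)$ as a univariate polynomial that vanishes on $\mathbb R_+$, we conclude that $p(x,x_0)\equiv 0$ in $\mathbb R[x]$, i.e.\ $(x-x_0)$ divides $p(x,y)$ in $\mathbb R[y][x]$. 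By symmetry of $p$, the same argument yields that $(y-x_0)$ divides $p(x,y)$. Since $(x-x_0)$ and $(y-x_0)$ generate coprime prime ideals in the UFD $\mathbb R[x,y]$, their product divides $p$, producing a polynomial $r$ with $p(x,y)=(x-x_0)(y-x_0)r(x,y)$. Symmetry of $r$ follows from symmetry of $p$ on the dense open set where $(x-x_0)(y-x_0)\neq 0$ and polynomial continuation.

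The second step is to establish that $r$ is positive semidefinite. Fix $x_1,\dots ,x_n\in\mathbb R_+$ and consider the two sub-cases. If all $x_i\neq x_0$, set $D:=\mathrm{diag}(x_1-x_0,\dots ,x_n-x_0)$, which is invertible. Then the Gram matrices satisfy $(p(x_i,x_j))_{i,j}=D(r(x_i,x_j))_{i,j}D$, and since the left-hand side is positive semidefinite and $D$ is invertible, the congruence transformation preserves positive semidefiniteness and yields that $(r(x_i,x_j))_{i,j}$ is positive semidefinite.

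The main technical point (and the only place where real care is needed) is the case where some $x_i$ coincide with $x_0$, since the congruence argument fails. The hard part here is that one cannot read off positive semidefiniteness of $(r(x_i,x_j))$ directly from $(p(x_i,x_j))$ at such points. The plan is to handle this by a simple perturbation and continuity argument: for $\varepsilon>0$, replace each $x_i=x_0$ by $x_i^{\varepsilon}:=x_0+\varepsilon\in\mathbb R_+$ and leave the other points unchanged. The previous case applies to the perturbed tuple, so $(r(x_i^{\varepsilon},x_j^{\varepsilon}))_{i,j}$ is positive semidefinite for every $\varepsilon>0$. Since $r$ is a polynomial, hence continuous, letting $\varepsilon\to 0$ preserves positive semidefiniteness in the limit, and we obtain that $(r(x_i,x_j))_{i,j}$ is positive semidefinite. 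This completes the proof.
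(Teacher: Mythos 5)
Your proposal is correct, and it diverges from the paper's proof in an instructive way after the common first step. Both arguments begin by forcing $p(\cdot,x_0)\equiv 0$: the paper does this via the Cauchy--Schwarz bound $\lvert p(x_0,y)\rvert\leq\sqrt{p(x_0,x_0)}\sqrt{p(y,y)}=0$, you do it via positive semidefiniteness of the $2\times 2$ Gram matrix on $\{x_0,x\}$ --- these are the same inequality. From there the paper defines $r$ as the pointwise quotient, observes that for each fixed $y$ the univariate polynomial $p(\cdot,y)$ has a root at $x_0$ so that $r(\cdot,y)$ is a polynomial in the first variable (and, by symmetry, in the second), and then invokes Carroll's theorem that a function which is polynomial in each variable separately is a polynomial in two variables. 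You instead argue purely algebraically in $\R[x,y]$: both vanishing statements $p(x_0,\cdot)\equiv 0$ and $p(\cdot,x_0)\equiv 0$ give divisibility by the two non-associate irreducibles $(x-x_0)$ and $(y-x_0)$, whose product then divides $p$ in the UFD. This avoids the external "separately polynomial implies jointly polynomial" result entirely, at the cost of nothing. (Minor slip: you attach the vanishing of $p(\cdot,x_0)$ to divisibility by $(x-x_0)$, when the factor theorem pairs $p(\cdot,x_0)\equiv 0$ with $(y-x_0)$ and $p(x_0,\cdot)\equiv 0$ with $(x-x_0)$; since $p$ is symmetric both statements hold, so the conclusion is unaffected.) Your second step is also a genuine improvement in completeness: the paper simply labels the quotient a "positive semidefinite function" without argument, whereas you prove it --- the diagonal congruence $(p(x_i,x_j))=D(r(x_i,x_j))D$ handles tuples avoiding $x_0$, and the perturbation-plus-closedness-of-the-PSD-cone argument handles the (at most one, since the points are distinct) coordinate equal to $x_0$. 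Both of those steps are sound.
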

\begin{proof}
	Define the rational positive semidefinite function
	\begin{equation*}
		r(x,y):=\frac{p(x,y)}{(x-x_0)(y-x_0)}\qquad\text{for all }x,y\in\R _+\setminus\{ x_0\}.
	\end{equation*}
	Let $y\in\R _+\setminus\{ x_0\}$ be fixed. To conclude the proof, we will show that $r$ is a polynomial. Note that $\vert p(x_0,y)\vert <\sqrt{p(x_0,x_0)}\sqrt{p(y,y)}=0$. Thus, $p(\cdot, y)$ is a polynomial with a root in $x_0$, and therefore $r(\cdot ,y)$ is a polynomial. Now, $r$ is a polynomial function in its first argument and, by symmetry, its second argument. This implies it is a polynomial in two variables (see, e.g. \cite{carroll}).
\end{proof}
\begin{lemma}\label{lem:q_factorisation2}
	Let $p:\R _+\times\R _+\rightarrow\R$ be a positive semidefinite polynomial. Then there is a polynomial $q:\R_+\rightarrow\R$ and a positive semidefinite polynomial $r:\R _+\times\R _+\rightarrow\R$ such that
	\begin{enumerate}
		\item $p(x,y)=q(x)q(y)r(x,y)$ for all $x,y\geq 0$,
		\item $r(x,x)>0$ for all $x\geq 0$.
	\end{enumerate}
\end{lemma}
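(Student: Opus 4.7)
The plan is to iterate Lemma B.5, peeling off one diagonal root of $p$ at a time, and to establish termination by a strict degree decrease. First I would dispose of the trivial case: if $p\equiv 0$, then choosing $q\equiv 0$ and $r\equiv 1$ satisfies both \textit{i)} and \textit{ii)} vacuously. Otherwise, the Cauchy–Schwarz-type bound $|p(x,y)|\leq\sqrt{p(x,x)\,p(y,y)}$ (which holds for any positive semidefinite kernel and is used already in the proof of Lemma B.5) guarantees that $p\not\equiv 0$ implies $p(x_*,x_*)>0$ for at least one $x_*\geq 0$, so the factorization I am about to build is non-degenerate.

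Now I would define the iteration. Set $r_0:=p$ and $q_0:=1$. At step $n$, if $r_n(x,x)>0$ for all $x\in\mathbb R_+$, I stop. Otherwise, I pick $x_n\geq 0$ with $r_n(x_n,x_n)=0$ and invoke Lemma B.5 to obtain a symmetric, positive semidefinite polynomial $r_{n+1}$ with
\begin{equation*}
    r_n(x,y)=(x-x_n)(y-x_n)\,r_{n+1}(x,y)\qquad\text{for all }x,y\geq 0,
\end{equation*}
and set $q_{n+1}(x):=(x-x_n)\,q_n(x)$. Symmetry of $r_{n+1}$ follows from the symmetry of $r_n$ and of the factor $(x-x_n)(y-x_n)$, so the hypotheses of Lemma B.5 are preserved.

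Termination follows from a bidegree argument: since $r_n=(x-x_n)(y-x_n)\,r_{n+1}$, the degree of $r_{n+1}$ in $x$ is exactly one less than that of $r_n$ (and likewise in $y$, by symmetry the two bidegrees coincide). Thus the process halts after at most $N\leq\deg_x p$ steps, yielding
\begin{equation*}
    p(x,y)=q_N(x)\,q_N(y)\,r_N(x,y),\qquad q_N(x)=\prod_{i=0}^{N-1}(x-x_i),
\end{equation*}
with $r_N$ positive semidefinite. Either the iteration terminated because $r_N(x,x)>0$ for every $x\in\mathbb R_+$, or $r_N$ reduced to a non-negative constant, which must be strictly positive since $p\not\equiv 0$; in both cases \textit{ii)} holds.

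The main conceptual obstacle, which is already resolved by Lemma B.5, is that extracting the factor $(x-x_n)(y-x_n)$ from a symmetric positive semidefinite polynomial again leaves a \emph{polynomial} and moreover a \emph{positive semidefinite} one; without this it would be unclear that the iteration stays inside the correct class of objects. The one bookkeeping point to be careful about is to verify that each $r_{n+1}$ is symmetric so that Lemma B.5 applies again at the next step, which follows immediately from the symmetry of $r_n$ together with the symmetry of the quadratic factor $(x-x_n)(y-x_n)$.
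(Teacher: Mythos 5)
Your proposal is correct and follows essentially the same route as the paper, which proves this lemma in one line as ``a direct consequence of \Cref{lem:q_factorisation} through repeated factorisation.'' Your write-up simply supplies the details the paper leaves implicit --- the degree-based termination, the preservation of symmetry and positive semidefiniteness at each step, and the degenerate case $p\equiv 0$ --- all of which check out.
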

\begin{proof}
	This is a direct consequence of \Cref{lem:q_factorisation} through repeated factorisation.
\end{proof}
\begin{lemma}\label{lem:q_multiplier}
	Let $p:\R _+\times\R _+\rightarrow\R$ be a symmetric, positive semidefinite polynomial. Let $q:\R_+\rightarrow\R$ and $r:\R _+\times\R _+\rightarrow\R$ be as in \Cref{lem:q_factorisation2} and let $\mathcal H(r)$ and $\mathcal H(p)$ be the RKHS induced by $r$ and $p$, respectively. For a function $f:\R_+\rightarrow\mathbb C$ define $M_q(f):=q\cdot f$. Then $M_q:\mathcal H(r)\rightarrow\mathcal H(p)$ is a linear bijective isometry.
\end{lemma}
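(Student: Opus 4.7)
The plan is to construct a candidate Hilbert space $\mathcal H_1$ of functions on $\mathbb R_+$ by transporting the inner product of $\mathcal H(r)$ through $M_q$, verify that its reproducing kernel is $p$, and then invoke the uniqueness part of Moore--Aronszajn (\Cref{thm:4_aronszajn_moore}) to identify $\mathcal H_1$ with $\mathcal H(p)$. This single identification will simultaneously give linearity, injectivity, surjectivity, and the isometry property of $M_q$.

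First I would show that $M_q$, viewed as a map $\mathcal H(r) \to \mathfrak F(\mathbb R_+, \mathbb C)$, is linear and injective. Linearity is immediate. For injectivity, suppose $q\cdot f = 0$ for some $f\in\mathcal H(r)$. Since $q$ is a non-zero polynomial, its zero set $Z\subset \mathbb R_+$ is finite, so $f$ vanishes on the open dense subset $\mathbb R_+\setminus Z$. Because $r$ is a polynomial, the map $x\mapsto r_x$ is norm-continuous into $\mathcal H(r)$, and hence $x\mapsto f(x) = \langle f,r_x\rangle_{\mathcal H(r)}$ is continuous; continuity then forces $f \equiv 0$ on all of $\mathbb R_+$.

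Next I would define $\mathcal H_1 := M_q(\mathcal H(r))$ and equip it with the unique inner product making $M_q\colon \mathcal H(r)\to \mathcal H_1$ an isometry, namely $\langle M_q f, M_q g\rangle_{\mathcal H_1} := \langle f,g\rangle_{\mathcal H(r)}$; this is well-posed by injectivity, and $\mathcal H_1$ is complete because $\mathcal H(r)$ is. To check that $p$ is the reproducing kernel of $\mathcal H_1$, the key identity is
\[ p_y(x) \;=\; q(x)q(y)r(x,y) \;=\; q(y)\, M_q(r_y)(x),\qquad x,y\in\mathbb R_+, \]
which shows $p_y\in\mathcal H_1$ with $p_y = q(y)\,M_q(r_y)$. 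Then for any $F = M_q(f)\in\mathcal H_1$,
\[ \langle F,p_y\rangle_{\mathcal H_1} = q(y)\langle M_q(f),M_q(r_y)\rangle_{\mathcal H_1} = q(y)\langle f,r_y\rangle_{\mathcal H(r)} = q(y)f(y) = F(y), \]
so $\mathcal H_1$ is an RKHS with reproducing kernel $p$. By the uniqueness clause of \Cref{thm:4_aronszajn_moore}, $\mathcal H_1 = \mathcal H(p)$ as Hilbert spaces of functions on $\mathbb R_+$, and the claim follows.

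The main subtlety I expect is in the injectivity step: one has to rule out that two distinct elements of $\mathcal H(r)$ give the same function $q\cdot f$, so that the transported inner product on $\mathcal H_1$ is well-defined in the first place. This is the only place where one actually uses that elements of $\mathcal H(r)$ are genuine continuous functions rather than mere equivalence classes, which in turn hinges on the continuity of the polynomial kernel $r$ together with finiteness of the zero set of $q$. Once this is in place, the reproducing-property computation and the appeal to Moore--Aronszajn are routine bookkeeping.
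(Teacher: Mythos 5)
Your proof is correct, but it follows a genuinely different route from the paper's. The paper argues concretely inside the finite-dimensional spaces: every $h\in\mathcal H(r)$ is an exact finite linear combination $\sum_i\lambda_i r(\cdot,x_i)$ (with $q(x_i)\neq 0$), so the factorization $p(\cdot,x_i)=q(x_i)\,q\cdot r(\cdot,x_i)$ lets one translate such combinations back and forth between $\mathcal H(r)$ and $\mathcal H(p)$, giving well-definedness of the range and surjectivity by direct computation; the isometry is then obtained by computing the adjoint, $M_q^*(g)=g/q$, and observing $M_q^*=M_q^{-1}$. You instead push forward the Hilbert structure of $\mathcal H(r)$ along $M_q$, check via $p_y=q(y)M_q(r_y)$ that the resulting space reproduces with kernel $p$, and let the uniqueness clause of \Cref{thm:4_aronszajn_moore} deliver surjectivity, the identification with $\mathcal H(p)$, and the isometry in one stroke. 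Your version is the standard ``rescaled kernel'' argument: it never uses that $\mathcal H(r)$ and $\mathcal H(p)$ are finite-dimensional, so it would generalize to any factorization $k(x,y)=q(x)q(y)r(x,y)$ with $q$ vanishing only on a set off which functions in $\mathcal H(r)$ are determined; the price is that the entire burden shifts onto injectivity of $M_q$, which you handle correctly (finiteness of the zero set of $q$ plus continuity of elements of $\mathcal H(r)$), and which is in fact the one point the paper passes over with ``it is easy to see.'' One cosmetic remark: since $\mathcal H(r)$ is finite-dimensional and spanned by the polynomial sections $r(\cdot,x)$, its elements are themselves polynomials, so you could replace the norm-continuity argument for $x\mapsto r_x$ by the simpler observation that a polynomial vanishing off a finite set vanishes identically. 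Both proofs implicitly assume $q\not\equiv 0$, i.e.\ $p\not\equiv 0$, which is the only case of interest.
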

\begin{proof}
	Let $h\in\mathcal H(r)$. Since $r$ is a polynomial, we have $\text{dim}(\mathcal H(r))=\text{deg}(r)<\infty$. Then, for some $n<\infty$, there are $\lambda _1,\dots ,\lambda _n$, $x_1,\dots ,x_n\geq 0$ with $q(x_i)\neq 0$ for $i=1,\dots ,n$ such that $h=\sum _{i=1}^n\lambda _ir(\cdot ,x_i)$. We have
	\begin{equation*}
		M_q(h)=q\cdot h=\sum _{i=1}^n\lambda _iq\cdot h(\cdot, x_i)=\sum _{i=1}^n\frac{\lambda _i}{q(x_i)}p(\cdot, x_i)\in\mathcal{H}(p).
	\end{equation*}
	Thus, $M_q:\mathcal H(r)\rightarrow\mathcal H(p)$ is linear. Let now $g\in\mathcal H(p)$. Then, for $m<\infty$ there are $\eta _1,\dots ,\eta _m\in\R$ and $x_1,\dots ,x_m\geq 0$ such that $g=\sum _{i=1}^m\eta _ip(\cdot ,x_i)$. We then have
	\begin{equation*}
		\begin{aligned}
			g&=\sum _{i=1}^m\eta _ip(\cdot ,x_i)=\sum _{i=1}^m\eta _iq\cdot q(x_i)r(\cdot ,x_i)\\
			 &=q\cdot\sum _{i=1}^m\left(\eta _iq(x_i)\right) r(\cdot ,x_i)=M_q\underbrace{\left(\sum _{i=1}^m\left(\eta _iq(x_i)\right) r(\cdot ,x_i)\right)}_{\in\mathcal H(r)}.
		\end{aligned}
	\end{equation*}
	Hence, $M_q$ is surjective. It is easy to see that $M_q(f)=0$ if and only if either $f\equiv 0$ or $q\equiv 0$, hence it is injective and therefore bijective.

	For $h\in\mathcal H(r)$ we have 
	\begin{equation*}
		M_q(h)(x)=q(x)h(x)=q(x)\langle h,r(\cdot ,x)\rangle _{\mathcal H(r)}=\langle h,q(x)r(\cdot ,x)\rangle _{\mathcal H(r)}.
	\end{equation*}
	On the other hand,
	\begin{equation*}
		\begin{aligned}
			M_q(h)(x)&=q(x)h(x)=(q\cdot h)(x)=\langle q\cdot h,p(\cdot ,x)\rangle _{\mathcal H(p)}\\
				 &=\langle M_q(h), p(\cdot ,x)\rangle _{\mathcal H(p)}=\langle h,M_q^*(p(\cdot ,x))\rangle _{\mathcal H(r)}.
		\end{aligned}
	\end{equation*}
	Therefore, we have $M_q^*(p(\cdot ,x))=r(\cdot ,x)q(x)=p( \cdot, x)/q$ and hence $M_q^*(g)=g/q$ for all $g\in\mathcal H(p)$. This implies $M_q^*=M_q^{-1}$, whence $M_q$ is an isometry.
\end{proof}
\begin{proposition}\label{prop:4_isometry}
	Let $p:\R _+\times\R _+\rightarrow\R$ be a symmetric, positive semidefinite polynomial. Let $q$ and $r$ be defined as in \Cref{lem:q_factorisation2} and $\mathcal H(p)$ and $\mathcal H(r)$ be the RKHS induced by $p$ and $r$, respectively. Furthermore, let $\mathcal H(\exp)$ be the RKHS induced by the exponential kernel. Then
	\begin{equation*}
		H(p)\otimes\mathcal H(\exp)\cong\mathcal H(r)\otimes\mathcal H(\exp)
	\end{equation*}
	isometrically with isometry 
	\begin{equation*}
		\begin{aligned}
			&M_q:\mathcal H(r)\otimes\mathcal H(\exp )\rightarrow\mathcal H(p)\otimes\mathcal H(\exp)&\\
			&M_q:a\otimes b\mapsto (q\cdot a)\otimes b&\text{for all }a\in\mathcal H(r), b\in\mathcal H(\exp).
		\end{aligned}
	\end{equation*}
\end{proposition}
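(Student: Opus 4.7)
The proof plan is to leverage \Cref{lem:q_multiplier} directly, which has already done essentially all the hard work by establishing that $M_q: \mathcal H(r) \to \mathcal H(p)$, $f \mapsto q \cdot f$, is a linear bijective isometry. Given this, what remains is a standard Hilbert space construction: extending an isometric isomorphism to one between tensor products by tensoring with the identity on $\mathcal H(\exp)$.

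First I would write $M_q$ in the statement of the proposition more carefully as $M_q = M_q^{(0)} \otimes \mathrm{id}_{\mathcal H(\exp)}$, where $M_q^{(0)}: \mathcal H(r) \to \mathcal H(p)$ denotes the multiplication isometry from \Cref{lem:q_multiplier}. On simple tensors, this is precisely the prescription $a \otimes b \mapsto (q \cdot a) \otimes b$. I would then recall the standard fact that, given two bounded linear operators between Hilbert spaces, $T_1: H_1 \to H_2$ and $T_2: K_1 \to K_2$, there is a unique bounded linear operator $T_1 \otimes T_2: H_1 \otimes K_1 \to H_2 \otimes K_2$ satisfying $(T_1 \otimes T_2)(a \otimes b) = T_1 a \otimes T_2 b$, and moreover $\|T_1 \otimes T_2\| = \|T_1\|\|T_2\|$; if both factors are isometric isomorphisms, then so is the tensor.

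The key computation is the inner product preservation: for finite sums $\sum_i a_i \otimes b_i$ and $\sum_j c_j \otimes d_j$ with $a_i, c_j \in \mathcal H(r)$ and $b_i, d_j \in \mathcal H(\exp)$, one has
\begin{equation*}
    \left\langle M_q\Bigl(\sum_i a_i \otimes b_i\Bigr),\, M_q\Bigl(\sum_j c_j \otimes d_j\Bigr)\right\rangle = \sum_{i,j}\langle q\cdot a_i, q \cdot c_j\rangle_{\mathcal H(p)} \langle b_i, d_j\rangle_{\mathcal H(\exp)},
\end{equation*}
and by \Cref{lem:q_multiplier} each factor $\langle q \cdot a_i, q \cdot c_j\rangle_{\mathcal H(p)}$ equals $\langle a_i, c_j\rangle_{\mathcal H(r)}$, so the expression collapses to the inner product of $\sum_i a_i \otimes b_i$ and $\sum_j c_j \otimes d_j$ in $\mathcal H(r) \otimes \mathcal H(\exp)$. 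Surjectivity of $M_q$ on the dense subspace of finite sums of simple tensors follows from surjectivity of $M_q^{(0)}$ (also from \Cref{lem:q_multiplier}). The extension to all of $\mathcal H(r) \otimes \mathcal H(\exp)$ is by continuity, since finite sums of simple tensors are dense.

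There is no substantial obstacle here, as both the factorization of $p$ (\Cref{lem:q_factorisation2}) and the isometry property of $M_q^{(0)}$ (\Cref{lem:q_multiplier}) are already in place. The only genuine content of the present proposition beyond those lemmas is the functoriality of $\otimes$ with respect to isometric isomorphisms, which is a textbook fact. Thus the proof should be brief, essentially a one-line appeal to the tensor product construction together with the cited lemma.
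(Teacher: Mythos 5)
Your proposal is correct and takes essentially the same route as the paper, whose entire proof is a one-line appeal to \Cref{lem:q_multiplier}; you simply make explicit the standard fact that tensoring a bijective isometry with the identity yields a bijective isometry on the Hilbert space tensor product. The inner-product computation on finite sums of simple tensors and the density/continuity extension you describe are exactly the details the paper leaves implicit.
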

\begin{proof}
	This is a direct consequence of \Cref{lem:q_multiplier}.
\end{proof}
\begin{corollary}\label{cor:isometry}
	Let $p,q,r$ and $\mathcal H(p),\mathcal H(r),\mathcal H(\exp)$, as well as $M_q$ be defined as in \Cref{prop:4_isometry}. Let $\emph{tr}$ be the map $\Tr: f\otimes g\mapsto f\cdot g$. Then the following diagram commutes
	\begin{equation*}
		\begin{tikzcd}
			\mathcal H(r)\otimes \mathcal H(\exp) \arrow{r}{M_q} \arrow[swap]{d}{\Tr} & \mathcal H(p)\otimes\mathcal H(\exp) \arrow{d}{\Tr} \\
			\mathcal H(r)\mathcal H(\exp) \arrow{r}{M_q}& \mathcal H(p)\mathcal H(\exp)
		\end{tikzcd}
	\end{equation*}
	.
\end{corollary}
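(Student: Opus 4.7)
The plan is to verify the commutativity of the diagram by a straightforward diagram chase on elementary tensors, and then extend by linearity and continuity. By the universal property of the tensor product, the linear span of elementary tensors $a \otimes b$ with $a \in \mathcal H(r)$ and $b \in \mathcal H(\exp)$ is dense in $\mathcal H(r) \otimes \mathcal H(\exp)$. All four morphisms appearing in the diagram are linear, and, thanks to \Cref{lem:q_multiplier} and the bilinearity underlying the definition of $\Tr$, they are continuous with respect to the natural Hilbert norms. Hence it suffices to check that the two composite maps agree on pure tensors.

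For the diagram chase, fix $a \in \mathcal H(r)$ and $b \in \mathcal H(\exp)$. Going right and then down,
\begin{equation*}
	a \otimes b \ \xmapsto{M_q}\ (q\cdot a) \otimes b \ \xmapsto{\Tr}\ (q\cdot a) \cdot b,
\end{equation*}
while going down and then right gives
\begin{equation*}
	a \otimes b \ \xmapsto{\Tr}\ a \cdot b \ \xmapsto{M_q}\ q\cdot (a\cdot b).
\end{equation*}
Since the arrows labelled $\Tr$ send a simple tensor to the pointwise product of the two functions, the equality $(q\cdot a)\cdot b = q\cdot (a\cdot b)$ is just associativity of pointwise multiplication of real-valued functions on $\R_+$, so both paths produce the same function.

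The only point requiring care is the well-definedness of the bottom arrow, namely that $M_q$ does map $\mathcal H(r)\mathcal H(\exp)$ into $\mathcal H(p)\mathcal H(\exp)$. This is exactly what \Cref{lem:q_multiplier} ensures: $M_q$ sends $\mathcal H(r)$ isometrically onto $\mathcal H(p)$, so $q\cdot (a\cdot b)=(q\cdot a)\cdot b$ belongs to $\mathcal H(p)\mathcal H(\exp)$ for every $a\in\mathcal H(r)$, $b\in\mathcal H(\exp)$, and the assignment extends by bilinearity to the full product space. No real obstacle is expected; the corollary is essentially the observation that multiplication by the scalar-valued factor $q$ commutes with pointwise multiplication, once \Cref{prop:4_isometry} has done the substantive work.
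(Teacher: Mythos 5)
Your proof is correct and takes essentially the same route as the paper, whose proof of this corollary is the single sentence that it follows directly from \Cref{prop:4_isometry}. Your diagram chase on elementary tensors, the appeal to associativity of pointwise multiplication, and the well-definedness check for the bottom arrow simply supply the details the paper leaves implicit.
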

\begin{proof}
	This follows directly from \Cref{prop:4_isometry}.
\end{proof}

\section{Technical proofs}\label{app:c}

\begin{proof}[Proof of \Cref{prop:4_1}]
        The dynamics of the discounted price process $\tilde{P}(t,T)$ as defined in \Cref{eq:4_3}, fulfill
	\begin{equation}\label{eq:a1}
                d\tilde{P}(t,T)=-r_te^{-\int _0^tr_sds}P(t,T)dt+e^{-\int _0^tr_sds}dP(t,T).
        \end{equation}
	Since $H_t$ is a mild solution to \Cref{eq:4_sde}, we may use \Cref{eq:4_2} and write
	\begin{equation*}
		H_t(T-t) = H_0(T)+\int _0^t\alpha _s(T-s)ds+\int _0^t\Sigma _s(T-s)dW_s.
	\end{equation*}
	Using the fact that $P(t,T)=1-H_t(T-s)$ and inserting the dynamics into \Cref{eq:a1} yields
	\begin{equation*}
		d\tilde{P}(t,T)=e^{-\int _0^tr_sds}\left( (-\alpha _t(T-t)-r_t(1-H_t(T-t))dt-\Sigma _t(T-t)dW_t\right).
	\end{equation*}
        Now, the process $\tilde{P}(t,T)$ is a local martingale if and only if its drift vanishes. This is equivalent to
	\begin{equation*}
		\alpha _t=r_tH_t-r_t.
	\end{equation*}
	If, in addition, $H_t$ is a strong solution to \eqref{eq:4_sde} and the evaluation functional $\delta _xh:=h(x)$ is continuous, the process $\beta _t(x):=\delta _x\left(\partial _xH_t+\alpha _t\right)$ exists for all $x\in\R_+$ and is the drift of the process $H_t(x)$. Then we obtain the drift condition
	\begin{equation*}
		\beta _t(x)=H_t'(x)-r_t+r_tH_t(x).
	\end{equation*}

\end{proof}
\begin{proof}[Proof of \Cref{prop:4_3}]
	In the case of a $d$-dimensional diffusion $Y$, we find that $\beta _t(x)=\langle g(x),D_yf(Y_t)b_t+\frac12\sum _{k=1}^d\Tr\left( \sigma _t\sigma _t^{\top}H_yf_k(Y_t)\right) e_k\rangle$. The drift condition \eqref{eq:4_5} now reads
        \begin{equation*}
		\langle g(x), D_yf(Y_t)b_t+\frac12\sum _{k=1}^d\Tr\left( \sigma _t\sigma _t^{\top}H_yf_k(Y_t)\right) e_k\rangle = H_t'(x)-r_t+r_tH_t(x).
        \end{equation*}
        Inserting our definitions yields
        \begin{equation*}
                \begin{aligned}
			&\langle g(x),D_yf(Y_t)b_t+\frac12\sum _{k=1}^d\Tr\left( \sigma _t\sigma _t^{\top}H_yf_k(Y_t)\right) e_k\rangle \\ 
			&=\langle g'(x),f(Y_t)\rangle -\langle g'(0),f(Y_t)\rangle +\langle g'(0),f(Y_t)\rangle\langle g(x),f(Y_t)\rangle\\
		\end{aligned}
        \end{equation*}
Grouping terms in a convenient way yields
\begin{equation}\label{eq:4_10}
	\begin{aligned}
		&\langle g(x), D_yf(Y_t)b_t+\frac12\sum _{k=1}^d\Tr\left( \sigma _t\sigma _t^{\top}H_yf_k(Y_t)\right) e_k-\langle g'(0), f(Y_t)\rangle f(Y_t)\rangle \\
		&=\langle g'(x)-g'(0),f(Y_t)\rangle .
	\end{aligned}
\end{equation}
We now observe that since we have an inner product of a function depending only on $x$ and a random process independent of $x$, we may look at the factors separately. That is, there is a vector $a_0\in\R ^d$ and a matrix $A\in\R ^{d\times d}$, such that the following system of equations holds
\begin{equation*}
	\begin{aligned}
		\langle a_0,\tilde{g}(x)\rangle &=g'_0(x)+g'_0(0)g_0(x)-g'_0(0),\\
		A\tilde{g}(x)&=\tilde{g}'(x)+g'(0)g_0(x)+g'_0(0)\tilde{g}(x)-g'(0),
	\end{aligned}
\end{equation*}
where $\tilde{g}:=(g_1,\dots ,g_d)^{\top}$. After reordering terms, We obtain the inhomogeneous linear system of ODEs
\begin{equation}\label{eq:ode_nonmat}
	\begin{aligned}
		g'_0(x)&=\langle a_0,\tilde{g}(x)\rangle -g'_0(0)g_0(x)+g'_0(0),\\
		\tilde{g}'(x)&=-\tilde{g}'(0)g_0(x)+A\tilde{g}(x)+-g'_0(0)\tilde{g}(x)+\tilde{g}'(0).
	\end{aligned}
\end{equation}
Define the matrix
\begin{equation*}
	M:=\begin{pmatrix} -g'_0(0) & a_0^{\top} \\
	-\tilde{g}'(0) & A-g'_0(0)\mathbbm{1}_d \end{pmatrix}.
\end{equation*}
We can therefore write \Cref{eq:ode_nonmat} as
\begin{equation}\label{eq:ode_mat}
	g'(x)=Mg(x)+g'(0)
\end{equation}
If $M$ is non-singular, using the constraint $g(0)=0$, the solution can be written as
\begin{equation*}
	g(x)=M^{-1}(e^{xM}-\mathbbm{1}_{d+1})g'(0).
\end{equation*}
Otherwise, we may formally define the function $\xi(z):=\frac{e^{z}-1}{z}$ and observe that its power series satisfies
\begin{equation*}
        \xi(\lambda z)=\frac{e^{\lambda z}-1}{z}=\sum _{k=0}^{\infty}\frac{\lambda ^kz^k}{(k+1)!}.
\end{equation*}
We may therefore write the solution as
\begin{equation}\label{eq:xi_fun}
        g(x)=x\xi(xM)g'(0).
\end{equation}
Note now, that $g'(0)=-Me_0$, where $e_0\in\R^{d+1}$ denotes the first unit basis vector with the extended vector notation. Therefore 
\begin{equation*}
	g(x)=-x\xi(xM)Me_0=-(e^{xM}-\mathbbm{1}_{d+1})e_0=(\mathbbm{1}_{d+1}-e^{xM})e_0
\end{equation*}
as asserted.
Using the separability argument on \Cref{eq:4_10}, we find, after using the fact that $\langle Mx,y\rangle = \langle x,M^{\top}y\rangle$,
\begin{equation*}
    \begin{aligned}
	D_yf(Y_t)(0,b_t)^{\top}+\frac12\sum _{k=1}^d\Tr\left( \sigma _t\sigma _t^{\top}H_yf_k(Y_t)\right) e_k&\\
    -\langle g'(0),f(Y_t)\rangle f(Y_t)&= M^{\top}f(Y_t)
    \end{aligned}
\end{equation*}
Grouping up terms yields the asserted drift condition. For the converse, assume we are given a process $Z:=(1,f(Y))$ fulfilling the drift condition for some diffusion process $Y$ with dynamics $dY_t=b_tdt+\sigma _tdW_t$ and function $f:\R\rightarrow\R$ and a function $g:\R _+\rightarrow\R ^{d+1}$ satisfying $g(x)=(\mathbbm 1_{d+1}-e^{xM})e_0$ for some matrix $M\in\mathbb R^{(d+1)\times (d+1)}$. It is an easy calculation to verify that the process $H:=\langle g(x),Z_t\rangle$ satisfies the drift condition \eqref{eq:4_5} and thus fulfills the equivalent conditions of \Cref{prop:4_3}. 
\end{proof}

\begin{proof}[Proof of \Cref{prop:4_time_inhom_case}]
	Let $H_t(x)=\langle g(x,t),f(Y_t)\rangle$. Proceeding the same way as in the proof of \Cref{prop:4_3} and using \Cref{eq:4_drift_time_inhom}, we obtain
	\begin{equation*}
		\begin{aligned}
			&\langle g(x,t),D_yf(Y_t)b_t+\frac12\sum _{k=1}^d\Tr\left(\sigma _t\sigma _t^{\top}H_yf(Y_t)\right) e_k-\langle \partial _xg(0,t),f(Y_t)\rangle f(Y_t)\rangle\\
			&= \langle\partial _xg(x,t)-\partial _tg(x,t)-\partial _xg(0,t),f(Y_t)\rangle.
		\end{aligned}
	\end{equation*}
	Using the separability argument, we immediately obtain that \Cref{eq:4_prop_time_inhom2} holds and that the function $g$ satisfies
	\begin{equation}\label{eq:4_time_inhom_pde}
		\partial _xg(x,t)-\partial _tg(x,t)=M(t)g(x,t)+\partial _xg(0,t),
	\end{equation}
	where $M$ is defined as
	\begin{equation*}
		M(t):=\begin{pmatrix} -\partial _xg_0(0,t) & a_0^{\top} \\
		-\partial_x\tilde{g}(0,t) & A-\partial_xg_0(0,t)\mathbbm{1}_d \end{pmatrix}
	\end{equation*}
	for some matrix $A\in\R ^{d\times d}$ and vector $a_0\in\R^d$, and where $\tilde{g}:=(g_1,\dots ,g_d)^{\top}$. Using the terminal condition $H_t(0)=0$, we furthermore note that $g(0,t)=0$ for all $t$, thus proving the assertion.
\end{proof}
\begin{proof}[Proof of \Cref{cor:4_time_inhom_expl}]
	Assume $g$ satisfies \Cref{eq:4_time_inhom_pde}. Differentiating in the variable $x$, and using the fact that $g$ has continuous second derivatives so that $\partial _{xt}g(x,t)=\partial _{tx}g(x,t)$, we may define $u(x,t):=\partial _xg(x,t)$ and obtain the equation
	\begin{equation*}
		\partial _xu(x,t)-\partial _tu(x,t)=M(t)u(x,t).
	\end{equation*}
	Using standard PDE solution theory (e.g. the method of characteristics), and that the matrix $M(t)$ commutes with its integral, we obtain that 
	\begin{equation*}
		u(x,t)=e^{\int _0^xM(x+t-\xi)d\xi}c(x+t)
	\end{equation*}
	for some unspecified function $c:\R_+\rightarrow\R^d$. Since $g$ is the antiderivative of $u$, we may integrate with respect to the variable $x$ and use the terminal condition $g(0,t)=0$, to obtain the solution specified in \Cref{eq:4_pde_expl_sol}. It is now easy to check that $g$ of this form satisfies \Cref{eq:4_time_inhom_pde}.
\end{proof}
\begin{proof}[Proof of \Cref{prop:4_u_space}]
	This is a direct consequence of \Cref{cor:full_consistency_equiv}.
\end{proof}
\begin{proof}[Proof of \Cref{lem:v_space_u}]
	Let $g\in\mathcal{U}$ and $\lambda\in\R$. Recall from \Cref{eq:xi_fun} and \Cref{prop:4_u_space} that $g(x)=\varphi (x\xi (xM)v)$ for some $\varphi\in (\R ^{d+1})^*, M\in\R ^{(d+1)\times (d+1)}, v\in\R ^{d+1}$. We have by linearity of the dual space
		\begin{equation*}
			\lambda g(x)=\lambda\varphi (x\xi (xM)v)=\varphi (x\xi (xM)\lambda v)=\varphi (x\xi (xM)w,
		\end{equation*}
		where $w=\lambda v$. Therefore, $\lambda g\in\mathcal{U}$.
		To prove that $\mathcal{U}$ is closed under summation we consider $f,g\in\mathcal{U}$, where $g=\varphi (x\xi (xM_1)v)$ for some $\varphi\in (\R ^{d_1+1})^*, M_1\in\R ^{(d_1+1)\times (d_1+1)}, v\in\R ^{d_1+1}$ and $f(x)=\psi (x\xi (xM_2)w)$ for some $\psi\in (\R ^{d_2+1})^*, M_2\in\R ^{(d_2+1)\times (d_2+1)}, v\in\R ^{d_2+1}$, as well as the direct sum $\R ^{d_1+1}\oplus\R ^{d_2+1}$ with the canonical isomorphism $s:(v,w)\mapsto v+ w$. By the properties of block-diagonal matrices, we observe that $\xi (x(M_1\oplus M_2))=(\xi (xM_1))\oplus (\xi (xM_2))$ and from this we see that
		\begin{equation*}
                \begin{aligned}
			f(x)+g(x)&=\varphi (x\xi(xM_1)v)+\psi (x\xi(xM_2) w)\\
                        &=s^*\left((\varphi\oplus\psi)(x\xi (x(M_1\oplus M_2))(v\oplus w))\right).
		      \end{aligned}
            \end{equation*}
		Therefore, by the canonical isomorphism, $h:=f+g\in\mathcal{U}$ and $h(x)=\vartheta (x\xi (xM_3)u)$, where $(\R ^{d_1+d_2})^*\ni\vartheta :=s^*\circ(\varphi\oplus\psi)$, $\R ^{(d_1+d_2+2)\times (d_1+d_2+2)}\ni M_3:=M_1\oplus M_2$, and $\R ^{d_1+d_2+2}\ni u:=v\oplus w$. Finally, since the function $g(x):=x\xi (z)v$ is smooth, each element $f\in\mathcal{U}$ is of class $C^{\infty}$.
\end{proof}

\begin{proof}[Proof of \Cref{lem:rkhs_general}]
	We will first show that $k$ as defined in the Lemma is a kernel function. We will rely on some of the basic results presented in \cite{aronszajn} and \cite{paulsen_rkhs} without further explicitly referencing them. 
	
	We note that $k$ is of the form $k=h\circ\tilde k\circ(\varphi\otimes\varphi)$, where $\varphi: x\mapsto ax-c$ and $\tilde k(x,y):=xy$. Now, since $h$ a real analytic function, its power series expansion has infinite radius of convergence, is unique and is given by its Taylor series $h(x)=\sum _{k=0}^{\infty}h^{(k)}(0)/k!x^k$. By assumption, $h^{(k)}(0)\geq 0$ and therefore, by \cite[Theorem 4.16]{paulsen_rkhs}, the function $\tilde k$ is a kernel defined on $\R$. Thus, by \cite[Proposition 5.6]{paulsen_rkhs}, $k$ is a kernel.

	We obtain a description of the RKHS $\mathcal{H}(k)$ induced by the kernel $k$ by considering again $k=h\circ\varphi$. The RKHS induced by $k_h(x,y):=h(xy)$ may be obtained from the results in \cite[Theorem 7.2]{paulsen_rkhs}. Indeed, a function $f$ belongs to $\mathcal{H}(k_h)$ if and only if it is of the form 
	\begin{equation}\label{eq:segal_bergman_gen}
		f(x)=\left\langle\bigoplus _{k=0}^{\infty}\sqrt{\frac{h^{(k)}(0)}{k!}}x^k, w\right\rangle _{\mathcal F(\R)},
	\end{equation}
	for some $w\in \mathcal F(\R)$. Here, $\mathcal F(\mathcal L)=\R\oplus\mathcal L \oplus\mathcal L^{\otimes 2}\oplus\dots$ denotes the Fock space over a Hilbert space $\mathcal L$ (see, e.g. \cite[Definition 7.1]{paulsen_rkhs}). In our case, since $\mathcal L =\R$, we have $\mathcal{F}(\R )\cong\ell ^2(\R )$. Thus, the description in \Cref{eq:segal_bergman_gen} corresponds to any function which can be written as a convergent power series with infinite convergence radius $f(x)=\sum _{k=0}^{\infty}a_kx^k$ with coefficients $(a_k)_{k\geq 0}$ fulfilling the following conditions
	\begin{enumerate}
		\item $a_k=0$ for all $k\geq 0$ such that $h^{(k)}(0)=0$.
		\item The weighted square sum is finite, i.e.
			\begin{equation*}
				\sum _{k=0}^{\infty}\mathbbm{1}_{\lbrace h^{(k)}(0)>0\rbrace}\frac{k!}{h^{(k)}(0)}\vert a_k\vert ^2 <\infty. 
			\end{equation*}
	\end{enumerate}
	We may furthermore write down the inner product explicitly:
	\begin{equation}
		\langle f,g\rangle _{\mathcal H(k_h)}=\sum _{k=0}^{\infty}\mathbbm{1}_{\lbrace h^{(k)}(0)>0\rbrace}\frac{1}{h^{(k)}(0)}\frac{f^{(k)}(0)}{\sqrt{k!}}\frac{g^{(k)}(0)}{\sqrt{k!}}.
	\end{equation}
	The space $\mathcal{H}(k_h)$ is therefore a subspace of the space of real analytic functions fulfilling a weighted $\ell ^2$ summability condition with weights dictated by the power series of $h$.
	
	To obtain a description of the RKHS induced by $k=k_h\circ\varphi$, we may use the results of \cite[Theorem 5.7]{paulsen_rkhs} and compute the pullback along $\varphi$ of the space $\mathcal{H}(k_h)$. To do this, let $f_k(x):=\frac{f^{(k)}(0)}{k!}x^k$ for $k\in\mathbb{N}_0$ be the coefficient of the $k$-th term in the Taylor-expansion of a function $f\in\mathcal{H}(k_h)$. Then $(f_k\circ\varphi )(x)=\frac{f^{(k)}(0)}{k!}(ax-c)^k=\frac{f^{(k)}(0)a^k}{k!}(x-c/a)^k$. Therefore, the pullback of the Taylor series induces a shift about the point $c/a$ and a scaling of the weights by a factor $a^k$. Since the Taylor expansion is invariant under the choice of point around which the series is developed, the resulting space $\mathcal{H}(k)$ is given as asserted. Finally, we verify the reproducing property of the kernel with respect to the inner product. Let $f\in\mathcal{H}(k)$. Then $f(x)=\sum _{k=0}^{\infty}\mathbbm{1}_{\lbrace h^{(k)}(0)>0\rbrace}\frac{f^{(k)}(c/a)}{k!}(x-c/a )^k$. The kernel $k(\cdot ,y)$ has the form $k_h(x,y)=\sum _{k=0}^{\infty}\frac{v_k}{k!}(x-c/a )^k$, where $v_k=a^{2k}h^{(k)}(0)(y-c/a)^k$. We therefore obtain
	\begin{equation*}
		\begin{aligned}
			\langle f,k(\cdot, y)\rangle _{\mathcal{H}(k)}&=\sum _{k=0}^{\infty}\mathbbm{1}_{\lbrace h^{(k)}(0)>0\rbrace}\frac{1}{a^{2k}h^{(k)}(0)}\frac{f^{(k)}(\frac{c}{a})}{\sqrt{k!}}\frac{a^{2k}(y-\frac{c}{a})^k}{\sqrt{k!}}\\
								      &=\sum _{k=0}^{\infty}\mathbbm{1}_{\lbrace h^{(k)}(0)>0\rbrace}\frac{f^{(k)}(\frac{a}{c})}{k!}\left( x-\frac{a}{c}\right) ^k=f(y).
		\end{aligned}
	\end{equation*}
To complete the proof, we observe that the norm is precisely the one induced by the inner product.
\end{proof}
\begin{proof}[Proof of \Cref{prop:4_rkhs}]
	Let $h(t):=e^{-\alpha ^2/\beta}p(t)e^{t}$ and define $k$ as in the statement of the Proposition. Then $k(x,y)=h((\sqrt{\beta}x-\alpha /\sqrt{\beta} ) (\sqrt{\beta}y-\alpha /\sqrt{\beta} ))$ and therefore $k$ is of the form specified in \Cref{lem:rkhs_general} with weight sequence $w=(w_k)_{k\geq 0}$ given by $w_k=h^{(k)}(0)$. It is easy to show via iterated use of the product rule that
	\begin{equation}
		h^{(k)}(x)=e^{-\frac{\alpha ^2}{\beta}}(p(\cdot )e^{\cdot})^{(k)}(x)=e^{-\frac{\alpha ^2}{\beta}}\sum _{l=0}^{k\wedge\text{deg}(p)}\binom{k}{l}p^{(l)}(x)e^{x}.
	\end{equation}
	Evaluating at $x=0$ yields the asserted form of the weighs and the assertions on the form of the RKHS induced by $k$ thus follow by \Cref{lem:rkhs_general}. It therefore remains to be shown that $k$ is fully consistent in the sense of \Cref{def:fully consistent_kernel}. For this to hold, we must have that $k(\cdot,y)\in\mathcal{U}$ for all $y\in\R _+$. By the Jordan normal form, any element $g\in\mathcal{U}$ can be written as a sum of products of polynomials with exponentials. Therefore, observe that $k$ must be of the form
	\begin{equation}
		k(x,y)=(q_0(y)+q_1(y)x+...+q_d(y)x^d)e^{\lambda (y)x}.
	\end{equation}
	for some functions $q_0,\dots ,q_d,\lambda :\R _+\rightarrow\R$. Now,
	\begin{equation}
		\begin{aligned}
			&p\left(\left(\sqrt{\beta}x-\frac{\alpha}{\sqrt{\beta}}\right) \left(\sqrt{\beta}y-\frac{\alpha}{\sqrt{\beta}} \right)\right)\\
            &=\sum _{k=0}^da_k\left(\sqrt{\beta}x-\frac{\alpha}{\sqrt{\beta}} \right) ^k\left(\sqrt{\beta}y-\frac{\alpha}{\sqrt{\beta}} \right) ^k\\
																		    &=\sum _{k=0}^da_k\left(\sqrt{\beta}y-\frac{\alpha}{\sqrt{\beta}} \right) ^k\sum _{l=0}^k\binom{k}{l}\sqrt{\beta}^lx^l\left(\frac{\alpha}{\sqrt{\beta}}\right) ^{k-l}\\
																		    &=\sum _{l=0}^dx^l\underbrace{\sum _{k=l}^d\binom{k}{l}\sqrt{\beta}^l\left(\frac{\alpha}{\sqrt{\beta}}\right) ^{k-l}a_k\left(\sqrt{\beta}y-\frac{\alpha}{\sqrt{\beta}} \right) ^k}_{b_l(y)}\\
																		    &=\sum _{l=0}^db_l(y)x^l.
		\end{aligned}
	\end{equation}
	We may set $q_i(y):=e^{-\alpha y}b_i(y)$ and $\lambda (y):=\sqrt{\beta}y-\alpha$ which shows that $k$ is of the desired form and thus indeed fully consistent.
\end{proof}
\begin{proof}[Proof of \Cref{prop:4_fully consistent_ker}]
	Let $k_i$ be defined for $i=1,\dots ,d+1$ as in the Proposition. Then, by the results of \Cref{prop:4_rkhs}, $k_i$ is a fully consistent kernel for $i=1,\dots ,d+1$. Now since $k=k_1+...+k_{d+1}$, we see from \cite{paulsen_rkhs} and \cite{aronszajn} that $k$ is a kernel as it is a sum of kernels. Furthermore, since $\mathcal{U}$ is a vector space and $k_i(\cdot ,y)\in\mathcal{U}$ for $i=1,\dots ,d+1$, $k(\cdot ,y)\in\mathcal{U}$ and therefore $k$ is fully consistent.

	Finally, the rest of the Proposition follows from \cite[Theorem 5.4]{paulsen_rkhs}.
\end{proof}
\begin{proof}[Proof of \Cref{prop:4_existence_min}]
	Let $E^d(k_p\cdot k_{\exp})$ be defined as in the statement of the Theorem. By \Cref{lem:weak_optimisation}, it is sufficient to show that $E^d(k_p\cdot k_{\exp})$ is weakly closed. Without loss of generality, we may assume $\beta = 1,\alpha = 0$ and $d=1$.
	\begin{enumerate}
		\item Let $p(x,x)>0$ for all $x\in\R _+$. This implies $k(x,x)>0$ for all $x\in\R _+$. Let now $g_n:=\eta _nk(\cdot ,y_n)$ with $g_n\rightarrow f$ for some $f\in\mathcal H(k)$. Consider first the case $y_n\rightarrow y_{\infty}$ for some $y_{\infty}\in\R _+$. Then $k(\cdot, y_n)\rightarrow k(\cdot, y_{\infty})$. Since $f\leftarrow g_n=\eta _nk(\cdot ,y_n)$, $\vert\eta _n\vert <\infty$ for all $n$. Therefore, $\eta _n$ has a convergent subsequence. By the uniqueness of the limit, we have 
			\begin{equation}
				f=\lim _{n\rightarrow\infty}g_n=\lim _{n\rightarrow\infty}\eta _nk(\cdot ,y_n)=\eta _{\infty}k(\cdot ,y_{\infty})\in E^1(k).
			\end{equation}
			Assume now $\vert y_n\vert\rightarrow\infty$. Since $g_n\rightarrow f$, we have $\Vert g_n\Vert _{\mathcal H(k)} <C$ for some $C\in\R$. We therefore have
			\begin{equation}
				\Vert g_n\Vert _{\mathcal H(k)}=\Vert\eta _nk_{y_n}\Vert _{\mathcal H(k)}=\vert\eta _n\vert\Vert k_{y_n}\Vert _{\mathcal H(k)}=\vert\eta _n\vert \sqrt{k(y_n,y_n)}<C,
			\end{equation}
			and thus, $\vert\eta _n\vert <\frac{C}{\sqrt{k(y_n,y_n)}}=\frac{Ce^{-y_n^2/2}}{p(y_n,y_n)}$. By \Cref{lem:convergence}, it suffices to consider poitwise convergence. We obtain now, for $n$ large enough
			\begin{equation}
				f(x)\leq 2g_n(x)=2\eta _nk(x,y_n)\leq\frac{Cp(x,y_n)}{p(y_n,y_n)}e^{xy_n-y_n^2/2}\rightarrow 0
			\end{equation}
			and thus $f\equiv 0\in E^1(k)$.
		\item If $p(x,x)=0$ for some $x\in\R _+$, we may use \Cref{lem:q_factorisation2} and obtain polynomials $q:\R _+\rightarrow\R$ and $r:\R _+\times\R _+\rightarrow\R$ with $r(x,x)>0$ for all $x\in\R _+$ and $p(x,y)=q(x)q(y)r(x,y)$ as asserted. By the results of the first statement, we have that $E^1(k_r\cdot k_{\exp})$ is weakly closed. Using \Cref{prop:4_isometry} and \Cref{cor:isometry}, the map $M_q:\mathcal H(r)\cdot\mathcal H(\exp)\rightarrow \mathcal H(p)\cdot\mathcal H(\exp)$ defined as $M_q:f\mapsto qf$ is a bijective isometry and, since bijective isometries map closed sets to closed sets, we have $M_q(E^1(k_p\cdot k_{\exp}))=qE^1(k_p\cdot k_{\exp})\subset\mathcal H(p)\cdot\mathcal H(\exp)$ is weakly closed. This concludes the proof.
	\end{enumerate}
\end{proof}i
\newpage
\printbibliography
\end{document}